\pgfplotsset{compat=1.17}
\title{\ed{Feasible Conditional Belief Distributions}%Persuasion as Transportation
\footnote{We are grateful to Omer Tamuz for the discussions that have inspired this work.
The paper has benefited from our discussions with Emir Kamenica, Ce Liu, Alex Nesterov, Alex Smolin, and participants of ACM EC2022.}
}
\author{ \large Itai Arieli\thanks{Technion, Haifa (Israel). Itai Arieli has been supported by the Israel Science Foundation (grant \#2030524).}\ \ \ %
Yakov Babichenko\thanks{Technion, Haifa (Israel). Yakov Babichenko has been  supported by BSF grants \#2018397 and \#2021680.} \ \ \ %
Fedor Sandomirskiy\thanks{Princeton University, Princeton (USA). Fedor Sandomirskiy was  supported by the
Linde Institute at Caltech, PIMCO award, and the National Science Foundation (grant CNS 1518941). 	} 
}
\definecolor{ForestGreen}{rgb}{.13,.54,.13}
\definecolor{BrickRed}{rgb}{.5,.01,.01}
\definecolor{violet}{cmyk}{0.79,0.88,0,0}
\newcommand{\fed}[1]{{\color{ForestGreen}{(\textbf{Fedor:} #1)}}}
\newcommand{\yakov}[1]{{\color{red}{(\textbf{Yakov:} #1)}}}
\newcommand{\ed}[1]{{{\color{BrickRed} {#1}}}}
\newcommand{\fed}[1]{}
\newcommand{\yakov}[1]{}
\newcommand{\ed}[1]{#1}
\newtheorem{theorem}{Theorem}
\newtheorem{lemma}{Lemma}
\newtheorem{observation}{Observation}
\newtheorem{corollary}{Corollary}
\newtheorem{proposition}{Proposition}
\theoremstyle{definition}
\newtheorem{definition}{Definition}
\theoremstyle{remark}
\newtheorem{example}{Example}
\newcommand{\E}{\mathbb{E}}
\newcommand{\bP}{\mathbb{P}}
\newcommand{\R}{\mathbb{R}}
\newcommand{\val}{\mathrm{Val}}
\newcommand{\supp}{\mathrm{supp}\,}
\newcommand{\F}{\mathcal{F}}
\newcommand{\cav}{\mathrm{cav}}
\def\dd{\mathrm{d}}
\begin{document}

\ifdefined\EC
\begin{titlepage}
\maketitle
\end{titlepage}
\else
\maketitle

\begin{abstract}
\ed{Agents receive private signals about an unknown state. The resulting joint belief distributions are complex and lack a simple characterization.
Our key insight is that, when conditioned on the state, the structure of belief distributions simplifies: feasibility constrains only the marginal distributions of individual agents across states, with no joint constraints within a state.
We apply this insight to multi-receiver persuasion,  identifying new tractable cases and introducing optimal transportation and duality tools.
}

\vskip 0.3cm

\textbf{Keywords:} feasible belief distributions, private Bayesian persuasion, optimal transportation theory, duality, concavification, information design 
\end{abstract}

\section{Introduction}

\ed{Bayesian persuasion is one of the major successes of information economics, which has reshaped how we model strategic information transmission. In the classical model of \cite*{kamenica2011bayesian}, a sender observes a realization of a random state and aims to influence a receiver's beliefs by selectively disclosing information through noisy signals. A vast body of literature aims to understand which insights of the basic model extend to more general settings.} 

\ed{Our paper examines the case of multiple receivers. In practical scenarios such as recommendation systems in electronic marketplaces or political campaigns targeting different voter groups, a sender interacts with multiple receivers individually, sending them private signals.
However, extending the persuasion model to multiple receivers presents significant challenges. In the single-receiver case, tractability arises from the simple structure of the set of feasible belief distributions---that is, the distributions of beliefs about the state that can be induced by some noisy signal. Optimal persuasion reduces to choosing an optimal feasible distribution, and the simplicity of this set leads to explicit solutions that can be obtained via the classical concavification technique of \cite*{aumann1995repeated}. In contrast, with multiple receivers, the set of feasible joint belief distributions has a complex structure \citep*{dawid1995coherent, arieli2020feasible,morris2020notrade,ziegler2020adversarial,lang2022feasible, cichomski2025existence}, which greatly limits tractability and makes classical techniques inapplicable.}
\medskip

\ed{Our paper augments the above negative message with a positive one.
We provide a new perspective on private multi-receiver persuasion by focusing on the characterization of feasible joint belief distributions \emph{conditional on the state}. Our central result (Theorem~\ref{th_feasibility}) shows that,
when conditioned on the realized state, the set of feasible belief distributions admits a simple characterization:
\begin{quote}
\emph{A collection of conditional joint belief distributions---one per each realized state---is feasible if and only if the corresponding one-receiver marginals are feasible.}
\end{quote}
In other words, feasibility only restricts one-receiver marginals across states and does not restrict correlation within a state. As a result,
checking feasibility in a multi-receiver problem reduces to checking feasibility in auxiliary single-receiver problems, one per agent. This contrasts sharply with the complexity of the unconditional joint belief distributions, which the literature has focused on.}
\medskip

\ed{We apply this insight to a class of multi-receiver persuasion problems, where the sender's objective depends on the induced receivers' beliefs and the state, as in the single-receiver model of \cite*{kamenica2011bayesian}. This framework is termed \emph{first-order persuasion} because higher-order beliefs do not affect receivers' actions. It serves as a benchmark for determining which insights extend to multi-receiver problems. % and represents an initial step toward more general multi-receiver problems.
However, even this seemingly simple setting proves largely intractable. 

\medskip
To tackle this intractability, we represent persuasion as the selection of optimal conditional belief distributions, leveraging their structure to formulate two linear programs---a primal and a dual. Each program highlights different aspects of the challenges in multi-receiver persuasion and suggests classes of problems where these challenges can be overcome.
\smallskip

The primal problem %(Corollary~\ref{cor_persuasion_as_transport}) 
models persuasion as a two-stage optimization process. In the first stage, the sender selects a feasible belief distribution for each receiver, analogous to the single-receiver setting. In the second stage, the sender optimizes over all possible ways to correlate these marginal distributions. This optimization over correlations captures the essence and difficulty of multi-receiver persuasion. Finding the optimal correlation takes the form of a Monge–Kantorovich optimal transportation problem---a well-studied class of optimization problems for determining optimal joint distributions given marginals. A formal connection between persuasion and optimal transport was conjectured by \cite*{dworczak2019simple}.

The dual problem %(Proposition~\ref{prop_dual value representation}) 
represents the sender's optimal value for a given utility function as the envelope of all utility functions that are pointwise above and where revealing no information is optimal. 
This result provides a multi-receiver generalization of the price-function duality approach developed by \cite*{dworczak2019persuasion}, now a standard tool in single-receiver persuasion. Our duality can be interpreted as a multi-receiver extension of the concavification formula by \cite*{kamenica2011bayesian} and is closely related to the Kantorovich duality in optimal transportation literature.
\medskip

We demonstrate how the primal and dual perspectives can be utilized to construct explicit solutions for new classes of first-order persuasion problems.

Using the primal approach, we show how to construct explicit solutions for problems where the sender's objective is non-trivial in only one state, as well as for problems with supermodular objectives. 
We discuss examples where a manager aims to support the morale of a group by ensuring that some agents remain optimistic even in bad states, 
a regulator selectively discourages production in a Cournot oligopoly with unknown costs, and a producer selectively discloses information about product quality to market segments to maximize an objective that depends on the adoption level in each segment.

The dual formulation offers a general guess-and-verify approach to first-order persuasion. We illustrate its application to polarization objectives and profit maximization by an informed retailer. In the appendix, we present a general methodology for ``guessing'' the solution to the dual problem guided by complementary slackness conditions. While we focus on examples with relatively simple optimal information structures, a follow-up paper by \cite*{kravchenko2024} demonstrates that a version of our duality approach also allows for solving problems where the optimal information structures are highly non-trivial.

\medskip
Prior to this work, explicit solutions to first-order persuasion problems were available only for specific settings---such as quadratic objectives (e.g., belief-covariance minimization), threshold objectives, or binary-actions \citep*{arieli2019private,ziegler2020adversarial,burdzy2020bounds,burdzy2021can,arieli2021feasible,cichomski2021maximal,cichomski2022contradictory,cichomski2022doob,smolin2022information,cichomski2023combinatorial}. Moreover, each of these tractable cases required an approach tailored to its particular context.  In contrast, our paper provides more flexible methods that are applicable beyond quadratic and threshold objectives.

\paragraph{Structure of the paper.} The paper is organized as follows: Section \ref{sec:feasible} characterizes feasible conditional belief distributions. Section \ref{sec_persuasion_as_transport} applies this characterization to persuasion. In Section \ref{sec:applications}, we explore several applications of our general results to particular classes of persuasion problems.  Finally, Section \ref{sec:conclude} concludes with directions for future research. Proofs and additional technical discussions are provided in the appendices.
}

\paragraph{Related literature.}%\label{sec:rl}
Optimal ways to persuade multiple receivers via private signals are known only for particular objectives and/or strong restrictions on receivers' action sets. The main obstacle is the complex structure of the set of feasible belief distributions (those joint distributions of beliefs that the sender can induce) as indicated by \cite*{dawid1995coherent,  mathevet2020information,arieli2020feasible,arieli2021feasible,he2021private,lang2022feasible}. Related feasibility questions were studied by \cite*{gutmann1991existence,herings2020belief, ziegler2020adversarial, levy2022persuasion,
morris2020notrade,brooks2019information,arieli2022population}.  Mathematical literature refers to feasible distributions as coherent distributions and provides some tight bounds, which can be converted into solutions to particular first-order persuasion problems \citep*{burdzy2020bounds,burdzy2021can,cichomski2020maximal,cichomski2021maximal, cichomski2022contradictory,cichomski2022doob,cichomski2023combinatorial}. 
 First-order persuasion simplifies dramatically if receivers have only a few actions; e.g., see \cite*{arieli2019private} for binary actions and sub/supermodular objectives.
 In general, for a few actions, one can identify signals with action recommendations satisfying incentive-compatibility constraints 
 and obtain the optimal information structure as a solution to a linear program capturing Bayesian correlated equilibria as in \cite*{bergemann2016bayes,bergemann2019information,taneva2019information}. 
 Our results are not sensitive to the cardinality of action sets and are applicable in the case of a continuum of actions.

The concavification technique of \cite*{kamenica2011bayesian} extends to multiple receivers if the sender is constrained to using public signals. Indeed, such problems reduce to persuading a single ``aggregate'' receiver as noted by, e.g., \cite*{laclau2017public}. \cite*{mathevet2020information} demonstrated the relevance of this insight even without any constraint on signals. They showed that a general persuasion problem can be decomposed into its ``private'' and  ``public'' components, and the public one can be tackled via concavification. 
Our multi-receiver concavification approach is of a different nature and captures both private and public components.
  
 A connection to optimal transportation is known in a variety of economic settings, e.g., monopoly pricing and  multi-dimensional screening \citep*{daskalakis2017strong,figalli2011multidimensional}, auctions \citep*{kolesnikov2022beckmann}, matching and labor market sorting~\citep*{chiappori2010hedonic, boerma2021sorting}, optimal taxation \citep*{steinerberger2019tax}, econometrics~\citep*{galichon2021survey}, and many others surveyed by
 \citep*{ekeland2010notes,Carlier, galichon2016optimal}.
 This connection is fruitful as it always brings new tools --- such as the Kantorovich duality --- from the mathematical theory of transportation to the problem of interest. The modern mathematical theory is surveyed by \citep*{BoKo, McCannGuill} and comprehensively presented in books \citep*{santambrogio2015optimal, villani2009optimal}.  
 
{ 
In parallel to our work\footnote{An extended abstract of this paper appeared in proceedings ACM EC2022 \citep*{arieli2022persuasion}. } connecting multi-receiver persuasion and transportation, several recent papers describe another connection for single-receiver problems  \citep*{kolotilin2022persuasion,cieslak2021optimal,malamud2021persuasion,lin2022credible}. 
In these papers, transportation problems arise as the optimal way to correlate the state and a recommendation to a single receiver, a perspective especially useful for continuous state spaces. By contrast, in our approach, the transportation problem captures the optimal correlation across the beliefs of multiple receivers, and we focus on finite sets of states.

The duality that we find in the multi-receiver setting can be seen as an extension of the general single-receiver duality by \cite*{dworczak2019persuasion}; see Section~\ref{sec_persuasion_as_transport} for a detailed comparison. Earlier duality results of \cite*{kolotilin2018optimal}, \cite*{dworczak2019simple}, and \cite*{dizdar2020simple} addressed the case of the sender's objective depending on the induced posterior mean.
The action-recommendation approach of \cite*{bergemann2016bayes} also leads to a linear program, and its dual is studied by \cite*{galperti2018dual} and \cite*{galperti2023value} for finite sets of actions. 
\cite*{smolin2022information} show that this dual problem gains tractability for a continuum of actions under extra convexity assumptions.
}

\section{\ed{Feasible Belief Distributions}}\label{sec:feasible}

\ed{
A random state $\omega$ is drawn from a finite set of states $\Omega$ according to a \emph{prior distribution} $p\in \Delta(\Omega)$ with full support. There are $n$ agents $N = \{1,2,\ldots,n\}$ who receive private signals about $\omega$ according to an information structure; we will often refer to them as receivers.

An \emph{information structure} $I=\big((S_i)_{i\in N},\pi(\,\cdot\mid \omega)\big)$ is composed of sets of signals $S_i$ 
for each receiver $i\in N$ and a joint distribution of signals $\pi(\cdot\mid \omega) \in \Delta(S_1\times \cdots\times S_n)$ conditional on each possible realization of the state $\omega$. The sets of signals can be arbitrary measurable spaces, i.e., sets equipped with sigma fields. 

Combined with the prior $p\in \Delta(\Omega)$, an information structure $I$ induces the joint distribution  $\bP=\bP_{I}$ of the state and signals $(\omega,\,s_1,\ldots,s_n)$. Each receiver $i$ is aware of the prior $p$ and the information structure $I$. % chosen by the sender.
Hence, having received her signal $s_i$, the receiver $i$ can compute her posterior belief $x_i\in\Delta(\Omega)$ about the state, i.e., $x_{i}(\omega)=\bP_{I}(\omega\mid s_i)$. The posterior belief is defined for almost all realizations of signals. 
For finite sets of signals, it can be computed  by the Bayes formula:
\begin{equation*}\label{eq_Bayes}
 x_{i}(\omega)=p(\omega)\cdot \frac{\pi(s_i\mid \omega)}{\sum_{\omega'\in \Omega} p(\omega')\cdot \pi(s_i\mid \omega')}.
 \end{equation*}
 Since the belief $x_i$ depends on a random signal $s_i$, it is a random variable itself with values in $\Delta(\Omega)$.
 Let $\mu_I$ be the joint distribution of $(x_1,\ldots, x_n)$ and $\mu^\omega_I$ be the joint distribution conditional on the state $\omega$, i.e.,
$$\mu^\omega_I(A)=\bP_I\Big((x_1,\ldots, x_n)\in A\mid \omega\Big)\quad \text{for any measurable $A\subset \Delta(\Omega)^N$},$$
and $\mu_I=\sum_\omega p(\omega)\cdot \mu^\omega_I$.
We focus on the question of what distributions can be obtained this way, i.e., induced by some $I$. We refer to such distributions as \emph{feasible}. Feasibility captures all belief distributions that are compatible with Bayesian updating under common prior.
\begin{definition}\label{def_feasible_conditional}
Given prior $p$, distributions  $(\mu^\omega)_{\omega\in \Omega}$ 
%on $\Delta(\Omega)^N$ 
are \emph{feasible conditional distributions of beliefs} if $\mu^\omega=\mu^\omega_I$ for all $\omega \in \Omega$ and some information structure $I$. Similarly, $\mu$ is a \emph{feasible unconditional distribution of beliefs} if $\mu=\mu_I$ for some $I$.
\end{definition}

For the case of $n=1$ receiver,   conditional and unconditional distributions admit equally simple characterizations. As we will see, the simplicity of conditional distributions persists for $n\geq 2$ receivers, while the set of unconditional distributions becomes effectively intractable.

We first discuss the case of a single receiver. 
By the classical splitting lemma \citep*{aumann1995repeated,blackwell1951comparison,kamenica2011bayesian}, an unconditional distribution $\lambda\in \Delta(\Omega)$ of a single agent
is feasible if and only if it satisfies the martingale property: the average belief is equal to the prior $p$, i.e., $\int_{\Delta(\Omega)} x(\omega)\dd\lambda(x)=p(\omega)$ for all $\omega\in\Omega$. The set of all such distributions is denoted by $\Delta_p(\Delta(\Omega))$.

In the single-agent case, the conditional belief distributions are uniquely determined by unconditional ones \citep*[e.g.,][]{alonso2016bayesian,doval2024persuasion}. 
Let $\lambda_I\in \Delta(\Omega)$ be the unconditional belief induced by $I$. Then the conditional distribution of beliefs given $\omega$ is obtained by weighing $\lambda_I$ with the likelihood ratio of this state\footnote{This property follows from the fact that $P_I(\omega\mid x)=x(\omega)$---i.e., belief $x$ is the best prediction of the actual distribution given the available information---and the Bayes formula $P_I(x\mid \omega)=\frac{P_I(\omega\mid x)}{p(\omega)}\cdot P_I(x)$.}
\begin{equation}\label{eq_conditional_from_uncondiitonal}
\lambda_I^\omega(x)=\frac{x(\omega)}{p(\omega)}\cdot \lambda_I(x).
\end{equation}
Here and below, we write $\nu(x)=f(x)\cdot \tau(x)$ to indicate that a distribution $\nu$ is obtained from $\tau$ by weighing with a weight given by a function $f$. This identity can be understood literally for distributions with density or finite support; more generally, it means that the Radon-Nikodym derivative $\frac{\dd \nu}{\dd \tau}(x)=f(x)$.

Identity~\eqref{eq_conditional_from_uncondiitonal}, combined with the splitting lemma, results in a characterization of conditional feasibility for a single receiver.
\begin{observation}[conditional feasibility for a single receiver]\label{obs_one_receiver_feasibility}
Distributions
$(\lambda^\omega)_{\omega\in \Omega}$ over $\Delta(\Omega)$ are feasible conditional distributions for a single receiver if and only if 
\begin{equation}\label{eq_admissible}
\lambda^\omega(x)=\frac{x(\omega)}{p(\omega)}\cdot \lambda(x)\qquad \text{for some}\quad  \lambda\in \Delta_p(\Delta(\Omega)). 
\end{equation}
\end{observation}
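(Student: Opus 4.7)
The plan is to obtain Observation~\ref{obs_one_receiver_feasibility} as an essentially immediate consequence of the two ingredients already laid out in the text: first, the splitting lemma, which characterizes feasibility of a single-receiver unconditional distribution by the martingale property; second, identity~\eqref{eq_conditional_from_uncondiitonal}, which recovers each conditional distribution from the unconditional one via likelihood-ratio reweighting. The content of the observation is then really a bookkeeping statement, asserting that~\eqref{eq_admissible} is not only necessary (by Bayes) but also sufficient (by splitting).

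For the ``only if'' direction, I would assume that $(\lambda^\omega)_{\omega\in\Omega}$ is induced by some information structure $I$ and set $\lambda=\sum_{\omega\in\Omega}p(\omega)\lambda^\omega$, which by definition coincides with the unconditional distribution $\lambda_I$. The splitting lemma then places $\lambda$ in $\Delta_p(\Delta(\Omega))$, and applying identity~\eqref{eq_conditional_from_uncondiitonal} to $I$ yields exactly the representation~\eqref{eq_admissible}.

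For the ``if'' direction, I would start from any $\lambda\in\Delta_p(\Delta(\Omega))$ and define $\lambda^\omega$ by~\eqref{eq_admissible}. The splitting lemma supplies an information structure $I$ with $\lambda_I=\lambda$, and applying~\eqref{eq_conditional_from_uncondiitonal} to this $I$ shows that the conditional distributions it induces coincide with the prescribed $\lambda^\omega$. A brief verification that I would include for completeness is that~\eqref{eq_admissible} defines a bona fide probability measure: nonnegativity is immediate since $x(\omega),p(\omega)\geq 0$, while the total mass is $\frac{1}{p(\omega)}\int x(\omega)\,\dd\lambda(x)=1$, precisely because $\lambda$ is a martingale measure with barycenter $p$.

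I do not anticipate a substantive obstacle: once the two ingredients are in place, the argument is a clean bidirectional invocation of them. The only point that requires a little care is applying~\eqref{eq_conditional_from_uncondiitonal} in the correct direction in each half---deriving the formula from $I$ in the necessity argument, and producing $I$ whose conditionals match the formula in the sufficiency argument---so that the reduction to the splitting lemma does not smuggle in any additional assumption.
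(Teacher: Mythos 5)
Your proof is correct and follows exactly the route the paper takes: it combines the splitting lemma with the likelihood-ratio identity~\eqref{eq_conditional_from_uncondiitonal}, which is precisely what the paper points to when it states the observation. The extra check that~\eqref{eq_admissible} defines a probability measure is a sensible addition, though the paper treats it as implicit.
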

This observation provides a convenient parametrization of feasible conditional distributions for one receiver by $\lambda\in \Delta_p(\Delta(\Omega))$. 
We note that knowing $\lambda^\omega$ for one $\omega$ pins down $\lambda$ and thus $\lambda^{\omega'}$ for all other states by~\eqref{eq_admissible}, resulting in an alternative criterion for feasibility: 
$\frac{x(\omega')}{p(\omega')}\cdot \lambda^\omega(x) =\frac{x(\omega)}{p(\omega)}\cdot \lambda^{\omega'}(x)$ for all $\omega,\omega'\in \Omega$.
\medskip

For $n\geq 2$ receivers, unconditional feasible distributions do not admit a simple characterization even for two receivers and a binary state \citep*{dawid1995coherent}. In addition to the martingale property, there is a new constraint coming from the impossibility of Bayesian-rational agents agreeing to disagree \citep*{aumann1976agreeing}. As a result $P_I\big(x_i(\omega)=1\  \text{and} \  x_j(\omega')=1\big)$ must be zero for $i\ne j$ and $\omega\ne \omega'$. There is, in fact, a continuum of hard-to-work-with constraints, indicating that even a partial disagreement cannot happen too often \citep*{arieli2020feasible}. 

Our main result shows that conditioning on the realized state disentangles the feasibility of individual belief distributions and the way these individual belief distributions are correlated. Consequently, the question of feasibility for conditional distributions retains simplicity: all the feasibility constraints originate from single-receiver problems. 

\begin{theorem}[conditional feasibility for $n$ receivers]\label{th_feasibility}
Distributions
$(\mu^\omega)_{\omega\in \Omega}$ are feasible conditional distributions if and only if the one-receiver marginals $(\mu_i^\omega)_{\omega\in \Omega}$ are feasible in a one-receiver problem for each receiver $i$.
\end{theorem}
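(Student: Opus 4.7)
The plan is to handle the two directions separately. The forward (``only if'') direction is essentially immediate: if $(\mu^\omega)_{\omega\in \Omega}$ are induced by an information structure $I=\big((S_i)_{i\in N},\pi(\cdot\mid\omega)\big)$, then the single-receiver information structure $I_i=(S_i,\pi_i(\cdot\mid\omega))$ obtained by keeping only the $i$-th marginal of $\pi(\cdot\mid\omega)$ induces the same marginal belief $x_i$ for receiver $i$, and hence $(\mu_i^\omega)_{\omega\in\Omega}$ are feasible in a single-receiver problem.

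For the reverse (``if'') direction, the plan is to exhibit an explicit information structure that realizes the prescribed $(\mu^\omega)_{\omega\in\Omega}$. The natural candidate is to let the signals be the beliefs themselves: take $S_i=\Delta(\Omega)$ for each $i$ and declare the joint conditional signal distribution to be $\pi(\cdot\mid\omega)=\mu^\omega$. It then suffices to verify that, under the induced joint distribution $\bP_I$, receiver $i$'s posterior upon observing $s_i=x$ equals $x$ itself; once this is established, $(x_1,\ldots,x_n)$ coincides almost surely with $(s_1,\ldots,s_n)$ and so has conditional law $\mu^\omega$, as desired.

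The verification reduces to a Bayes-formula computation on receiver $i$'s marginal. By Observation~\ref{obs_one_receiver_feasibility}, feasibility of $(\mu_i^\omega)_{\omega\in\Omega}$ is equivalent to the existence of $\mu_i\in \Delta_p(\Delta(\Omega))$ with $\mu_i^\omega(x)=\frac{x(\omega)}{p(\omega)}\cdot\mu_i(x)$, i.e., the Radon-Nikodym derivative $\frac{\dd \mu_i^\omega}{\dd \mu_i}(x)=\frac{x(\omega)}{p(\omega)}$. The posterior of $\omega$ given $s_i=x$ is proportional to $p(\omega)\cdot \frac{\dd \mu_i^\omega}{\dd \mu_i}(x)=x(\omega)$, which already sums to one over $\omega$, so the posterior equals $x$ exactly. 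Note also that the unconditional marginal $\sum_\omega p(\omega)\mu_i^\omega$ coincides with $\mu_i$, because $\sum_\omega x(\omega)=1$ for any belief $x$, so the splitting data is internally consistent.

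The only substantive obstacle is measure-theoretic bookkeeping: one must check that the $\mu_i$ furnished by Observation~\ref{obs_one_receiver_feasibility} is indeed the marginal of $\mu$ (needed so that the Bayes computation is well-posed), and that the full-support assumption on $p$ makes the Radon-Nikodym derivative well-defined on the support of $\mu_i$. These are routine once the candidate structure is in place, and I do not expect any additional difficulty beyond this. Everything else is a direct application of Observation~\ref{obs_one_receiver_feasibility} and the Bayes formula.
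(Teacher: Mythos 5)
Your proposal is correct and follows essentially the same approach as the paper: both prove the reverse direction by taking signals equal to beliefs ($S_i=\Delta(\Omega)$), setting the joint conditional signal distribution to $\pi(\cdot\mid\omega)=\mu^\omega$, and verifying that each receiver's posterior upon seeing $s_i=x$ is $x$ itself. The only cosmetic difference is that the paper routes the verification through the revelation principle (comparing the law of $(t_i,\omega)$ to that of an auxiliary one-receiver structure $(s_i,\omega)$), while you verify it directly via a Bayes computation with Observation~\ref{obs_one_receiver_feasibility}; this makes the fixed-point property of the construction slightly more explicit but is not a genuinely different argument.
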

We refer to this result as a theorem to highlight its importance despite its elementary proof.
\begin{proof}
One direction is immediate. If conditional distributions $(\mu^\omega)_{\omega\in \Omega}$ are feasible in an $n$-receiver problem, then each receiver's marginal distributions $(\mu_i^\omega)_{\omega\in \Omega}$ are necessarily feasible in a single-receiver problem.

For the opposite direction, we show that, conditional on the state, an information structure can correlate individual beliefs arbitrarily by correlating the corresponding signals.
Suppose that $(\mu^\omega)_{\omega\in \Omega}$ is such that $(\mu_i^\omega)_{\omega\in \Omega}$ are feasible in single-receiver problems for each~$i$. This means that, for each~receiver, there is an information structure $(S_i, \pi_i)$ inducing $(\mu_i^\omega)_{\omega\in \Omega}$. By the revelation principle, we can assume that signals are equal to induced beliefs, i.e., $S_i=\Delta(\Omega)$ and $x_i=s_i$. Now consider an $n$-agent information structure $I$ with a set of signals $T_i=\Delta(\Omega)$ for receiver $i$ and the joint distribution of signals $\pi(\,\cdot \mid \omega)=\mu^\omega$. Let $t_1,\ldots, t_n$ be the realized signals and $y_1,\ldots, y_n$ be the induced beliefs. By the construction, the  distributions of $(s_i,\omega)$ and $(t_i,\omega)$ are the same. Consequently, the belief induced by the signal $t_i$ equals the signal, i.e., $y_i=t_i$. Since the conditional joint distribution of signals is $\mu^\omega$ and signals coincide with beliefs,
we conclude that the conditional belief distribution $\mu^\omega_I=\mu^\omega$ and thus $(\mu^\omega)_{\omega\in \Omega}$ are feasible.
\end{proof}

To illustrate why the agreeing-to-disagree constraint disappears for conditional distributions---thus reconciling their simplicity with the complexity of unconditional distributions---we present the following example in the binary-state case.
\begin{example}[conditional vs.~unconditional feasibility]\label{ex_conditional_vs_unconditional}
Consider a binary state $\omega\in\Omega=\{\ell,h\}$ and 
represent each belief $x\in \Delta(\Omega)$ by the weight assigned to state~$\ell$, i.e., $x(\ell)\in[0,1]$. Assume that the prior is \(\frac{1}{2}\), and the two agents receive symmetric binary signals that match the state with probability \(r > \frac{1}{2}\). Thus, the possible pairs $(x_1, x_2)$ of induced beliefs are \((r, r),\ (r,\, 1 - r), \ (1 - r,\, r), \text{ and } (1 - r,\, 1 - r)\). As the correlation between signals is not fixed---e.g., signals can be identical or conditionally independent---there is a range of joint belief distributions they can induce. 

For simplicity, we focus on symmetric distributions, where the weights of the off-diagonal beliefs $(r,\,1-r)$ and $(1-r,\, r)$ are equal to $\alpha\geq 0$. Therefore, unconditional belief distributions $\mu\in \Delta([0,1]^2)$ satisfying the martingale constraint are of the form
$$\mu= {(1-\alpha)}\left(\delta_{(r, r)}+\delta_{(1-r, 1-r)}\right) + {\alpha} \left(\delta_{(r, 1-r)}+\delta_{(1-r, r)}\right),$$
where~$\delta_z$ denotes a point mass at a point~$z$. Such $\mu$ is feasible if the weight on each of the ``disagreement outcomes'' $\alpha\leq 1-r$ \citep*[][Proposition~1]{arieli2020feasible}. In particular, the correlation between posteriors becomes almost perfect as signals precision $r$ approaches~$1$. This phenomenon is a repercussion of the general constraints on disagreement for unconditional feasible distributions. 

\begin{figure}[h!]
    \begin{center}
        \scalebox{0.6}{
            \begin{tikzpicture}[scale=0.5, line width=1.5pt]
                % First figure: mu^\ell
                \draw (0,0)--(10,0)--(10,10)--(0,10)--(0,0);
                % Dashed lines at x=1, x=9, y=1, y=9
                \draw[thin, densely dashed] (2,0)--(2,10);
                \draw[thin, densely dashed] (8,0)--(8,10);
                \draw[thin, densely dashed] (0,2)--(10,2);
                \draw[thin, densely dashed] (0,8)--(10,8);
                % Axes labels
                \node[below right] at (10,0) {\Large $x_1(\ell)$};
                \node[above left] at (0,10) {\Large $x_2(\ell)$};
                % Axis ticks
               % \node[below] at (0,0) {\huge $0$};
                \node[below] at (2,0) {\Large $1-r$};
                \node[below] at (8,0) {\Large $r$};
             %   \node[below] at (10,0) {\huge $1$};
            %    \node[left] at (0,0) {\huge $0$};
                \node[left] at (0,2) {\Large $1-r$};
                \node[left] at (0,8) {\Large $r$};
            %    \node[left] at (0,10) {\huge $1$};
                % Points and weights for mu^\ell
                \filldraw[red] (8,8) circle (0.9); % Large weight
                \filldraw[red] (8,2) circle (0.4); % Medium weight
                \filldraw[red] (2,8) circle (0.4); % Medium weight
              %  \filldraw[red] (2,2) circle (0.2); % Small weight
                % Label
                \node at (5,12.5) {\huge $\mu^\ell:$};
            \end{tikzpicture}
            \hspace{1cm}
            \begin{tikzpicture}[scale=0.5, line width=1.5pt]
                % First figure: mu^\ell
                \draw (0,0)--(10,0)--(10,10)--(0,10)--(0,0);
                % Dashed lines at x=1, x=9, y=1, y=9
                \draw[thin, densely dashed] (2,0)--(2,10);
                \draw[thin, densely dashed] (8,0)--(8,10);
                \draw[thin, densely dashed] (0,2)--(10,2);
                \draw[thin, densely dashed] (0,8)--(10,8);
                % Axes labels
                \node[below right] at (10,0) {\Large $x_1(\ell)$};
                \node[above left] at (0,10) {\Large $x_2(\ell)$};
                % Axis ticks
               % \node[below] at (0,0) {\huge $0$};
                \node[below] at (2,0) {\Large $1-r$};
                \node[below] at (8,0) {\Large $r$};
             %   \node[below] at (10,0) {\huge $1$};
            %    \node[left] at (0,0) {\huge $0$};
                \node[left] at (0,2) {\Large $1-r$};
                \node[left] at (0,8) {\Large $r$};
            %    \node[left] at (0,10) {\huge $1$};
                % Points and weights for mu^\ell
            %    \filldraw[blue] (8,8) circle (0.2); % Large weight
                \filldraw[blue] (8,2) circle (0.4); % Medium weight
                \filldraw[blue] (2,8) circle (0.4); % Medium weight
                \filldraw[blue] (2,2) circle (0.9); % Small weight
                % Label
                \node at (5,12.5) {\huge $\mu^h:$};
            \end{tikzpicture}
            \hspace{1cm}
            \begin{tikzpicture}[scale=0.5, line width=1.5pt]
                % First figure: mu^\ell
                \draw (0,0)--(10,0)--(10,10)--(0,10)--(0,0);
                % Dashed lines at x=1, x=9, y=1, y=9
                \draw[thin, densely dashed] (2,0)--(2,10);
                \draw[thin, densely dashed] (8,0)--(8,10);
                \draw[thin, densely dashed] (0,2)--(10,2);
                \draw[thin, densely dashed] (0,8)--(10,8);
                % Axes labels
                \node[below right] at (10,0) {\Large $x_1(\ell)$};
                \node[above left] at (0,10) {\Large $x_2(\ell)$};
                % Axis ticks
               % \node[below] at (0,0) {\huge $0$};
                \node[below] at (2,0) {\Large $1-r$};
                \node[below] at (8,0) {\Large $r$};
             %   \node[below] at (10,0) {\huge $1$};
            %    \node[left] at (0,0) {\huge $0$};
                \node[left] at (0,2) {\Large $1-r$};
                \node[left] at (0,8) {\Large $r$};
            %    \node[left] at (0,10) {\huge $1$};
                % Points and weights for mu^\ell
  %              \draw[very thin,fill=blue, shift={(8,8)}] (0,0) -- (30:0.7) arc(30:60:0.7) -- cycle;
   %             \draw[very thin,fill=red, shift={(8,8)}] (0,0) -- (60:0.7) arc(60:390:0.7) -- cycle; % Large weight
               \draw[very thin,fill=red, shift={(8,2)}] (0,0) -- (-45:0.4) arc(-45:135:0.4) -- cycle;
                \draw[very thin,fill=blue, shift={(8,2)}] (0,0) -- (135:0.4) arc(135:315:0.4) -- cycle;
                \draw[very thin,fill=red, shift={(2,8)}] (0,0) -- (-45:0.4) arc(-45:135:0.4) -- cycle;
                \draw[very thin,fill=blue, shift={(2,8)}] (0,0) -- (135:0.4) arc(135:315:0.4) -- cycle;
                %
        %        \draw[very thin,fill=red, shift={(2,2)}] (0,0) -- (30:0.7) arc(30:60:0.7) -- cycle;
         %       \draw[very thin,fill=blue, shift={(2,2)}] (0,0) -- (60:0.7) arc(60:390:0.7) -- cycle;
                \filldraw[blue] (2,2) circle (0.7); % Small weight
                \filldraw[red] (8,8) circle (0.7); % Small weight         
                % Label
               \node at (5,12.5) {\huge $\mu=\frac{1}{2}\mu^\ell+ \frac{1}{2}\mu^h:$};
            \end{tikzpicture}
        }
    \end{center}
    \caption{\ed{Conditional and unconditional belief distributions for Example~\ref{ex_conditional_vs_unconditional} placing as much weight on disagreement outcomes as permitted by feasibility. For accuracy $r$ close to $1$, the marginals force $\mu^\ell$ to put almost all weight on $(r,r)$ and $\mu^h$, on $(1-r,1-r)$. As a result, beliefs become almost perfectly correlated under $\mu$.      
    Red/blue colors correspond to $\omega=\ell$ and $\omega=h$.}
    \label{fig_cond_vs_uncond}}
\end{figure}

We now consider the conditional distributions $\mu^\ell$ and $\mu^h$ on $[0,1]^2$. By Theorem~\ref{th_feasibility}, they are feasible if
and only if the marginal probabilities of posteriors $1-r$ and $r$ are 
$\mu_i^\ell(\{r\})=r$, $\mu_i^\ell(\{1-r\})=1-r$ and $\mu_i^h(\{r\})=1-r$, $\mu_i^h(\{1-r\})=r$.  Notably, no constraints on the joint distribution within each state---such as constraints on the correlation of beliefs---are needed.

The constraint $\alpha\leq 1-r$ for unconditional feasibility follows immediately from the constraints on marginals of $\mu^\ell$ and $\mu^h$. Indeed, the weights put by $\mu^\ell$ on each of the disagreement outcomes $(1-r,\, r)$ and $(r,\, 1-r)$ cannot exceed the one-agent marginal probability of the belief $1-r$, and thus does not exceed $1-r$. 
Similarly, for $\mu^h$, the weights on disagreement outcomes cannot exceed $1-r$. As a result, the total weight $\alpha$ placed by the unconditional distribution $\mu=\frac{1}{2}\mu^\ell+\frac{1}{2}\mu^h$ on disagreement outcomes cannot exceed $1-r$ as well. See Figure~\ref{fig_cond_vs_uncond} illustrating the construction of $\mu$ placing as much weight on disagreement outcomes as permitted by feasibility.

For accuracy $r$ close to $1$, the marginals of $\mu^\ell$ put almost all weight on $x_i=r$, and thus $\mu^\ell$ is concentrated on $(r,r)$, placing little weight on other combinations of posteriors.
Similarly, $\mu^h$ is concentrated on $(1-r,1-r)$. As a result, the unconditional distribution $\mu=\frac{1}{2}\mu^\ell+\frac{1}{2}\mu^h$ places most of the weight on the diagonal, i.e., the constraint on the correlation in the unconditional distribution originates from averaging distributions that tend to concentrate at a
single diagonal point due to the concentration of marginals.

We conclude that the joint constraint on receivers' unconditional belief distribution originates from much simpler individual constraints on single-receiver marginals conditional on the state.

\end{example}

}

\section{Implications for Persuasion}\label{sec_persuasion_as_transport}

A \emph{first-order Bayesian persuasion problem} is specified by the collection
$$B=\Big(\Omega,\ p\in\Delta(\Omega),\ N,\  v\,:\, \Omega\times \big(\Delta(\Omega)\big)^N\to\R\Big),$$
where $\Omega$ is the set of states, $p$ is the prior distribution,
$N$ is the set of receivers, and $v$ is the sender's utility function, which depends on the state and the receivers' beliefs. \ed{We assume that~$v^\omega$ is upper semicontinuous in beliefs $(x_1,\ldots,x_n)$ for each state $\omega$.}
The sender observes the realized state $\omega$ and can selectively reveal information about $\omega$ to the receivers, who do not observe the realization of $\omega$ but are aware of the prior.
The sender's goal is to maximize the expected utility
$$ \mbox{maximize}\quad \E_I[v^\omega(x_1,\ldots,x_n)]$$ over all information structures $I$. 
\smallskip

\ed{
In first-order persuasion, the sender's objective does not depend on the receivers' higher-order beliefs.
Such utility functions arise as indirect utilities if each receiver~$i$ has an action set $A_i$ and the receiver's belief $x_i$ is a sufficient statistic for her action $a_i=a_i(x_i)$. For example, this is the case if there are no strategic externalities across receivers, i.e., each receiver's utility depends solely on their own action and the state.  We will discuss examples where the sender is a manager aiming to sustain workers' morale by ensuring some remain optimistic even in a bad state  %(Section~\ref{section_morale}) 
or a producer selectively disclosing information about product quality to different market segments to maximize an objective that depends on the adoption level in each submarket. %(Section~\ref{section_market}). 

First-order persuasion also arises in settings with externalities when considering the bounded rationality of receivers. Receivers may be agnostic about opponents' beliefs, as in \cite*{ziegler2020adversarial}, or they may update on their own signals without anticipating that opponents' actions also reflect updated information, as in the cursed equilibrium of \cite*{eyster2005cursed}. We will explore such examples in the contexts of selective production discouragement in a Cournot oligopoly with unknown costs
%(Section~\ref{example_cournot}) 
and informed retailer profit maximization. %(Section~\ref{example_middleman}).
%\end{remark}
}

\ed{
\medskip
The optimal value of the sender's objective is called the value of the persuasion problem~$B$: 
\begin{equation}\label{eq_persuasion_value}
 \val[B]=\max_{I} \,\E_{I}\Big[v^\omega(x_1,x_2\ldots,x_n)\Big].
 \end{equation}
We write $\max$ instead of $\sup$ as the existence of an optimal information structure $I$ is guaranteed thanks to the upper semicontinuity of $v$; see Appendix~\ref{app_existence_primal}. 

We derive primal and dual representations for the sender's optimal value as corollaries of Theorem~\ref{th_feasibility}. In Section~\ref{sec:applications}, we will show how these representations can be used to find closed-form solutions in various examples.

Maximization over information structures
 is equivalent to maximizing over joint distributions of the state $\omega$ and posterior beliefs $x_1,\ldots,x_n$ that can be induced by some information structure~$I$, i.e., over feasible conditional belief distributions. 
 We conclude that the value of the persuasion problem admits the following representation
\begin{equation}\label{eq_value_as_sup_over_feasible}
\val[B]=\max_{{\footnotesize \begin{array}{c} \mbox{{feasible}}\\ 
(\mu^\omega)_{\omega\in \Omega}
\end{array}}} \left( \sum_{\omega\in \Omega} \ p(\omega)\cdot \int_{\Delta(\Omega)\times\ldots \times\Delta(\Omega)} v^\omega\,\dd\mu^\omega\right).
\end{equation}
Combining this representation with the characterization of feasible conditional distributions established in Theorem~\ref{th_feasibility}, we obtain the following corollary.
\begin{corollary}[primal value representation]\label{cor_persuasion_as_transport}
The value of a persuasion problem can be expressed as:
\begin{equation}\label{eq_persuasion_transport}
\val[B]=\max_{{\footnotesize \begin{array}{c} \mbox{\emph{1-agent feasible}}\\ 
(\lambda_{i}^\omega)_{\omega\in \Omega}, \ i\in N
\end{array}}} \left(\sum_{\omega\in \Omega} \ \ p(\omega)\cdot \max_{{\footnotesize \begin{array}{c} 
\mbox{\emph{all distributions}}\\ 
\mbox{\emph{$\pi$ {with marginals}}}\\ 
\pi_i=\lambda_{i}^\omega, \ i\in N
\end{array}}} \int_{\Delta(\Omega)\times\ldots \times\Delta(\Omega)} v^\omega\,\dd\pi\right).
\end{equation}
\end{corollary}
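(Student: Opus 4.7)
The plan is to derive the corollary by combining the representation \eqref{eq_value_as_sup_over_feasible} of $\val[B]$ as a maximum over feasible conditional distributions with the characterization of feasibility given in Theorem~\ref{th_feasibility}. The key structural observation is that in \eqref{eq_value_as_sup_over_feasible} the objective is a sum $\sum_\omega p(\omega) \int v^\omega\,\dd\mu^\omega$ in which each summand depends on a \emph{different} $\mu^\omega$, and by Theorem~\ref{th_feasibility} the only link between the $\mu^\omega$'s across states is that their one-receiver marginals $(\mu_i^\omega)_{\omega\in\Omega}$ must be single-agent feasible for every $i$. This lack of coupling across states is precisely what will allow the inner maximum to be pushed inside the sum.

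Concretely, I would proceed as follows. First, rewrite the feasibility constraint in \eqref{eq_value_as_sup_over_feasible} using Theorem~\ref{th_feasibility}: an $n$-tuple $(\mu^\omega)_{\omega\in\Omega}$ is feasible if and only if there exist one-agent feasible families $(\lambda_i^\omega)_{\omega\in\Omega}$ for each $i\in N$ such that, for every $\omega$, the marginals of $\mu^\omega$ on the $i$-th coordinate equal $\lambda_i^\omega$. Second, reparametrize the optimization: choose first the one-agent feasible families $(\lambda_i^\omega)_{\omega\in\Omega,\,i\in N}$, and then choose, for each $\omega$, a joint distribution $\pi^\omega\in\Delta\bigl(\Delta(\Omega)^N\bigr)$ whose $i$-th marginal is $\lambda_i^\omega$. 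Because the map from such pairs to feasible $(\mu^\omega)$ given by $\mu^\omega=\pi^\omega$ is a bijection, the maximum in \eqref{eq_value_as_sup_over_feasible} equals
\begin{equation*}
\max_{(\lambda_i^\omega)} \ \max_{(\pi^\omega):\,\pi_i^\omega=\lambda_i^\omega} \ \sum_{\omega\in\Omega} p(\omega)\int v^\omega\,\dd\pi^\omega.
\end{equation*}

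Third, exchange the inner maximum with the sum over $\omega$. Since the constraints on $\pi^\omega$ involve only the marginals $(\lambda_i^\omega)_{i\in N}$ for the same $\omega$---and not any other $\pi^{\omega'}$---the inner maximization separates across states, yielding for each fixed $\omega$ a Monge--Kantorovich problem
$$\max_{\pi:\,\pi_i=\lambda_i^\omega,\, i\in N}\int v^\omega\,\dd\pi,$$
which is exactly the inner bracket in \eqref{eq_persuasion_transport}. Summing with weights $p(\omega)$ and taking the outer max over one-agent feasible marginals gives the claimed identity.

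The only technical point is justifying that all the ``$\sup$''s can be written as ``$\max$''. For the inner problem, this is standard in optimal transportation: under upper semicontinuity of $v^\omega$ and tightness of the set of couplings with prescribed marginals (beliefs lie in the compact simplex $\Delta(\Omega)$), the Monge--Kantorovich supremum is attained. For the outer problem, existence of an optimizer follows from the existence of an optimal information structure established in Appendix~\ref{app_existence_primal}, whose one-agent marginals then solve the outer maximization. This existence step is the only nontrivial ingredient; the rest is a direct rewriting using Theorem~\ref{th_feasibility}, which is why the result is presented as a corollary.
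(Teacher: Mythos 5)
Your proposal is correct and follows essentially the same route as the paper: the paper presents the corollary as an immediate consequence of the representation in~\eqref{eq_value_as_sup_over_feasible} combined with Theorem~\ref{th_feasibility}, which is exactly the decomposition you spell out (choose one-agent feasible marginals, then couplings, and note that the inner optimization separates across states). Your treatment of the $\max$ versus $\sup$ issue---attainment of the inner transportation problem via compactness and upper semicontinuity, and the outer maximum via the existence result of Appendix~\ref{app_existence_primal}---is a faithful elaboration of what the paper leaves implicit.
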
	
This formula represents multi-receiver persuasion as a two-stage optimization. In the first stage, the sender selects individual belief distributions $\lambda_i^\omega$ for each receiver $i$, as in a single-receiver problem. In the second stage, the sender finds the optimal way to correlate these individual distributions by choosing a joint distribution with marginals $\lambda_i^\omega$. Conditioning on the state $\omega$ ensures that the choices made in the first stage do not restrict possible correlations in the second, allowing maximization over all joint distributions consistent with the marginals.
\smallskip

The problem of finding the optimal way to correlate given marginal distributions is known as the Monge-Kantorovich optimal transportation problem. It is given by a measurable utility function $v$ on $X_1\times\ldots \times X_n$ and a collection of probability measure $\lambda_i\in \Delta(X_i)$ for each $i\in N=\{1,\ldots, n\}$. The goal is to find the joint distribution $\pi$ with marginals $\lambda_i$ that maximizes the integral of $v$:
 $$ 
 MK_v\big[(\lambda_i)_{i\in N}\big ]=\max_{{\footnotesize \begin{array}{c} \pi\in \Delta(X_1\times\ldots \times X_n)\\ 
\mbox{with marginals $\pi_i=\lambda_i$}
\end{array}}} \int_{X_1\times\ldots \times X_n} v\,\dd\pi
 $$
The term transportation originates from the two-marginal interpretation, where $\lambda_1\in \Delta(X_1)$ represents a spacial distribution of production of a certain commodity, $\lambda_2\in \Delta(X_2)$ captures the distribution of consumption, $c=-v$ captures the cost of transporting a unit amount of the commodity from one location to the other, and the goal is to find the least costly transportation plan~$\pi$ such that supply meets demand.

\smallskip 
We conclude that the internal maximization in~\eqref{eq_persuasion_transport} is a  transportation problem $MK_{v^\omega}\big[(\lambda_i^\omega)_{i\in N}\big ]$
with $X_1=\ldots=X_n=\Delta(\Omega)$, utility~$v^\omega$, and marginals $\lambda_1^\omega,\ldots, \lambda_n^\omega$. In other words, a persuasion problem is equivalent to a family of transportation problems with a joint constraint on marginals.} 
\smallskip

An essential tool in optimal transportation theory is the dual representation of the optimal value, known as the Kantorovich duality. Drawing inspiration from this classical result, we derive a dual representation for the sender's optimal value. This new representation not only generalizes the single-receiver duality established by \cite*{dworczak2019persuasion} but also extends the celebrated concavification formula to the multi-receiver context.
\begin{proposition}[dual value representation]\label{prop_dual value representation}
	The value of a persuasion problem can be expressed as:
	\begin{align}\label{eq:dual_value_explicit}
			\val[B]=&\inf_{{\footnotesize \begin{array}{c}  \mbox{\emph{$V^\omega\in\R$, continuous  $\varphi_i^\omega$ on $\Delta(\Omega)$ such that}}\\
			\mbox{$v^\omega(x_1,\ldots,x_n)\leq V^\omega+\sum_{i\in N}\varphi_i^\omega(x_i)$}\\
					\mbox{\emph{and} $\sum_{\omega\in\Omega}x_i(\omega) \varphi_i^\omega(x_i)=0$} 	\end{array}}}\sum_{\omega\in\Omega} p(\omega)\cdot V^\omega.
		\end{align}
If $v^\omega$ is continuous,  the optimum is attained, i.e., $\inf$ can be replaced with $\min$.
\end{proposition}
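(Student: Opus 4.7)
The plan is to establish weak duality by direct integration and strong duality by combining Corollary~\ref{cor_persuasion_as_transport} with the classical Kantorovich duality of optimal transportation together with a minimax swap. For weak duality, take any primal-feasible $(\mu^\omega)_{\omega\in\Omega}$ and any dual-feasible $(V^\omega, \varphi_i^\omega)$. Integrating the pointwise inequality $v^\omega \leq V^\omega + \sum_i \varphi_i^\omega(x_i)$ against $p(\omega)\,\dd\mu^\omega$ and summing over $\omega$, the $V^\omega$ terms contribute $\sum_\omega p(\omega) V^\omega$ (since $\int \dd\mu^\omega = 1$), while the receiver terms equal
\[
\sum_i \int \Big[\sum_\omega x(\omega)\,\varphi_i^\omega(x)\Big]\,\dd\lambda_i(x) = 0
\]
by Observation~\ref{obs_one_receiver_feasibility} (which gives $p(\omega)\,\dd\mu_i^\omega = x(\omega)\,\dd\lambda_i$) and the orthogonality constraint. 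Hence $\val[B] \leq \sum_\omega p(\omega) V^\omega$ for every dual-feasible triple.

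For strong duality, I would start from Corollary~\ref{cor_persuasion_as_transport} and apply Kantorovich duality to each inner transportation problem, which is valid since $v^\omega$ is upper semicontinuous on the compact space $\Delta(\Omega)^n$:
\[
MK_{v^\omega}\bigl(\lambda_1^\omega, \ldots, \lambda_n^\omega\bigr) = \inf_{(\phi_i^\omega)_i:\, \sum_i \phi_i^\omega \geq v^\omega} \sum_i \int \phi_i^\omega \,\dd\lambda_i^\omega.
\]
Substituting $\dd\lambda_i^\omega(x) = \tfrac{x(\omega)}{p(\omega)}\,\dd\lambda_i(x)$ and pulling $\sum_\omega$ inside the inf (the constraints decouple across $\omega$) yields
\[
\val[B] = \sup_{\lambda_i \in \Delta_p(\Delta(\Omega))} \; \inf_{(\phi_i^\omega):\, \sum_i \phi_i^\omega \geq v^\omega} \; \sum_i \int A_i(x)\,\dd\lambda_i(x), \qquad A_i(x) := \sum_\omega x(\omega)\,\phi_i^\omega(x).
\]
I would then swap sup and inf via Sion's minimax theorem: the sup set $(\Delta_p(\Delta(\Omega)))^n$ is convex and weak-$*$ compact, the inf set is convex, and the bilinear objective is weak-$*$ continuous in $\lambda$. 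After the swap, the outer sup decomposes across receivers and equals $\sum_i \cav(A_i)(p)$, which coincides with the infimum of $\sum_\omega p(\omega) \beta_i^\omega$ over affine majorants $\ell_i(x) = \sum_\omega x(\omega)\beta_i^\omega \geq A_i(x)$.

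To match the form in the proposition, set $V^\omega = \sum_i \beta_i^\omega$ and $\tilde\varphi_i^\omega(x) = \phi_i^\omega(x) - \beta_i^\omega + [\ell_i(x) - A_i(x)]$. A direct computation gives $\sum_\omega x(\omega)\,\tilde\varphi_i^\omega(x) = 0$ and $V^\omega + \sum_i \tilde\varphi_i^\omega(x_i) = \sum_i \phi_i^\omega(x_i) + \sum_i [\ell_i(x_i) - A_i(x_i)] \geq v^\omega$ (using the Kantorovich constraint together with $\ell_i \geq A_i$), while the objective $\sum_\omega p(\omega) V^\omega$ is unchanged. Attainment when $v^\omega$ is continuous follows from the attainment of Kantorovich duality by continuous potentials on compact spaces together with the finite-dimensional compactness of the outer optimization over affine majorants. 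I expect the main obstacle to be the rigorous justification of the minimax swap, since the inf side is not a priori compact; I would handle this by restricting without loss of optimality to a normalized family of $c$-concave Kantorovich potentials, which is weak-$*$ compact and over which the infimum is already attained.
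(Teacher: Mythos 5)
Your proof is correct in structure and takes a genuinely different route from the paper's. Where the paper introduces a single zero-sum ``penalty'' game---the maximizer picking arbitrary collections of conditional belief distributions, the minimizer picking test functions $\psi_i^\omega$ to punish violations of the one-receiver feasibility identity---and applies Sion once to that game, you instead factor the argument through the classical Kantorovich duality applied state-by-state, and then swap the order of the outer $\sup$ over $\lambda_i\in\Delta_p(\Delta(\Omega))$ with the $\inf$ over Kantorovich potentials $\phi_i^\omega$. After the swap, the $\sup$ neatly decomposes into $\sum_i\cav(A_i)(p)$, and your reparametrization $V^\omega=\sum_i\beta_i^\omega$, $\tilde\varphi_i^\omega=\phi_i^\omega-\beta_i^\omega+[\ell_i-A_i]$ correctly converts the affine-majorant representation of each concavification into the pointwise-orthogonal form of the proposition; combined with your direct weak-duality computation this closes the gap. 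Your approach has the pedagogical advantage of making the link to Kantorovich duality explicit, whereas the paper's one-shot minimax is more self-contained. Two remarks. First, your concern that the minimax swap needs compactness on the $\phi$-side is misplaced: Sion's theorem requires compactness of only one of the two sets, and the $\sup$-side $(\Delta_p(\Delta(\Omega)))^n$ is weak-$*$ compact and convex, so the swap goes through directly---exactly as in the paper. Second, your attainment argument for continuous $v^\omega$ is much too quick: Kantorovich dual attainment is a statement about fixed marginals, but after the swap the $\inf$ over $\phi$ sits outside the $\sup$ over $\lambda$, so it no longer reduces to per-marginal Kantorovich attainment; your fallback idea of restricting to normalized $c$-concave potentials to get a compact family is the right instinct, but it requires establishing uniform bounds and a uniform modulus of continuity for the candidate $\varphi_i^\omega$, which is precisely the content of the paper's Lemma~\ref{lm_regularity_estimates} followed by Arzel\`a--Ascoli. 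That step would need to be filled in to make your attainment claim rigorous.
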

The proposition is proved in  Appendix~\ref{app_proof_of_duality}, where we also show that functions $\varphi_i^\omega$ inherit the regularity of $v$, e.g., if $v$ is Lipshitz, so are the dual variables. We prove the duality by starting with the persuasion problem expressed as maximization over feasible conditional distributions and derive its dual along the lines of the duality proof in the optimal transport literature.  We define an auxiliary zero-sum game with a sup-inf value equal to the value of the persuasion problem, use Sion's minimax theorem to exchange sup and inf, and show that the inf-sup value coincides with the right-hand side of \eqref{eq:dual_value_explicit}.

\paragraph{Comparison to the single-receiver duality.}
Consider a persuasion problem with one receiver and the utility function $v^\omega=v$ independent of the state.  
\cite*{dworczak2019persuasion} established\footnote{For  a finite dimension $(|\Omega|<\infty)$, the result is intuitive. The value is known to be equal to the concavification $\cav[v](p)$, and the concavification of a function is the envelope of affine functions that lie above it. \cite*{dworczak2019persuasion} demonstrated that this remains true in the far less intuitive infinite-dimensional case, e.g., for continuous $\Omega$.} a dual representation for the value, which, in our notation, can be written as follows:
\begin{equation}\label{eq_Dworzak_Kolotilin_duality}
    		\val[B]=\inf_{{\footnotesize \begin{array}{c}  \mbox{$V^\omega\in\R$ such that}\\
			\mbox{$v(x)\leq \sum_{\omega\in\Omega}x(\omega)\cdot V^\omega$}
				\end{array}}} \sum_{\omega\in\Omega} p(\omega)\cdot V^\omega.
\end{equation}
 The crucial difference between \eqref{eq_Dworzak_Kolotilin_duality} and Proposition~\ref{prop_dual value representation} is that functions $\varphi_i^\omega$ are absent in the single-receiver case.
Consequently, the problem with one receiver is finite-dimensional, while that from Proposition~\ref{prop_dual value representation} is infinite-dimensional.

One may wonder if we can assume that $\varphi_i^\omega\equiv 0$  in Proposition~\ref{prop_dual value representation}.
The answer is negative for more than one receiver, even if the utility function is state-independent and satisfies all the symmetries. In Section~\ref{sect_polarization}, we will see an example with two receivers, where the optimum is attained at non-linear functions $\varphi_i^\omega$. We believe that, as in the theory of optimal transportation, the minimization cannot be restricted to functions $\varphi_i^\omega$ having a simple parametric form. This can be seen as another justification for the difficulty of multi-receiver persuasion.

\paragraph{{Proposition~\ref{prop_dual value representation} as an analog of the concavification formula.}}%\label{sect_concavification}
Consider a single-receiver persuasion problem $B$ with a continuous state-independent utility $v$. %\colon \Delta(\Omega)\to\R$. 
The value of this problem is equal to the concavification $\cav[v](p)$ \citep*{kamenica2011bayesian}. Notice that $u=\cav[v]$ is a concave continuous function and, in particular, revealing no information would be optimal if the utility function were equal to $u$ instead of $v$. Hence, the classical concavification result can be restated as follows. For a single receiver, 
\begin{align}\label{eq_single_agent}
\val[B]=\min_{{\footnotesize \begin{array}{c}  \mbox{{continuous $u$ such that}}\\
		\mbox{$v\leq u$ {and}}\\
		\mbox{{non-revealing is optimal under $u$ for any prior}} 	\end{array}}} 
		u(p).
\end{align}
Moreover, one can restrict minimization to linear $u$. 

\ed{In this form, the result remains valid for any number of receivers and state-dependent utilities. For an $n$-receiver persuasion problem $B$ 
with continuous and possibly state-dependent utility $v$, and prior~$p$,
\begin{equation}\label{eq_analog_cavU}
\val[B]=\min_{{\footnotesize \begin{array}{c}  \mbox{{continuous $u$ such that}}\\
		\mbox{$v^\omega(x_1,\ldots,x_n)\leq u^\omega(x_1,\ldots,x_n)$ {and}}\\
		\mbox{{non-revealing is optimal under  $u$ for any prior}} 	\end{array}}} 
		\sum_{\omega\in \Omega} p(\omega)\cdot u^\omega(p,p,\ldots,p).
\end{equation}
This formula provides an alternative form of
Proposition~\ref{prop_dual value representation}. It is straightforward to see that the right-hand side of~\eqref{eq_analog_cavU} is an upper bound on the left-hand side. For the reverse inequality, consider utilities $u^\omega$ of the form: \begin{equation}\label{eq_u} u^\omega(x_1,\ldots, x_n)=V^\omega+\sum_{i \in N} \varphi_i^\omega(x_i)\quad\text{with}\quad \sum_{\omega\in \Omega} x_i(\omega) \varphi_i^\omega(x_i)=0 \text{ for all } i\in N. \end{equation} By Proposition~\ref{prop_dual value representation}, the value of a persuasion problem with such a utility function $u$ under any prior~$q$ cannot exceed $\sum_{\omega \in \Omega} q(\omega) \cdot V^\omega$. The constraints imposed on $\varphi_i^\omega$ allow us to rewrite this sum as $\sum_{\omega \in \Omega} q(\omega)\cdot u^\omega(q,\ldots, q)$. This upper bound is achieved by non-revealing, which is therefore optimal under any prior $q$. Taking $V^\omega$ and $\varphi_i^\omega$ to be the optimal variables from the dual value representation for the original problem $B$ implies the equality in~\eqref{eq_analog_cavU}.

As a result, minimization in~\eqref{eq_analog_cavU} can be restricted to separable utilities $u^\omega$ of the form~\eqref{eq_u}, which serve as multi-receiver analogs of linear objectives in the single-receiver case of~\eqref{eq_single_agent}. It is important to note, however, that in the multi-receiver case, the condition that non-revealing is optimal cannot simply be replaced by concavity: taking $u = \cav[v]$ in~\eqref{eq_analog_cavU} generally leads to a strict upper bound on $\val[B]$. For an extreme but illustrative example, consider a binary-state problem with prior $p = {1}/{2}$ and a utility $v$ equal to one only at the pairs $(0,1)$ and $(1,0)$, and zero elsewhere. The concavification $\cav[v]$ evaluated at $(p, p)$ is one, but the actual value $\val[B]$ is zero because assigning positive mass to $(0,1)$ and $(1,0)$ violates Aumann's impossibility of agreeing to disagree. Smoothing $v$ around the discontinuities would not alter the conclusion.}

 \paragraph{Comparison to the Kantorovich duality.} Kantorovich found the dual to the transportation problem in the case of two marginals. The multi-marginal version of the result is as follows:\footnote{\ed{In the transportation literature, the dual is commonly expressed in an equivalent form that incorporates constants \( V^\omega \) within the functions \( \varphi_i \) \citep*{rachev1998mass}. We single out \( V^\omega \) to highlight the resemblance to the persuasion dual.
}}
	\begin{equation}\label{eq_Kantorovich}
	MK_v\big[(\lambda_i)_{i\in N}\big ]=\inf_{{\footnotesize \begin{array}{c}  \mbox{$V\in\R,$ continuous $\varphi_i:\, X_i\to\R$}\\
			\mbox{such that $v(x_1,\ldots,x_n)\leq V+\sum_{i\in N}\varphi_i(x_i)$}\\
\mbox{and $\int_{X_i} \varphi_i(x_i)\dd\lambda_i(x_i)=0$}			
			\end{array}}}
			V,
	\end{equation}
	where $X_i$, $i\in N$, are compact metric spaces and $v$ is an upper semicontinuous function on their Cartesian product; the optimum exist provided that $v$ is continuous \citep*{rachev1998mass}.

 The similarity between Proposition~\ref{prop_dual value representation} and the Kantorovich duality is not surprising, thanks to the connection between primal persuasion and transportation problems (Corollary~\ref{cor_persuasion_as_transport}). The differences are caused by the fact that the marginals in Corollary~\ref{cor_persuasion_as_transport}are not fixed but are free parameters that satisfy the one-receiver feasibility constraints. Hence, in contrast to \eqref{eq_Kantorovich}, the marginals do not enter~\eqref{eq:dual_value_explicit} and the functions $\varphi_i^\omega$ are required to satisfy the pointwise orthogonality  requirement $\sum_{\omega\in\Omega}x_i(\omega) \varphi_i^\omega(x_i)=0$ instead of functional orthogonality to measures $\lambda_i$ as in \eqref{eq_Kantorovich}.

\section{Applications}\label{sec:applications}

We rely on the primal and dual approaches to multi-receiver persuasion discussed in the previous section (Corollary \ref{cor_persuasion_as_transport} and Proposition \ref{prop_dual value representation})  to construct explicit solutions to several new classes of persuasion problems.

\subsection{One-state persuasion}\label{sect_onestate} 
A problem $B$ is a \emph{one-state persuasion problem} if the sender's   utility function $v^\omega$ has the following form 
$$v^\omega(x_1,\ldots,x_n)=\left\{\begin{array}{cc}v(x_1,\ldots,x_n), & \omega=\omega_0\\ 0, & \omega\ne\omega_0\end{array}\right.,$$
where $\omega_0\in \Omega$ is fixed and $v$ is a function $\Delta(\Omega)^N\to \R$.
For one-state problems, only the state $\omega_0$ contributes to the formula for the value from Corollary~\ref{cor_persuasion_as_transport}. \ed{Accordingly, only the component of one-agent marginals corresponding to this state plays a role.
We say that a distribution $\lambda^{\omega_0}$  on~$\Delta(\Omega)$ is a \emph{feasible one-agent marginal at a state $\omega_0$} if there is a way to complete it to a collection of feasible one-agent marginals  $(\lambda^\omega)_{\omega\in \Omega}$. 
\begin{lemma}\label{lm_projection_admissible} A distribution $\lambda^{\omega_0}$ is a feasible one-agent marginal at a state $\omega_0$
if and only if 
\begin{align}\label{eq_admissible_marginals_omega0}
		\int_{\Delta(\Omega)} \frac{x(\omega)}{x(\omega_0)}\dd\lambda^{\omega_0}(x)   &\leq \frac{p(\omega)}{p(\omega_0)}\quad \text{for all}\quad \omega\in\Omega\setminus\{\omega_0\}. 
	\end{align} 
\end{lemma}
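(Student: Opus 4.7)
The plan is to reduce both directions to Observation~\ref{obs_one_receiver_feasibility}, which parametrizes feasible one-receiver conditionals by martingale distributions $\lambda\in \Delta_p(\Delta(\Omega))$ via $\lambda^\omega(x)=\frac{x(\omega)}{p(\omega)}\lambda(x)$. In this language, $\lambda^{\omega_0}$ is a feasible marginal at~$\omega_0$ if and only if some $\lambda\in \Delta_p(\Delta(\Omega))$ satisfies $\lambda^{\omega_0}(x)=\frac{x(\omega_0)}{p(\omega_0)}\lambda(x)$, so the task splits into deducing~\eqref{eq_admissible_marginals_omega0} from such a $\lambda$ (necessity) and building one back from $\lambda^{\omega_0}$ (sufficiency).

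For necessity, I would substitute the defining identity into the integral in~\eqref{eq_admissible_marginals_omega0} and cancel $x(\omega_0)$ against the density to obtain
\[
\int \frac{x(\omega)}{x(\omega_0)}\dd\lambda^{\omega_0}(x)=\int_{\{x(\omega_0)>0\}} \frac{x(\omega)}{p(\omega_0)}\dd\lambda(x)\leq \frac{1}{p(\omega_0)}\int x(\omega)\dd\lambda(x)=\frac{p(\omega)}{p(\omega_0)},
\]
the last equality being the martingale property.

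For sufficiency, the first step is to notice that finiteness of each integral in~\eqref{eq_admissible_marginals_omega0} rules out any $\lambda^{\omega_0}$-mass on the face $\{x:x(\omega_0)=0\}$: every point there has $x(\omega')>0$ for some $\omega'\ne\omega_0$, so concentrated mass there would blow up the corresponding integrand on a set of positive $\lambda^{\omega_0}$-measure. With this concentration in hand, the natural candidate is
\[
\nu(\dd x):=\frac{p(\omega_0)}{x(\omega_0)}\lambda^{\omega_0}(\dd x)\quad\text{on } \{x(\omega_0)>0\}.
\]
A direct calculation then yields $m(\omega_0):=\int x(\omega_0)\dd\nu=p(\omega_0)$, while the inequalities translate into $m(\omega):=\int x(\omega)\dd\nu\leq p(\omega)$ for $\omega\ne\omega_0$, and summing gives $M:=\nu(\Delta(\Omega))=\sum_\omega m(\omega)\leq 1$.

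To upgrade $\nu$ to a martingale probability measure, I would absorb the slack with a single point mass on the face $\{x(\omega_0)=0\}$: set $\lambda:=\nu+(1-M)\,\delta_{\bar x}$ with $\bar x(\omega_0):=0$ and $\bar x(\omega):=\frac{p(\omega)-m(\omega)}{1-M}$ for $\omega\ne\omega_0$ (taking $\lambda=\nu$ when $M=1$). Then $\bar x\in\Delta(\Omega)$, $\lambda$ has total mass~$1$ and mean~$p$, hence $\lambda\in\Delta_p(\Delta(\Omega))$; and since $\bar x(\omega_0)=0$, the added point mass lies in the kernel of the weighting map $\mu\mapsto \frac{x(\omega_0)}{p(\omega_0)}\mu$, so this $\lambda$ reproduces $\lambda^{\omega_0}$. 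The main obstacle I anticipate is the concentration argument ruling out mass on $\{x(\omega_0)=0\}$, since it is what makes the finite inequality constraints substantive enough to characterize feasibility; once it is established, the remaining construction is pure algebraic verification.
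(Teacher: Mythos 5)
Your proof is correct and follows essentially the same path as the paper's: reduce to Observation~\ref{obs_one_receiver_feasibility}, weight $\lambda^{\omega_0}$ by $p(\omega_0)/x(\omega_0)$ to build a candidate underlying measure, and absorb the slack in the mean with point masses supported on the face $\{x(\omega_0)=0\}$. The differences are cosmetic---you place a single atom at a constructed $\bar x$ where the paper uses one vertex mass $\delta_\omega$ per state $\omega\ne\omega_0$, and your necessity argument avoids the paper's $\varepsilon$-regularization by noting directly that $\lambda^{\omega_0}$ lives on $\{x(\omega_0)>0\}$---and your explicit argument that \eqref{eq_admissible_marginals_omega0} forces $\lambda^{\omega_0}$ to put no mass on $\{x(\omega_0)=0\}$ spells out a step the paper leaves implicit.
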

The lemma is proved in Appendix~\ref{app_one_state}. The necessity of conditions~\eqref{eq_admissible_marginals_omega0} is easy to see. By Observation~\ref{obs_one_receiver_feasibility}, the feasibility of 
$(\lambda^\omega)_{\omega\in \Omega}$ implies that 
$\frac{x(\omega)}{x(\omega_0)} \lambda^{\omega_0}(x)=\frac{p(\omega)}{p(\omega_0)} \lambda^{\omega}(x)$. Integrating this identity over  $x\in \Delta(\Omega)$ with $x(\omega_0)\ne 0$, we get the equality of the form~\eqref{eq_admissible_marginals_omega0} with an extra multiplicative factor $\lambda^\omega(\{x\colon x(\omega_0)\ne 0\})$ on the right-hand side. By dropping this factor, we obtain the inequality.}

\smallskip
By combining Corollary~\ref{cor_persuasion_as_transport} and Lemma~\ref{lm_projection_admissible}, we conclude that the value of a one-state persuasion problem can be represented as
\[
p(\omega_0) \cdot \max_{\pi} \int_{\Delta(\Omega)\times\ldots\times \Delta(\Omega)} v \, \dd\pi,
\]
where the maximization is over distributions \(\pi \in \Delta\big(\Delta(\Omega) \times \cdots \times \Delta(\Omega)\big)\) whose marginals \((\pi_i)_{i \in N}\) satisfy the inequalities~\eqref{eq_admissible_marginals_omega0}. Since we maximize a linear functional over a convex set, we can, by Bauer's principle, restrict attention to the extreme points of this set. These extreme points have a simple structure: they are convex combinations of at most \( |N| \cdot (|\Omega| - 1) + 1 \) point masses, because the feasible set is defined by \( |N| \cdot (|\Omega| - 1) \) linear inequalities intersecting the simplex of all probability measures. The following lemma formalizes this observation.

\begin{lemma}\label{lm_one_state_counting_extreme}
	The value of a one-state persuasion problem $B$ can be expressed as maximization over distributions $\pi$ supported on at most $|N|\cdot(|\Omega|-1)+1$ points:
	\begin{equation}\label{eq_value_as_finite_dim_optimization}
	\val[B]=p(\omega_0)\cdot \max_{{\footnotesize \begin{array}{c} \pi\in \Delta\big(\Delta(\Omega)\times\ldots\times \Delta(\Omega)\big)\\ \mbox{\emph{such that the marginals satisfy} \eqref{eq_admissible_marginals_omega0} \emph{and}}\\
			\big|\supp[\pi]\big|\leq |N|\cdot(|\Omega|-1)+1
			\end{array}}	
	}\int_{\Delta(\Omega)\times\ldots\times \Delta(\Omega)} v\,\dd\pi.
	\end{equation}
\end{lemma}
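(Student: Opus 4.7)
The plan is to recognize the maximization in the value formula as a semi-infinite linear program---maximizing the linear functional $\pi \mapsto \int v\,\dd\pi$ over probability measures $\pi$ on $\Delta(\Omega)^N$ subject to the $k := |N|\cdot(|\Omega|-1)$ inequality constraints in~\eqref{eq_admissible_marginals_omega0}---and to show, via a signed-measure perturbation argument, that some optimum has support of size at most $k+1$.

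First, combining Corollary~\ref{cor_persuasion_as_transport} and Lemma~\ref{lm_projection_admissible}, I would rewrite $\val[B] = p(\omega_0)\cdot \max_{\pi \in C}\int v\,\dd\pi$, where $C$ is the convex set of feasible $\pi$'s. Existence of an optimum $\pi^*$ is given by Appendix~\ref{app_existence_primal}. The key step is to argue that if $\pi^*$ has support containing at least $k+2$ distinct points, then it can be perturbed to produce another optimum with strictly smaller support; iterating would yield an optimum supported on at most $k+1$ points.

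To construct such a perturbation, I would pick $k+2$ distinct points $y_1, \ldots, y_{k+2}$ in $\supp[\pi^*]$ and pairwise disjoint open neighborhoods $U_j \ni y_j$ with $\pi^*(U_j) > 0$. Consider signed measures
\[
\mu \;=\; \sum_{j=1}^{k+2} c_j \,\pi^*\big|_{U_j}
\]
with coefficients $(c_j) \in \R^{k+2}$ subject to (a)~$\sum_j c_j\,\pi^*(U_j) = 0$, preserving total mass, and (b)~$\sum_j c_j \int_{U_j} g_{i,\omega}\,\dd\pi^* = 0$ for every binding constraint $(i,\omega)$, where $g_{i,\omega}(x_1,\ldots,x_n) = x_i(\omega)/x_i(\omega_0)$. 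Since there are at most $1+k < k+2$ linear equations on $k+2$ unknowns, a nontrivial solution exists. For sufficiently small $\epsilon > 0$, both $\pi^* \pm \epsilon\mu$ lie in $C$: they remain positive by smallness of $\epsilon$, satisfy binding constraints exactly by (b), and satisfy non-binding constraints by continuity. Optimality of $\pi^*$ then forces $\int v\,\dd\mu = 0$, so both $\pi^* \pm \epsilon\mu$ are themselves optima. Taking $\epsilon$ maximal while keeping $\pi^* + \epsilon\mu \geq 0$ forces one of the weights $1+\epsilon c_j$ to vanish, producing an optimum with strictly smaller support.

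The main obstacle I anticipate is handling the technical issues arising from $g_{i,\omega}(x) = x(\omega)/x(\omega_0)$ being unbounded near $\{x : x(\omega_0) = 0\}$, which complicates both the weak-* closedness of $C$ and the continuity of non-binding moments under small perturbations. Additionally, the support-reduction iteration must terminate, which requires some optimum to have finite support to begin with. I would resolve both by first truncating to $\{x : x(\omega_0) \geq \eta\}$ for $\eta > 0$---where the $g_{i,\omega}$'s become continuous and bounded---running the argument there, and passing $\eta \to 0$ using the uniform moment bounds in~\eqref{eq_admissible_marginals_omega0}, combined with a preliminary Richter--Rogosinski-type reduction that replaces $\pi^*$ by a finitely supported optimum with the same integrals of $v$ and of all $g_{i,\omega}$.
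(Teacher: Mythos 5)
Your approach is correct in outline but takes a genuinely different route from the paper's. The paper invokes Bauer's maximum principle together with a cited theorem of Winkler on extreme points of the intersection of a convex set with finitely many half-spaces: any extreme point of $\Delta\big(\Delta(\Omega)^N\big)$ intersected with the $k=|N|(|\Omega|-1)$ half-spaces from~\eqref{eq_admissible_marginals_omega0} is a convex combination of at most $k+1$ point masses of the ambient set. You instead carry out the dimension counting by hand via a perturbation argument, essentially re-deriving the fundamental theorem of linear programming for a semi-infinite LP; this is more self-contained but longer, and your truncation to $\{x_i(\omega_0)\geq\eta\}$ is probably unnecessary---the constraints~\eqref{eq_admissible_marginals_omega0} already force each marginal $\pi_i$ to assign zero mass to $\{x_i(\omega_0)=0\}$, and lower semicontinuity of $g_{i,\omega}$ suffices for weak-$*$ closedness of the feasible set.

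There is, however, a gap in the last step. After finding a perturbation direction $\mu$ orthogonal to the binding constraints, you take $\epsilon$ ``maximal while keeping $\pi^*+\epsilon\mu\ge 0$'' and claim this kills an atom. But at that $\epsilon$ the measure may already have left the feasible set $C$: a previously non-binding inequality in~\eqref{eq_admissible_marginals_omega0} can become violated before any weight hits zero. In that case $\pi^*+\epsilon\mu$ is infeasible; and if instead you take $\epsilon$ maximal subject to feasibility, the process may terminate by making a new constraint tight rather than by losing an atom. The fix is standard but must be stated: whenever a new constraint becomes binding, re-solve for a direction orthogonal to the enlarged binding set (still at most $k+1$ linear equations on $k+2$ unknowns, so a nontrivial direction survives, and the old binding constraints remain tight because $\mu$ was orthogonal to them) and repeat. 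The binding set grows strictly and is bounded by $k$, so after finitely many rounds the only possible obstruction is positivity, at which point an atom is lost. Without this iteration, your argument does not establish the claimed support bound.
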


Note that for $\pi$ from the lemma, the integral in \eqref{eq_value_as_finite_dim_optimization} as well as the integrals in \eqref{eq_admissible_marginals_omega0} are, in fact, finite sums with at most $|N|(|\Omega|-1)+1$ summands. In Appendix~\ref{app_one_state}, we prove a strengthening of Lemma~\ref{lm_one_state_counting_extreme} with the bound on the number of atoms depending on the number of ``active'' constraints~\eqref{eq_admissible_marginals_omega0}; we also demonstrate that the sender can achieve 
the utility level corresponding to a distribution $\pi$ by using an information structure with at most $|N|\cdot(|\Omega|-1)$ signals per receiver.

The possibility of reducing one-state persuasion to a finite-dimensional problem reflects a peculiar geometric property of the set of feasible conditional belief distributions $(\mu_\omega)_{\omega\in\Omega}$. Denote this set by $\F$. The set of distributions with marginals satisfying~\eqref{eq_admissible_marginals_omega0} can be seen as the projection of $\F$ under the map $(\mu^\omega)_{\omega\in \Omega}\to \mu^{\omega_0}$. The fact that this image has extreme points with finite support and a simple structure is to be contrasted with the complicated structure of extreme points of the set $\F$ itself.
Indeed, extreme feasible unconditional distributions (i.e., the image of $\F$ under  $(\mu^\omega)_{\omega\in \Omega} \to \sum_{\omega\in \Omega} p(\omega)\cdot \mu^\omega$) can have infinite support \citep*{arieli2020feasible, zhu2022some} and even be non-atomic \citep*{cichomski2025existence}. Since an extreme point of the image under a linear map is the image of an extreme point, this implies the existence of infinitely-supported extreme points in $\F$ itself.

\medskip
We now illustrate the application of Lemma~\ref{lm_one_state_counting_extreme} to particular persuasion problems.

\begin{example}[\ed{supporting group morale in bad states}]\label{ex_one_state}
\ed{Consider a principal who wants to support the morale of a group of two agents by revealing information about a binary state, which can be good ($\omega=h$) or bad ($\omega=\ell$). In the bad state, the principal {needs at least one of the agents to believe that the state is likely to be good} to maintain group morale. For example, the state can indicate how promising a project is in its current condition. {Workers' effort is increasing in their belief about the good state}, and it is particularly important to ensure that at least one worker invests high effort in the bad state.}
\begin{figure}[h!]
	\begin{center}
	    \scalebox{0.6}{
	    
		\begin{tikzpicture}[scale=0.5,line width=1.5pt]
		\draw (0,0)--(10,0)--(10,10)--(0,10)--(0,0);

		\draw[thin, densely dashed] (3.33333,0)--(3.33333,10);
  \draw[thin, densely dashed] (0,3.33333)--(10,3.33333);
	%	\draw[thin, densely dashed] (0,5)--(5,5);
	%	\draw[thin, densely dashed] (5,0)--(5,5);
	%	\draw[thin, densely dashed] (0,3.75)--(6.25,3.75);
		
	%	\draw[very thin,fill=red, shift={(3.75,6.25)}] (0,0) --  (-45:0.6) arc(-45:135:0.6) -- cycle;
	%	\draw[very thin,fill=blue, shift={(3.75,6.25)}] (0,0) --  (135:0.6) arc(135:315:0.6) -- cycle;
		
	%	\draw[very thin,fill=red, shift={(6.25,3.75)}] (0,0) --  (-45:0.6) arc(-45:135:0.6) -- cycle;
	%	\draw[very thin,fill=blue, shift={(6.25,3.75)}] (0,0) --  (135:0.6) arc(135:315:0.6) -- cycle;
		
		\filldraw[blue] (3.333333,3.333333) circle (0.6);
		\filldraw[red] (3.333333,10) circle (0.4);
  		\filldraw[red] (10,3.333333) circle (0.4);
            \node[right] at (3.9, 8.7) {\huge$\frac{1}{4}$};
		\node[left] at (9.8, 4.5) {\huge$\frac{1}{4}$};
		\node[right] at (3.9,4.5) {\huge$\frac{1}{2}$};
		\node[below right] at (10,0) {\Large $x_1(\ell)$};
		\node[above left] at (0,10) {\Large $x_2(\ell)$};
		\node[below] at (3.333333,0) {\huge $\frac{1}{3}$};
		\node[left] at (0,3.333333) {\huge $\frac{1}{3}$};
		\end{tikzpicture}
        }
	\end{center}
	\caption{The joint distribution of beliefs for Example~\ref{ex_one_state}. Prior is $1/2$. The numbers inside the square indicate the probabilities of each outcome, and red/blue colors correspond to $\omega=\ell$ and $\omega=h$, respectively.\label{fig_one_state}}
\end{figure}
\ed{For simplicity, we assume that the two states $\omega \in \{\ell,h\}$ are equally likely. We denote the belief of agent $i$ about the low state by $x_i \in [0,1]$. Agent $i$ exerts effort proportional to her belief $1-x_i$ about the good state. Principal's utility is proportional to the maximal effort in the bad state and is constant in the high state: $v^\ell (x_1,x_2)=\max\{1-x_1,1-x_2\}=1-\min\{x_1,x_2\}$  and  $v^h (x_1,x_2)=C$ with some $C$ in the good state. As the constant does not affect the optimum, we can set $C=0$, to obtain a one-state persuasion problem.
}

For two receivers, it is enough to consider distributions $\pi$  in~\eqref{eq_value_as_finite_dim_optimization} with at most three points in the support. 
If we restrict the maximization to one-point distributions, then the optimum of $1-\frac{1}{\sqrt{2}}\approx 0.29$ is achieved at the point mass at a pair of beliefs $(x_1,\,x_2)=\left(1-\frac{1}{\sqrt{2}},\frac{1}{2}\right)$ and also at $\left(\frac{1}{2},1-\frac{1}{\sqrt{2}}\right)$.
For $\pi$ supported on two points, we can improve the principal's expected utility to~$\frac{1}{3}$, which is achieved at the distribution that places equal weight on  $\left(1,\frac{1}{3}\right)$ and $\left(\frac{1}{3},1\right)$. Allowing for the third point in the support does not improve the objective.

The optimal information structure $I$ has two signals $L$ and $H$ for both receivers. 
 If $\omega=\ell$, the principal picks an agent uniformly at random and sends the signal $H$. In all other cases (state $\omega=\ell$ and the receiver not picked or state $\omega=h$), the sender sends the signal $L$. The corresponding belief distribution is depicted in Figure~\ref{fig_one_state}. \ed{In both states, there is an agent who attributes belief $2/3$ to the good state, i.e., the optimal structure equalizes the maximum level of optimism in good and bad states. However, in the good state, both players are at this level, and in the bad state, just one.}

\ed{The notion of which state is good and which state is bad can be subjective. {For example, one team of agents \(i \in \{1,2\}\) may consider \(\omega = \ell\) as bad, while another team \(i \in \{3,4\}\) regards \(\omega = h\) as bad.} Supporting morale in both teams in their subjective bad states corresponds to
$v^\omega = \mathbf{1}[\omega = \ell] \cdot \max\{1-x_1, 1-x_2\} + \mathbf{1}[\omega = h] \cdot \max\{x_3, x_4\}.
$
As the correlation of beliefs across teams is irrelevant to the objective, the problem reduces to solving separate one-state persuasion problems for each team.
More generally, if at each state \(\omega\), the sender derives utility from disjoint groups \(N_\omega \subset N\), the persuasion problem reduces to \(|\Omega|\) one-state persuasion problems, each indexed by \(\omega_0 \in \Omega\) and involving receivers in \(N_{\omega_0}\) only. Therefore, supporting morale in the teams \(\{1,2\}\) and \(\{3,4\}\) can be addressed separately using the previously solved one-state problem.
}
\end{example}

\begin{example}(\ed{selective production discouragement in an oligopoly with unknown costs})\label{ex_olygop}
\ed{
    \cite*{mckelvey1986common} consider a Cournot oligopoly, where firms use the same technology.  The technology can be toxic ($\omega=\ell$) or not ($\omega=h$) with prior $p\in (0,1)$ for the low state. Toxic production leads to liability $\gamma_i\cdot  q_i$ where $q_i$ is the amount firm $i$ produced, and $\gamma_i>0$ is a constant. Effectively, this means that the firms face common uncertainty about production costs. 
    Each firm conducts a private experiment about toxicity, resulting in belief $x_i\in [0,1]$ for the state $\ell$. The inverse demand  $D^{-1}(q)=a-b\cdot q$ is linear in total production $q$ and  the production costs of firm $i$ are quadratic~$C_i(q)=c_iq^2/2+d_i q+e_i$. 
    
    \cite*{mckelvey1986common} assume that firms are naive, use their private information only, and do not learn from contemporaneous prices or the production level of the competitor.\footnote{\ed{In other words, firm $i$ sticks to her belief $x_i$ even if it realizes that the competitor's choices indicate a belief that differs from $x_i$, as in the cursed equilibrium of \cite*{eyster2005cursed}.}} Under this assumption, each firm maximizes $D^{-1}(q_1+q_2)\cdot q_i-C_i(q_i)-\gamma_i x_i q_i$ over $q_i$, and the first-order conditions lead to a linear system $(2b+c_i) q_i+ b q_{-i}=a-d_i-\gamma_i x_i$ resulting in linear $q_i(x_i,x_{-i})$, decreasing in $x_i$ and increasing in $x_{-i}$.

We consider a variation of this duopoly model where, instead of private experiments conducted by firms, the government learns $\omega$ by hiring an expert and can then reveal some information to the producers. The product remains valuable for consumers regardless of the technology. However, the firms' production facilities are located differently: 
firm 1 is situated near a residential area, while firm 2 is not. The government aims to maximize social welfare given by
$$u^\omega=\left(q_1+q_2\right)-  \mathbf{1}[\omega=\ell]\cdot  C(q_1),$$
which consists of the total production from both firms minus the pollution cost $C(q_1)$ in the state~$\ell$, increasing in firm~1's production.  Since the quantities $q_i$ are linear in beliefs $(x_1,x_2)$, and the average belief equals the prior, the expected value of $q_i$ is the same across feasible belief distributions. Therefore, the government's problem is equivalent to a one-state persuasion with $v^\ell=-\mathbf{1}[\omega=\ell] \cdot C(q_1)$ and $v^h\equiv 0$.

For convex costs $C$, the objective $v^\ell$ is concave in $(x_1,x_2)$ and thus the optimum in the optimization problem~\eqref{eq_value_as_finite_dim_optimization} from Lemma~\ref{lm_one_state_counting_extreme} is attained at a single point mass. Taking into account the constraints in~\eqref{eq_value_as_finite_dim_optimization}  and the fact that $q_1$ is decreasing in $x_1$ and increasing in $x_2$, the optimal point mass is $\delta_{(1,p)}$. This conditional belief distribution is induced by an information structure that reveals the state to firm~1 while keeping firm~2 uninformed. Although the optimality of such a structure may seem intuitive, we are unaware of an argument that does not rely on an analog of Lemma~\ref{lm_one_state_counting_extreme}.

\begin{figure}[h!]
	\begin{center}
	    \scalebox{0.6}{
	    
		\begin{tikzpicture}[scale=0.5,line width=1.5pt]
		\draw (0,0)--(10,0)--(10,10)--(0,10)--(0,0);

	%	\draw[thin, densely dashed] (3.33333,0)--(3.33333,10);
  \draw[thin, densely dashed] (0,3.61)--(10,3.61);		
		%\filldraw[blue] (3.333333,3.333333) circle (0.6);
	%	\filldraw[red] (3.333333,10) circle (0.4);
  		\filldraw[red] (10,10) circle (0.35);
      	\filldraw[red] (10,3.61) circle (0.45);
    %        \filldraw[blue] (0,0) circle (0.4);
      	\filldraw[blue] (0,3.61) circle (0.6);
         %   \node[right] at (3.9, 8.7) {\huge$\frac{1}{4}$};
	%	\node[left] at (9.8, 4.5) {\huge$\frac{1}{4}$};
		\node[right] at (0.3,5) {\huge$\frac{1}{2}$};
  \node[left] at (9.7,5) {\Large$0.28$};
  	\node[right] at (0.3,5) {\huge$\frac{1}{2}$};
  \node[left] at (9.7,9) {\Large$0.22$};
		\node[below right] at (10,0) {\Large $x_1(\ell)$};
		\node[above left] at (0,10) {\Large $x_2(\ell)$};
	%	\node[below] at (3.333333,0) {\huge $\frac{1}{3}$};
		\node[left] at (-0.6,3.61) {\Large $0.36$};
		\end{tikzpicture}
        }
	\end{center}
	\caption{\ed{The joint distribution of beliefs for Example~\ref{ex_olygop} with pollution cost $C(q_1)=1-\exp(-2q_1)$. The numbers inside the square indicate the probabilities of each outcome, and red/blue colors correspond to $\omega=\ell$ and $\omega=h$, respectively.}\label{fig_duopoly}}
\end{figure}

{For concave cost functions \(C\), the optimal information structure may be less intuitive. When firm~2 assigns a relatively high probability to \(\omega = \ell\), it reduces its production, which encourages increased production by firm~1. This strategic consideration may affect the optimal information structure: in addition to informing firm~1 about the low state (\(x_1 = 1\)), the government may find it optimal to induce \(x_2 < 1/2\) in some scenarios.}

{For example, consider a cost function \(C(q_1) = 1 - \exp(-2q_1)\) and assume the prior \(p=1/2\). Suppose the duopoly has parameters \(a = 21\), \(b = 2\), \(c_1 = c_2 = 0\), \(d_1 = 9\), \(d_2 = 3\), \(\gamma_1 = 3\), and \(\gamma_2 = 12\), which results in the production function \(q_1(x_1, x_2) = 1 - x_1 + 2x_2\). The optimal distribution in Lemma~\ref{lm_one_state_counting_extreme} is then supported on two points \((1,1)\) and \((1, t)\) with some \(t \leq 1/2\). The constraint~\eqref{eq_admissible_marginals_omega0} binds for firm~2, resulting in weights \(\frac{1 - 2t}{1 - t}\) and \(\frac{t}{1 - t}\), respectively.} {The government's objective reduces to maximizing \(\frac{t}{1 - t} \left( \exp\left[4(1 - t)\right] - 1 \right)\), which leads to the optimal \(t^* \approx 0.3608\).}
The resulting belief distribution is depicted in Figure~\ref{fig_duopoly}. {This distribution is induced by an information structure that reveals the state to firm~1 and sends a noisy signal \(s_2 \in \{L, H\}\) to firm~2; \(s_2 = L\) is sent with certainty in the low state, and in the high state, \(s_2 = L\) and \(s_2 = H\) are sent with probabilities approximately \(0.564\) and \(0.436\).}

}

\end{example}

\subsection{Supermodular Persuasion}\label{sect_super}

\ed{
Supermodular objectives are omnipresent in economics, arising in contexts where agents' actions are complements. In this section, we demonstrate that symmetric multi-receiver persuasion problems with supermodular objectives can be reduced to single-receiver problems, regardless of the number of agents and states.

Recall that a function $G\colon \mathbb{R}^n \to \mathbb{R}$ is \emph{supermodular} if, for all $z, z' \in \mathbb{R}^n$,
\[
G(z \lor z') + G(z \land z') \geq G(z) + G(z'),
\]
where $z \lor z'$ and $z \land z'$ denote the component-wise maximum and minimum of $z$ and $z'$, respectively. 
Informally, increasing one variable makes increasing another variable more beneficial.
For twice continuously differentiable functions, this condition is equivalent to the requirement that all mixed partial derivatives $\partial_{z_i}\partial_{z_j} G$ are non-negative for all $i \ne j$.

For arbitrary finite sets of agents $N = \{1,2,\ldots, n\}$ and states $\Omega$, we call a persuasion problem \emph{supermodular}\footnote{\ed{Our setting should not be confused with that of papers on information design in supermodular games---e.g., \cite*{halac2022addressing} and \cite*{morris2024implementation}---which consider receivers with binary actions and model externalities among them through supermodular utilities, emphasizing strategic interactions. In contrast, our first-order persuasion setting allows for arbitrarily rich actions but rules out strategic externalities; here, supermodularity refers to the sender's objective function instead.}
} if
\begin{equation}\label{eq_super_modular_def}
    v^\omega(x_1,\ldots, x_n) = G^\omega\Big(a_1^\omega(x_1),\ldots, a_n^\omega(x_n)\Big),
\end{equation}
where $G^\omega\colon \mathbb{R}^n \to \mathbb{R}$ is supermodular, and $a_i^\omega\colon \Delta(\Omega) \to \mathbb{R}$. As usual, we assume that $v^\omega$ is upper semicontinuous.

We say that a supermodular problem is \emph{agent-symmetric} if the functions $a_i^\omega$ do not depend on~$i$ (thus $a_i^\omega=a^\omega$) and $G^\omega(z_1,\ldots, z_n) = G^\omega(z_{\sigma(1)},\ldots, z_{\sigma(n)})$ for any permutation $\sigma$ of $\{1,\ldots,n\}$.

\begin{lemma}\label{lm_supermodular}
An agent-symmetric supermodular persuasion problem $B$ with utility~\eqref{eq_super_modular_def} is equivalent to a single-receiver problem $\overline{B}$ that has the same prior $p$ and a state-independent utility
\[
\overline{v}(x) = \sum_{\omega\in \Omega} x(\omega)\cdot G^{\omega}\big(a^\omega(x),\ldots, a^\omega(x)\big),\quad x\in \Delta(\Omega).
\]
Namely,
\[
\val[B] = \val\big[\overline{B}\big] = \cav\big[\overline{v}\big](p),
\]
where $\cav\big[\overline{v}\big]$ denotes the concavification of $\overline{v}$ over $\Delta(\Omega)$. This value in the original problem can be achieved by an information structure with $|\Omega|$ public signals.
\end{lemma}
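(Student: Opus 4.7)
The plan is to combine the primal representation from Corollary~\ref{cor_persuasion_as_transport} with symmetrization across receivers and the classical Lorentz--Tchen comonotonicity inequality for supermodular functions, reducing the multi-receiver problem to single-receiver concavification.

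First, starting from the equivalent formulation of $\val[B]$ as a maximum over feasible conditional belief distributions $(\mu^\omega)_{\omega \in \Omega}$, I would argue that one can restrict to those whose one-receiver marginals coincide across all agents. Given any feasible $(\mu^\omega)_\omega$ with marginals $\lambda_i^\omega$, form the symmetrization $\mu^\omega_{\mathrm{sym}} := \frac{1}{n!}\sum_{\sigma} \sigma_* \mu^\omega$, where $\sigma_*$ denotes the pushforward by coordinate permutation on $\Delta(\Omega)^n$. Agent-symmetry of $G^\omega$ together with $a_i^\omega = a^\omega$ being independent of $i$ implies $\int v^\omega \, d\mu^\omega_{\mathrm{sym}} = \int v^\omega \, d\mu^\omega$. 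All one-receiver marginals of $\mu^\omega_{\mathrm{sym}}$ equal $\bar\lambda^\omega := \frac{1}{n}\sum_i \lambda_i^\omega$, which is a feasible one-receiver marginal by convexity of $\Delta_p(\Delta(\Omega))$, and Theorem~\ref{th_feasibility} then implies $\mu^\omega_{\mathrm{sym}}$ is feasible. Hence we lose nothing by restricting to symmetric marginals $\lambda^\omega := \bar\lambda^\omega$.

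Second, among conditional distributions $\mu^\omega$ with a common one-receiver marginal $\lambda^\omega$, I would show that the diagonal coupling---the pushforward of $\lambda^\omega$ under $x \mapsto (x, \ldots, x) \in \Delta(\Omega)^n$---maximizes $\int v^\omega \, d\mu^\omega$. Passing to the one-dimensional pushforward $\nu^\omega := a^\omega_* \lambda^\omega$, the inner optimal-transport subproblem becomes the maximization of $\int G^\omega(z_1,\ldots,z_n) \, d\tilde\pi$ over couplings of $n$ identical copies of $\nu^\omega$. The classical Lorentz--Tchen inequality asserts that for supermodular integrands the maximum is attained by the comonotone coupling, which for identical marginals is concentrated on the diagonal $\{(z,\ldots,z)\}$. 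Hence the inner optimum equals $\int G^\omega(a^\omega(x),\ldots,a^\omega(x)) \, d\lambda^\omega(x)$ and is realized on $\Delta(\Omega)^n$ by the diagonal lift, which is feasible by Theorem~\ref{th_feasibility}.

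Summing over states and substituting $\lambda^\omega(x) = \frac{x(\omega)}{p(\omega)}\lambda(x)$ from Observation~\ref{obs_one_receiver_feasibility} collapses the expression to
\[
\sum_{\omega\in\Omega} p(\omega) \int G^\omega(a^\omega(x),\ldots,a^\omega(x)) \, d\lambda^\omega(x) = \int_{\Delta(\Omega)} \overline{v}(x) \, d\lambda(x).
\]
Maximizing over $\lambda \in \Delta_p(\Delta(\Omega))$ yields $\val[\overline{B}] = \cav[\overline{v}](p)$ by the single-receiver concavification theorem. An optimal $\lambda$ can be chosen with support of at most $|\Omega|$ atoms by Carath\'eodory's theorem applied to the concavification on the $(|\Omega|-1)$-dimensional simplex; the corresponding information structure sends the same signal (the index of the realized atom) to all receivers, yielding $|\Omega|$ public signals. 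I expect the main technical step to be the rigorous invocation of Lorentz--Tchen, though in our finite-state setting this reduces to a standard coupling inequality.
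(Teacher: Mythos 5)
Your proof is correct and follows essentially the same route as the paper's: symmetrize the feasible conditional distribution over permutations, use the primal/transport representation to reduce to a family of optimal-transport problems with identical marginals, and invoke the comonotone-coupling result for supermodular integrands (the paper cites Burchard and Hajaiej's multi-marginal rearrangement theorem, which is the same result as your Lorentz--Tchen reference) to conclude that the diagonal coupling is optimal, collapsing to single-receiver concavification. The only addition is your Carath\'eodory argument for the $|\Omega|$ public signals, which the paper's proof of the lemma leaves implicit; that step is correct and a welcome detail.
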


The proof of Lemma~\ref{lm_supermodular} is presented in Appendix~\ref{app_supermodular_proof}. 
The key idea is leveraging the connection between persuasion and optimal transport (Corollary~\ref{cor_persuasion_as_transport}) and the fact that optimal transportation problems with supermodular objectives have particularly simple solutions, given by \emph{assortative} matching.
Indeed, let $\tau_i \in \Delta(\mathbb{R})$ be probability distributions with bounded support. Denote by $f_i$ the quantile function (inverse cumulative distribution function) of $\tau_i$, i.e.,
$
f_i(t) = \inf \{x \in \mathbb{R} \mid \tau_i((-\infty, x]) \geq t\},$ $t \in (0,1].
$
Then, the optimal value for a transportation problem with a supermodular objective $G$ is given by
\[
MK_G[\tau_1,\ldots, \tau_n] = \int_{0}^{1} G\big(f_1(t),\ldots, f_n(t)\big)\, \dd t.
\]
It is attained at the distribution of the vector $(f_1(t),\ldots, f_n(t))$ for $t$ uniformly distributed on $[0,1]$, the so-called \emph{assortative matching} of $\tau_1,\ldots, \tau_n$. Particular $2$-marginal versions of this result have appeared in economics since \cite*{becker1973theory}, but the result holds in great generality for any $n$,  requiring only the measurability of $G$ and no additional regularity assumptions  \citep*{burchard2006rearrangement}.

To prove Lemma~\ref{lm_supermodular}, we note that in an agent-symmetric problem, it suffices to consider marginal distributions $\lambda_i^\omega$ that are the same for all agents. Due to the supermodularity of   $G^\omega$, assortative matching of 
$a^\omega(x_1),\ldots, a^\omega(x_n)$ 
maximizes the sender's utility in each state~$\omega$.
Since the marginal distributions are identical, assortative matching leads to $x_1 = \ldots = x_n$. 
Consequently, providing the same information to all agents via public signals achieves the optimal outcome. This effectively reduces the multi-receiver persuasion problem to a single-receiver problem, completing the argument.

\begin{example}[effort in teams]\label{ex_effort_in_teams}
Suppose each receiver~$i$ chooses an action $a_i(x_i)$ based on their belief~$x_i$---e.g., a worker selects the effort level depending on the belief about the project's characteristics. Suppose the production function $G(a_1,\ldots, a_n)$ is supermodular---meaning that the marginal value of each worker's effort increases with the effort of others. Assume that workers are similar in how their beliefs affect their actions; that is, the functions $a_i(x)$ are identical across all workers.
Then, Lemma~\ref{lm_supermodular} implies that the informed team leader can achieve the best outcome by communicating with the team publicly. 
\end{example}

The persuasion literature often deals with finite state spaces—mostly binary—or continuous states where only posterior means matter. While our approach focuses on small state spaces, Lemma~\ref{lm_supermodular} also applies when the sender's utility depends on receivers' posterior means.

\begin{example}[one-dimensional state and mean-measurable objectives]\label{ex_one_dim_state}

Consider the state space $\Omega = \{\omega_1, \ldots, \omega_k\} \subset \mathbb{R}$.
Let $a_i(x_i)$ be receiver~$i$'s posterior mean, i.e., 
$
a_i(x_i) = \E_I[\omega \mid x_i]
$.
Suppose the sender's utility is
\[
v^\omega(x_1, \ldots, x_n) = G^\omega\big(\E_I[\omega \mid x_i], \ldots, \E_I[\omega \mid x_n]\big),
\]
where $G^\omega$ is supermodular and symmetric. 

For a concrete illustration, consider a firm (the sender) selling a product of uncertain quality \(\omega \in \mathbb{R}\) to multiple markets (the receivers). In each market, the total amount purchased is a function of the consumers' posterior mean \(\E_I[\omega \mid x_i]\). The firm's profit depends supermodularly on the entire demand profile---due to factors like economies of scale.
By Lemma~\ref{lm_supermodular}, optimal persuasion reduces to using public signals. The same conclusion holds if $a_i$ are based on other statistics of posterior beliefs, such as posterior medians or specific quantiles as in \cite*{yang2024monotone}. 
\end{example}

}

\ed{The following example demonstrates that, despite the apparent simplicity, supermodular objectives can lead to rich information transmission patterns even in binary-state two-receiver problems.
\begin{example}[revenue maximization by a public option provider]

Consider a government agency or nonprofit organization that provides a public option, such as affordable housing or public transportation, to two distinct markets. The quality of the service is uncertain---it can be either high ($\omega = h$) or low ($\omega = \ell$)---with a prior probability $p \in [0,1]$ of being low. Consumers in each market are uncertain about the service's quality and decide whether to use it based on their beliefs. The provider charges a regulated, exogenously fixed price $\alpha$ for the service in both markets and must meet the total demand. Consequently, the provider cannot rely on price mechanisms to influence demand for the service and instead relies on informational tools, affecting each market's beliefs about quality through private signals.

Let $F_i(x_i)$ be a decreasing function that denotes the demand from market~$i$ when consumers have a belief $x_i \in [0,1]$ that the quality is low. The provider's cost of supplying the service is given by a cost function $c^\omega(q_1, q_2)$, where $q_i$ is the quantity demanded in market~$i$. The cost $c^\omega$ is submodular in quantities, i.e., the marginal cost of supplying one market decreases when supplying both markets together. The provider's total profit
\[
v^\omega(x_1, x_2) = \alpha \cdot \big(F_1(x_1) + F_2(x_2)\big) - c^\omega\big(F_1(x_1), F_2(x_2)\big)
\]
is supermodular in beliefs $(x_1, x_2)$. Lemma~\ref{lm_supermodular} reduces the two-receiver private persuasion to persuading the receivers via public signals. In particular, the provider does not need to worry about information leakage between markets.

As a concrete example, consider $F_i(x_i)=1-x_i$, $c^h(q_1,q_2)=\sqrt{q_1+q_2}$, and $c^\ell(q_1,q_2) = \frac{1}{3}\sqrt{q_1+q_2}$. Hence, the indifference points between buying and not buying are distributed uniformly in each of the markets, and providing low-quality service is thrice less costly than providing high-quality service.
The auxiliary state-independent utility function becomes $$\overline{v}(x)=2\alpha(1-x)-\left(1-\frac{2}{3}x\right)\sqrt{2-2x}.$$
Applying concavification to $\overline{v}$, we find that for $p \in \left[0, \tfrac{1}{2}\right]$---that is, when high quality is more likely---it is optimal to reveal no information. For $p \in \left[\tfrac{1}{2}, 1\right]$, it is optimal to mix between revealing low quality ($x = 1$) and inducing the maximally uncertain belief ($x = \tfrac{1}{2}$) thus avoiding costly situations with small but positive demand and ensuring that demand is either zero or substantial.

\end{example}
}

\subsection{Constructing Solutions via Duality}\label{sect_polarization}

\ed{In this section, we use the dual representation (Proposition~\ref{prop_dual value representation}) to solve persuasion problems. The dual approach is particularly convenient when we guess a specific information structure~$I$ is optimal. To verify this guess, we construct---guided by the complementary slackness conditions---a feasible solution to the dual problem that yields the same value as~$I$. The existence of such a dual solution confirms the optimality of $I$ since any feasible solution to the dual gives an upper bound on the value. 

We demonstrate this general technique in Appendix~\ref{app_dual}, where we derive a condition for the optimality of information structures revealing no information to one receiver and partial information to the other in two-receiver problems. While our duality approach is especially useful for problems where we anticipate a simple optimal information structure, it can also be applied to problems where the optimal structure is highly nontrivial \citep*{kravchenko2024}.

Here, we focus on two illustrative examples.
As a first step, we show that the dual from Proposition~\ref{prop_dual value representation} simplifies in the case of two receivers and a binary state.} Identifying each belief \( x \in \Delta(\Omega) \) with \( x(\ell) \in [0,1] \), we exploit a unique feature of the binary-state case: the last condition in~\eqref{eq:dual_value_explicit} uniquely determines \( \varphi_i^h \) given \( \varphi_i^\ell \). This reduces the number of functions from four to two. Let us denote \( \alpha_i(x) = \frac{\varphi_i^\ell(x)}{1 - x} \). Therefore, $\varphi_i^h(x)=-x\cdot\alpha_i(x)$ and so $\alpha_i$ is not singular at~$x=1$ and thus continuous. We conclude that~\eqref{eq:dual_value_explicit} reduces to 
\begin{align}\label{eq:dual_value_2_receivers_simplified}
			\val[B]=&\inf_{{\footnotesize \begin{array}{rl}  V^\omega\in\R, & \mbox{continuous  $\alpha_i$ on $[0,1]$ such that}\\
			v^\ell(x_1,x_2)&\leq V^\ell+(1-x_1)\alpha_1(x_1)+ (1-x_2)\alpha_2(x_2)
			\\
			v^h(x_1,x_2)&\leq V^h-x_1\cdot\alpha_1(x_1) -x_2\cdot\alpha_2(x_2)
			\end{array}}}  p\cdot V^\ell+(1-p)V^h.
		\end{align}
\ed{We demonstrate how this simplified dual can be applied in the following examples.}

\begin{example}[$\beta$-polarization]\label{ex_x_to_third}
\ed{Consider a pair of agents with beliefs $x_1,x_2\in [0,1]$ about a binary state induced by some information structure~$I$. Polarization of beliefs can be quantified using various measures. One such measure is $\E_I[|x_1 - x_2|^\beta],$ which we refer to as the \emph{$\beta$-polarization}. Larger values of $\beta$ place more emphasis on significant differences in beliefs, making the measure less sensitive to small discrepancies.

Assuming a prior $p = {1}/{2}$, we aim to determine the maximal possible $\beta$-polarization. We show that for every $\beta$ in the interval $\left(0, \beta_{\max}\right]$, where $\beta_{\max} \approx 2.25751$, the maximal $\beta$-polarization is achieved by an information structure that reveals the state fully to one receiver while keeping the other receiver completely uninformed. This information structure yields a tight upper bound of~$2^{-\beta}$ on $\beta$-polarization.

To see this, consider the persuasion problem $B$ with the objective $v^\omega(x_1, x_2) = |x_1 - x_2|^\beta$ and utilize the dual~\eqref{eq:dual_value_2_receivers_simplified}. We choose $V^\ell = V^h = 2^{-\beta}$ and define the functions $\alpha_i(x)$ as
\begin{equation*}\label{eq_alpha}
\alpha_1(x)=\alpha_2(x)=\left\{\begin{array}{cc}
   \frac{(1-x)^\beta-2^{-\beta}}{1-x}, & x\leq 1/2\\ & \\ 
\frac{2^{-\beta}-x^\beta}{x}, & x\geq 1/2
\end{array}\right..
\end{equation*}
An elementary computation shows that, with these choices, the constraints in~\eqref{eq:dual_value_2_receivers_simplified} are satisfied.\footnote{This can be verified numerically; see Appendix~\ref{app_code} for the Mathematica code.} Therefore, for every information structure $I$, we have
\[
\E_I[|x_1 - x_2|^\beta] \leq \val[B] \leq p V^\ell + (1 - p) V^h = \dfrac{1}{2} \cdot 2^{-\beta} + \dfrac{1}{2} \cdot 2^{-\beta} = 2^{-\beta}.
\]
As mentioned, this bound is attained by revealing the state fully to one of the agents and keeping the other agent completely uninformed. 
This tight bound on the $\beta$-polarization generalizes previous results obtained by \cite*{burdzy2020bounds} for $\beta = 1$ and \cite*{arieli2020feasible} for $\beta \in (0,2]$.

One may wonder how we arrived at the specific form of $\alpha$. In Appendix~\ref{app_dual}, we provide the intuition behind this choice and present Proposition~\ref{prop_sufficient_full_info_no_info_h}, which describes a class of persuasion problems---including the maximal $\beta$-polarization problem---for which a similar construction applies.
}

It is important to note that while the full-information/no-information policy induces a belief difference of $|x_1 - x_2| = {1}/{2}$ with probability one, the sender could, in principle, induce larger belief differences. However, there is a trade-off between the magnitude of the belief differences and the probability of their occurrence. For $\beta \leq \beta_{\max}$, the sender's relative benefit from inducing larger belief discrepancies does not outweigh the loss in probability. For higher values of $\beta$, alternative information structures yield higher $\beta$-polarization than the full-information/no-information policy, e.g., the information structure from Example~\ref{ex_one_state} results in a higher value for $\beta \geq 2.41$. \ed{The asymptotic behavior as $\beta \to +\infty$ has been recently studied by \cite*{cichomski2023coherent}, and the complete solution for $\beta > \beta_{\max}$ has been obtained by \cite*{kravchenko2024} using a variant of our duality approach.}

\end{example}

\ed{
\begin{example}[profit maximization by an informed retailer]\label{retailer}

Consider a retailer who is informed of the actual value of a good---e.g., an antiquarian or a two-sided platform---and aims to convince the owner that the value is low while convincing a potential buyer that it is high. For simplicity, suppose the value of the good is $1$ if $\omega = h$ and $0$ if $\omega = \ell$. In contrast to Example~\ref{ex_x_to_third}, we allow for arbitrary prior probability $p \in (0,1)$ of~$\omega = \ell$.

Agent~1 is the owner of the good and is willing to sell it for a price equal to her expected value $\E_I[\omega \mid x_1] = 1 - x_1$. Similarly, the buyer is willing to pay $\E_I[\omega \mid x_2] = 1 - x_2$. The retailer can profit from the transaction if $x_1 > x_2$, collecting an expected profit of $\E_I[(x_1 - x_2) \cdot \mathbf{1}[x_1 \geq x_2]]$.
This objective is equivalent to the $1$-polarization objective from Example~\ref{ex_x_to_third} because
\[
(x_1 - x_2) \cdot \mathbf{1}[x_1 \geq x_2] = \frac{|x_1 - x_2|}{2} - \frac{x_1 - x_2}{2}
\]
and $\E_I[x_1 - x_2] = 0$ for any information structure by the martingale property. Therefore, maximizing the retailer's profit is equivalent to maximizing $\E_I[|x_1 - x_2|]$, i.e., $1$-polarization. The 1-polarization problem has been solved by \cite*{burdzy2020bounds} for any prior. Here, we provide a simple alternative solution via duality.

We argue that the maximal $1$-polarization is achieved by an information structure that reveals the state to one receiver while keeping the other uninformed, yielding a tight upper bound on $1$-polarization of $2p(1 - p)$. To see this, consider the dual problem~\eqref{eq:dual_value_2_receivers_simplified} with $V^\ell = 2(1 - p)^2$ and $V^h = 2p^2$, and define\footnote{\ed{The heuristics behind the choice function $\alpha$ is explained in Proposition~\ref{prop_sufficient_full_info_partial_info_optimality} from Appendix~\ref{app_dual}.}}
\[
\alpha_1(x)=\alpha_2(x) = \left\{
\begin{array}{ll}
1 - \frac{2(1 - p)^2}{1 - x}, & \text{if } x \leq p, \\
\frac{2p^2}{x} - 1, & \text{if } x \geq p.
\end{array}
\right.
\]
A straightforward computation shows that the constraints in~\eqref{eq:dual_value_2_receivers_simplified} are satisfied. 
Thus
\[
\val[B] \leq p V^\ell + (1 - p) V^h = 2p(1 - p),
\]
confirming the optimality of the full-information/no-information policy. 

Returning to the retailer's problem, we conclude that it is optimal for the retailer to reveal the state to one side of the market while keeping the other uninformed. Note that, similarly to Example~\ref{ex_olygop}, the buyer and seller are naive in the sense of \cite*{eyster2005cursed} as they do not account for the information revealed by the occurrence of the transaction itself. This allows the retailer to obtain positive profit despite the no-trade theorem. 
\end{example}
}

\ed{

\section{Conclusion}\label{sec:conclude}

 There is potential for generalizing our approach---some extensions are straightforward, while others are not. We outline several directions that we find particularly interesting.

\paragraph{Infinite state spaces and uncommon priors.} For simplicity, we have assumed a finite number of states and that all receivers share a common prior. Extending our results to countably infinite or uncountable state spaces—such as $\Omega = [0,1]$ or a compact metric space—poses no conceptual difficulties but requires additional technical considerations, like ensuring that the conditional belief distributions $\mu^\omega$ are measurable in $\omega$. Similarly, our results can be extended to settings with uncommon priors among receivers. Indeed, if the true state is distributed according to $p$, but receiver $i$ believes the distribution is $p' \ne p$, then any belief $x'$ induced by an information structure can be obtained by a simple likelihood re-weighting of the belief $x$ that would be held if the receiver's prior were $p$. Therefore, feasibility and persuasion problems with uncommon priors can be reduced to the common prior case.

\paragraph{Posterior means.} Extending to infinite state spaces is straightforward if we consider the entire beliefs $x_i \in \Delta(\Omega)$. However, it is common to focus on posterior means $m_i = \E_I[\omega \mid x_i]$ when the state $\omega$ is real-valued. While our approach offers some insights for such settings (see Example~\ref{ex_one_dim_state}), we are not aware of any simple characterization of feasible distributions of $(m_1, \ldots, m_n)$, whether conditional or unconditional. Although a version of Theorem~\ref{th_feasibility} applies to $(m_1, \ldots, m_n)$---correlation  of feasible marginal distributions $m_i$ conditional on the realized state can be arbitrary---to our knowledge, there is no characterization of feasible one-agent marginals analogous to Observation~\ref{obs_one_receiver_feasibility}. This missing ingredient is crucial for making an analog of Theorem~\ref{th_feasibility} useful in this context.

\paragraph{Algorithmic implications.} Representing the persuasion problem as a maximization over feasible conditional belief distributions leads to primal and dual linear programming formulations. While we have not explored this direction in detail, both formulations appear promising from a numerical perspective. Although the optimal distributions may have uncountable support \citep*{cichomski2025existence}, there always exists an approximately optimal feasible conditional distribution supported on a sufficiently fine grid. On this grid, both the primal and dual problems become finite-dimensional and can be solved using standard LP solvers. Furthermore, we anticipate that the dual problem may offer computational advantages. Indeed, in the primal problem, we maximize a linear objective over feasible distributions. By focusing on conditional distributions, we increase the dimension by a factor of $|\Omega|$ (since we have $|\Omega|$ distributions instead of one), resulting in a linear increase with the number of states. However, the dimension of the unconditional problem can already be prohibitive unless the number of receivers $|N|$ is small. Indeed, when each receiver's belief space is discretized into $D$ points, the set of all unconditional distributions lies in an $D^{|N|}$-dimensional space, which grows exponentially with the number of receivers. In contrast, the dual problem involves approximately $|\Omega| \cdot D \cdot |N|$ variables, making it potentially computationally tractable even for large numbers of receivers.

\paragraph{Sets of belief distributions with a simple structure of extreme points.} Certain classes of persuasion problems are tractable because the relevant sets of belief distributions have simple extreme points—for example, in the single-receiver model, mean-measurable, and quantile-measurable persuasion problems \citep*{kamenica2011bayesian,kleiner2021extreme,arieli2023optimal,yang2024monotone}. Our analysis of one-state persuasion (Section~\ref{sect_onestate}) adds to this list by identifying a projection of the set of all feasible conditional distributions that has simple extreme points. This projection is particularly intriguing because, prior to projection, the set can have complex extreme distributions with uncountable support. Finding other economically relevant projections with simple extreme points can bring tractability to new classes of multi-receiver problems.

}

\bibliography{main}

\appendix

\ed{
\section{Convexity and Closedness of the Set of  Feasible Conditional Distributions}\label{sec_convex_closed}
}

\ed{
\begin{lemma}\label{lm_closed_compact}
   The set of all feasible conditional distributions  $(\mu^\omega)_{\omega\in \Omega}$ is a convex subset of $\left(\Delta\left(\Delta(\Omega)^N\right)\right)^\Omega  $ closed in the topology of weak convergence.
\end{lemma}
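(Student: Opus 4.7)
The plan is to reduce the question to the single-receiver characterization and then observe that the defining constraints are linear and weakly continuous. By Theorem~\ref{th_feasibility}, a family $(\mu^\omega)_{\omega\in\Omega}$ is feasible if and only if, for each receiver $i\in N$, the one-receiver marginals $(\mu_i^\omega)_{\omega\in\Omega}$ are feasible in the single-receiver problem. The marginalization map $\mu^\omega \mapsto \mu_i^\omega$ on $\Delta\!\left(\Delta(\Omega)^N\right)$ is linear, and it is continuous in the topology of weak convergence because the projection $\Delta(\Omega)^N \to \Delta(\Omega)$ onto the $i$-th coordinate is continuous. Hence it suffices to show that, for each $i$, the set of feasible single-receiver families is convex and weakly closed, after which the feasible set of $(\mu^\omega)_{\omega\in\Omega}$ arises as the intersection of preimages of such sets.

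For the single-receiver step I would invoke the alternative criterion noted after Observation~\ref{obs_one_receiver_feasibility}, namely that $(\lambda^\omega)_{\omega\in\Omega}$ is feasible if and only if, as measures on $\Delta(\Omega)$,
\begin{equation*}
\tfrac{x(\omega)}{p(\omega)}\,\lambda^{\omega'}(\dd x) \;=\; \tfrac{x(\omega')}{p(\omega')}\,\lambda^{\omega}(\dd x)\qquad \text{for all } \omega,\omega'\in\Omega.
\end{equation*}
Since $\Delta(\Omega)$ is a compact metric space, these measure equalities are equivalent to the family of scalar identities
\begin{equation*}
\tfrac{1}{p(\omega)}\int_{\Delta(\Omega)} x(\omega)\,f(x)\,\dd\lambda^{\omega'}(x) \;=\; \tfrac{1}{p(\omega')}\int_{\Delta(\Omega)} x(\omega')\,f(x)\,\dd\lambda^{\omega}(x),
\end{equation*}
as $f$ ranges over continuous bounded functions on $\Delta(\Omega)$. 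Each identity is linear in $(\lambda^\omega)_\omega$, which yields convexity. Closedness follows because the functional $\lambda\mapsto \int x(\omega)f(x)\,\dd\lambda(x)$ is weakly continuous, the integrand $x\mapsto x(\omega)f(x)$ being continuous and bounded on the compact simplex $\Delta(\Omega)$.

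Combining the two steps, the feasibility set for $(\mu^\omega)_{\omega\in\Omega}$ is an intersection, over $i\in N$, of preimages under the continuous linear marginalization maps of closed convex sets, so it is itself closed and convex. There is no genuine obstacle beyond verifying that the constraints derived from Observation~\ref{obs_one_receiver_feasibility} are both linear in the measures and continuous against bounded continuous test functions; this verification uses only compactness of $\Delta(\Omega)$ and the standard characterization of weak convergence.
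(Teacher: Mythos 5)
Your proof is correct and takes essentially the same approach as the paper: both reduce to one-receiver marginals via Theorem~\ref{th_feasibility} and then express single-receiver feasibility as a family of constraints that are linear in the measures and involve integrals of continuous bounded functions, giving convexity and weak closedness. The only cosmetic difference is that you use the pairwise criterion $\frac{x(\omega)}{p(\omega)}\lambda^{\omega'}=\frac{x(\omega')}{p(\omega')}\lambda^{\omega}$ while the paper substitutes $\lambda_i=\sum_{\omega'}p(\omega')\mu_i^{\omega'}$ and works with one family of identities; these are equivalent linearizations of Observation~\ref{obs_one_receiver_feasibility}.
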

}
\begin{proof}
 \ed{By Theorem~\ref{th_feasibility}, $(\mu^\omega)_{\omega\in \Omega}$ are feasible conditional distributions if and only if the marginals $(\mu_i^\omega)_{\omega\in \Omega}$ are feasible for each receiver~$i$. According to Observation~\ref{obs_one_receiver_feasibility}, $(\mu_i^\omega)_{\omega\in \Omega}$ are feasible if there is $\lambda_i$ with mean $p$ such that $\dd \mu_i^\omega=\frac{x(\omega)}{p(\omega)}\dd\lambda_i$.} Multiplying this identity by $p(\omega)$ and summing over all states, we conclude that 
 $\lambda_i$ can be expressed as $\sum_{\omega'} p(\omega')\mu_i^{\omega'}$. We conclude that
 $(\mu_i^\omega)_{\omega\in \Omega}$ is feasible if and only if 
$$p(\omega)\, \dd\mu_i^\omega=x(\omega)\ \dd\left(\sum_{\omega'\in\Omega}p(\omega')\mu_i^{\omega'}\right)$$
or, equivalently, in the integrated form:
	\begin{equation}\label{eq_integrated_admissibility}
	p(\omega)\cdot \int_{\Delta(\Omega)\times\ldots\times \Delta(\Omega)} \psi(x_i)\dd \mu^\omega(x_1,\ldots,x_n)-\int_{\Delta(\Omega)}x_i(\omega)\cdot \psi(x_i)\left(\sum_{\omega'\in\Omega}p(\omega') \dd\mu^{\omega'}(x_1,\ldots,x_n)\right)=0
	\end{equation}
	for all continuous functions $\psi:\, \Delta(\Omega)\to \R$.  Since this condition is linear in $(\mu^\omega)_{\omega\in\Omega}$, a convex combination of feasible distributions is also feasible. Since the integrands are continuous functions, the weak limit of a sequence of distributions satisfying the conditions also satisfies them. We get closedness.
\end{proof}

\ed{\section{The Existence of an Optimal Information Structure}\label{app_existence_primal}}
\ed{
\begin{lemma}
In a first-order persuasion problem $B=(\Omega,p,N,v)$ with upper semicontinuous~$v^\omega$ in each state $\omega$, the senders's objective $\E_I[v^\omega(x_1,\ldots, x_n)]$ attains its maximum at some information structure $I$.
\end{lemma}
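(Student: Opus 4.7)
The plan is to reformulate the optimization as maximization of an upper semicontinuous functional over a compact set, so existence of a maximizer follows from the extreme value theorem.

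First, I would invoke the representation in equation~\eqref{eq_value_as_sup_over_feasible} to rewrite the sender's problem as
\[
\val[B]=\sup_{(\mu^\omega)_{\omega\in\Omega}\in\mathcal{F}} J\big((\mu^\omega)_{\omega\in\Omega}\big),\qquad J\big((\mu^\omega)_{\omega\in\Omega}\big):=\sum_{\omega\in\Omega}p(\omega)\int_{\Delta(\Omega)^N} v^\omega\,\dd\mu^\omega,
\]
where $\mathcal{F}\subset\bigl(\Delta(\Delta(\Omega)^N)\bigr)^\Omega$ denotes the set of feasible conditional distributions. So it suffices to show that $J$ attains its supremum on $\mathcal{F}$, and then any information structure inducing the maximizing $(\mu^\omega)_\omega$ (which exists by Theorem~\ref{th_feasibility}) is optimal.

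Next, I would establish compactness of $\mathcal{F}$ in the product weak topology. Since $\Omega$ is finite, $\Delta(\Omega)$ is a compact metric space; hence so is $\Delta(\Omega)^N$. By Prokhorov's theorem, $\Delta(\Delta(\Omega)^N)$ is compact and metrizable in the weak topology, and the finite product $\bigl(\Delta(\Delta(\Omega)^N)\bigr)^\Omega$ inherits compactness. Lemma~\ref{lm_closed_compact} shows that $\mathcal{F}$ is closed in this product topology, and a closed subset of a compact space is compact.

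Then I would check that $J$ is upper semicontinuous on $\mathcal{F}$. Because $v^\omega$ is upper semicontinuous on the compact set $\Delta(\Omega)^N$, it attains its maximum there and is in particular bounded above. The standard portmanteau / Alexandroff theorem then yields
\[
\limsup_{k\to\infty}\int v^\omega\,\dd\mu^\omega_k \;\leq\; \int v^\omega\,\dd\mu^\omega\qquad \text{whenever }\mu^\omega_k\rightharpoonup\mu^\omega,
\]
so each summand in $J$ is upper semicontinuous in $\mu^\omega$, and $J$ itself is upper semicontinuous as a finite positive linear combination of such summands. (If $v^\omega$ fails to be bounded below, the integrals may take the value $-\infty$, but u.s.c. still holds, and if $J\equiv-\infty$ the maximizer claim is trivial.) An upper semicontinuous function on a nonempty compact set attains its maximum, giving the existence of an optimal $(\mu^\omega)_\omega\in\mathcal{F}$ and thus, via Theorem~\ref{th_feasibility}, of an optimal information structure.

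The only delicate point is confirming the upper semicontinuity of $\mu\mapsto\int v^\omega\,\dd\mu$ in weak convergence for merely u.s.c.\ (not continuous) integrands; this is a standard fact, but it is where the hypothesis on $v^\omega$ is used essentially. Everything else is routine once compactness of $\mathcal{F}$ is in hand from Lemma~\ref{lm_closed_compact}.
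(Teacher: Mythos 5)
Your proposal is correct and follows essentially the same route as the paper's proof: reformulate via equation~\eqref{eq_value_as_sup_over_feasible} as maximization over feasible conditional distributions, use Lemma~\ref{lm_closed_compact} together with weak compactness of probability measures on a compact space to get compactness of $\mathcal{F}$, invoke upper semicontinuity of $\mu\mapsto\int v^\omega\,\dd\mu$ for u.s.c.\ integrands under weak convergence (the paper cites Lemma~4.3 of Villani for this), and conclude by the extreme value theorem. Your explicit mention of Prokhorov and the $J\equiv-\infty$ edge case are harmless refinements but do not change the argument.
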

}
\begin{proof}
\ed{Maximization over information structures can be replaced with maximization over feasible conditional belief distributions. The sender's problem becomes to maximize
\begin{equation}\label{eq_value_as_maximization_over_joint_distr}
    \sum_{\omega\in \Omega} \ p(\omega)\cdot \int_{\Delta(\Omega)\times\ldots \times\Delta(\Omega)} v^\omega\,\dd\mu^\omega
\end{equation}
over feasible conditional distributions $(\mu^\omega)_{\omega\in \Omega}$. Since every such feasible $(\mu^\omega)_{\omega\in \Omega}$ is induced by some $I$, it is enough to show the existence of the optimal $(\mu^\omega)_{\omega\in \Omega}$.}

The integral of an upper semicontinuous function over a compact set is an upper semicontinuous function of the distribution in the weak topology \cite*[Lemma~4.3]{villani2009optimal}. Hence, the objective  in \eqref{eq_value_as_maximization_over_joint_distr}
is upper semicontinuous. An upper semicontinuous function on a compact set attains its maximum. 
By Lemma~\ref{lm_closed_compact}, the set of all feasible conditional belief distributions
is a closed subset of $\big(\Delta\big(\Delta(\Omega)^N\big)\big)^\Omega$ and, hence, is compact since the set of all probability distributions over a compact set is compact in the weak topology. We conclude that the maximum in \eqref{eq_value_as_maximization_over_joint_distr}  is attained. Thus both the optimal information structure and the optimal conditional distributions exist.

\end{proof}

\section{Proof of Proposition~\ref{prop_dual value representation}}\label{app_proof_of_duality}

To prove the dual representation for the value~\eqref{eq:dual_value_explicit} of the persuasion problem $B$, we introduce an auxiliary zero-sum game $G$ such that the $\max\inf$-value of $G$ coincides with the value of $B$ and then exchange $\max$ and $\inf$ via Sion's minimax theorem.

By Corollary~\ref{cor_persuasion_as_transport}, to get the value of $B$, it is enough to maximize 
	\begin{equation}\label{eq_maximizers_objective}
	\sum_{\omega\in \Omega} p(\omega)\cdot\int_{\Delta(\Omega)\times\ldots\times\Delta(\Omega)} v^\omega(x_1,\ldots x_n)\dd\mu^\omega(x_1,\ldots,x_n)
	\end{equation}
	over a family of measures $(\mu^\omega)_{\omega\in \Omega}\in \Big(\Delta\big(\Delta(\Omega)^N\big)\Big)^\Omega$ with feasible one-agent marginals. The feasibility of marginals requires that the Radon-Nikodym derivatives of the marginals $\mu_i^\omega$ of $\mu^\omega$ with respect to some $\lambda_i\in \Delta(\Delta(\Omega))$ satisfy $\frac{\dd \mu_i^\omega}{\dd \lambda_i}(x_i)=\frac{x_i(\omega)}{p(\omega)}$.
	As in the proof of Lemma~\ref{lm_closed_compact}, from this equation, we conclude that	$\lambda_i=\sum_{\omega'\in\Omega}p(\omega')\cdot \mu_i^{\omega'}$ and, hence, the feasibility of the marginals is equivalent to the identity
	$$0=p(\omega)\,\dd\mu_i^\omega(x_i)-x_i(\omega)\cdot\sum_{\omega'\in\Omega}p(\omega')\,\dd\mu_i^{\omega'}(x_i),$$
	which can be rewritten in the integrated form as follows:
	\begin{align}\label{eq_integrated_admissibility2}
	0&=\int_{\Delta(\Omega)\times\ldots\times \Delta(\Omega)} \psi_i^\omega(x_i)\dd \mu^\omega(x_1,\ldots,x_n)\\
 \notag
 &-\int_{\Delta(\Omega)\times\ldots\times \Delta(\Omega)}x_i(\omega)\cdot \psi_i^\omega(x_i)\left(\sum_{\omega'\in\Omega}p(\omega') \dd\mu^{\omega'}(x_1,\ldots,x_n)\right)
	\end{align}
	for all continuous functions $\psi_i^\omega:\, \Delta(\Omega)\to \R$.

	We now define the game $G$. In this game, the maximizer aims to maximize~\eqref{eq_maximizers_objective}, and we allow her to pick an arbitrary collection of probability measures $(\mu^\omega)_{\omega\in \Omega}$, which may have non-feasible marginals.
	However, the minimizer can penalize her for violation of the identity \eqref{eq_integrated_admissibility2} by selecting a family of continuous functions $(\psi_i^\omega)_{i\in N,\omega\in \Omega}$.
	The payoff function is defined as follows
	\begin{align*}
		G\Big[\big(\mu^\omega\big)_{\omega\in \Omega},\big(\psi_i^\omega\big)_{i\in N,\omega\in \Omega}\Big]&= 
		\sum_{\omega\in \Omega}\left( p(\omega)\cdot\int_{\Delta(\Omega)\times\ldots\times\Delta(\Omega)} v^\omega(x_1,\ldots x_n)\dd\mu^\omega(x_1,\ldots,x_n)\right.\\
		&- \sum_{i\in N}\left(p(\omega)\cdot \int_{\Delta(\Omega)\times\ldots\times \Delta(\Omega)} \psi_i^\omega(x_i)\,\dd  \mu^\omega(x_1,\ldots,x_n)\right.\\
		&\left.-\int_{\Delta(\Omega)\times\ldots\times\Delta(\Omega)}x_i(\omega)\cdot \psi_i^\omega(x_i)\left(\sum_{\omega'\in\Omega}p(\omega') \dd\mu^{\omega'}(x_1,\ldots,x_n)\right)\right)	
	\end{align*}
	If the maximizer selects $(\mu^\omega)_{\omega\in \Omega}$ with feasible marginals, then each of the $N$ terms in the sum over $i\in N$ equals zero. On the other hand, if the feasibility constraint on marginals is violated, the minimizer can arbitrarily lower the payoff by choosing $(\psi_i^\omega)_{i\in N,\omega\in \Omega}$. Therefore, $$\val[B]=\sup_{(\mu^\omega)_{\omega\in \Omega}}\inf_{(\psi_i^\omega)_{i\in N,\omega\in \Omega}}G\Big[\big(\mu^\omega\big)_{\omega\in \Omega},\big(\psi_i^\omega\big)_{i\in N,\omega\in \Omega}\Big].$$

The assumptions of Sion's minimax theorem\footnote{Sion's theorem claims that $\sup_{x\in X} \inf_{y\in Y} G(x,y)=\inf_{y\in Y} \sup_{x\in X}  G(x,y)$  if $X$ and $Y$ are convex subsets of linear topological spaces, at least one of them is compact, and $G$ is an upper semicontinuous quasiconcave function of the first argument and lower semicontinuous quasiconvex of the second. See~\cite*{mertens2015repeated}, Chapter~I.1.}	 are satisfied by $G\Big[\big(\mu^\omega\big)_{\omega\in \Omega},\big(\psi_i^\omega\big)_{i\in N,\omega\in \Omega}\Big]$ and we can exchange $\sup_{(\mu^\omega)_{\omega\in \Omega}}$ and $\inf_{(\psi_i^\omega)_{i\in N,\omega\in \Omega}}$. Indeed, the set of probability measures on a compact metric space is compact in the weak topology, $G$ is an affine function of strategies of each of the players (and thus both convex and concave), it is an upper semicontinuous function of $(\mu^\omega)_{\omega\in \Omega}$ in the weak topology (see Lemma~4.3 in Section~4 of \cite*{villani2009optimal}) and a continuous function of $(\psi_i^\omega)_{i\in N,\omega\in \Omega}$ in the topology induced by the $\sup$-norm on continuous functions. 
	We obtain 
	$$\val[B]=\inf_{(\psi_i^\omega)_{i\in N,\omega\in \Omega}}\sup_{(\mu^\omega)_{\omega\in \Omega}}G\Big[\big(\mu^\omega\big)_{\omega\in \Omega},\big(\psi_i^\omega\big)_{i\in N,\omega\in \Omega}\Big].$$ 
	
	For a compact metric space $X$, we have $\max_{\nu\in\Delta(X)}\int h(x)d\nu(x)=\max_{x\in X}h(x)$ for any upper semicontinuous function $h$ on $X$; in particular, both maxima are attained. Hence 
	the internal unconstrained maximization over $(\mu^\omega)_{\omega\in \Omega}$  leads to the pointwise maxima of the corresponding integrands, and we get
			\begin{align*}%\label{eq_dual_unconstrained}
			\val[B]=\inf_{(\psi_i^\omega)_{i\in N,\omega\in \Omega}} \sum_{\omega\in\Omega} p(\omega)\cdot \max_{(x_i)_{i\in N}\in \Delta(\Omega)^N}
			\left(v^\omega(x_1,\ldots,x_n)- \sum_{i\in N}\left(\psi_i^\omega(x_i)-\sum_{\omega'\in\Omega} x_i(\omega')\cdot \psi_i^{\omega'}(x_i) \right)\right). 
		\end{align*}
	For a family of functions $(\psi_i^\omega)_{i\in N,\omega\in \Omega}$ define a new family $(\varphi_i^\omega)_{i\in N,\omega\in \Omega}$ by the formula.
	\begin{equation}\label{eq_psi_to_phi}
	\varphi_i^\omega(x_i)=\psi_i^\omega(x_i)-\sum_{\omega'\in\Omega} x_i(\omega')\cdot \psi_i^{\omega'}(x_i),\qquad x_i\in\Delta(\Omega).
	\end{equation}
 The new family satisfies an additional condition
 \begin{equation}\label{eq_budget_balance}
 \sum_{\omega\in\Omega} x_i(\omega)\varphi_i^\omega(x_i)= 0,\qquad x_i\in\Delta(\Omega),
 \end{equation}
 and gives the same value to the objective as the original one. We obtain the following:
	\begin{align}\label{eq:dual_value_explicit_theta}
			\val[B]=&\inf_{{\footnotesize \begin{array}{c} 
			\mbox{continuous}\\
			\mbox{$(\varphi_i^\omega)_{i\in N,\omega\in \Omega}$ such that}\\		\mbox{$\sum_{\omega\in\Omega}x_i(\omega) \varphi_i^\omega(x_i)\equiv 0$}
				\end{array}}} \sum_{\omega\in\Omega} p(\omega)\cdot \max_{(x_i)_{i\in N}\in \Delta(\Omega)^N}
			\left(v^\omega(x_1,\ldots,x_n)-  \sum_{i\in N} \varphi_i^\omega(x_i)\right).
		\end{align}
	Finally, 
	 we pick arbitrary
	$V^\omega\geq \max_{(x_i)_{i\in N}\in \Delta(\Omega)^N}
			\left(v^\omega(x_1,\ldots,x_n)-  \sum_{i\in N} \varphi_i^\omega(x_i)\right)$
	and obtain
	\begin{equation}\label{eq_duality_duplicated_in_appendix}
	\val[B]=\inf_{{\footnotesize \begin{array}{c}  \mbox{{$V^\omega\in\R$, continuous  $\varphi_i^\omega$ on $\Delta(\Omega)$ such that}}\\
			\mbox{$v^\omega(x_1,\ldots,x_n)\leq V^\omega+\sum_{i\in N}\varphi_i^\omega(x_i)$}\\
					\mbox{and $\sum_{\omega\in\Omega}x_i(\omega) \varphi_i^\omega(x_i)=0$} 	\end{array}}}\sum_{\omega\in\Omega} p(\omega)V^\omega.
	\end{equation}	
	which coincides with the desired formula from the statement of Proposition~\ref{prop_dual value representation}.

\medskip

\paragraph{The existence of optima.}
 Here we demonstrate that for continuous utility functions $v^\omega$ the infimum in~\eqref{eq_duality_duplicated_in_appendix} is attained, i.e., optimal $(V^\omega,\,\varphi_i^\omega)_{i\in N,\omega\in \Omega}$ exist.

The idea is to show that we can restrict the minimization to some compact set and then extract a subsequence converging to an optimum.
The restrictions that we can impose on $(V^\omega)_{\omega\in \Omega}$ and $\varphi_i^\omega$ are presented in the following lemma. To formulate it, 
we define the norm of the utility function by
$$\|v\|_\infty=\max_{\omega\in\Omega,\, x_1,\ldots,x_n\in\Delta(\Omega)} \big|v^\omega(x_1,\ldots,x_n)\big|$$ 
and its modulus of continuity, by
$$D_v(\varepsilon)=\hskip -1cm \max_{\footnotesize{\begin{array}{c}\omega\in\Omega,i\in N\\ 
x_1,\ldots,x_{i-1},x_{i+1},\ldots,x_n\in\Delta(\Omega)\\
x,x'\in\Delta(\Omega)\,:\,
|x-x'|\leq\varepsilon
\end{array}}}\hskip -1cm\Big|v^\omega\big(x_1,\ldots,x_{i-1},x,x_{i+1},\ldots,x_n\big)-v^\omega\big(x_1,\ldots,x_{i-1},x',x_{i+1},\ldots,x_n\big)\Big|,$$
where $|x-x'|=\sum_{\omega\in\Omega} |x(\omega)-x(\omega')|$ is the total variation distance between $x$ and $x'\in\Delta(\Omega)$.
\begin{lemma}\label{lm_regularity_estimates}
Restricting the minimization in~\eqref{eq_duality_duplicated_in_appendix} to $(V^\omega,\,\varphi_i^\omega)_{i\in N,\omega\in\Omega}$ such that
\begin{align}\label{eq_bound_V_omega}
-\|v\|_\infty  &\leq \Big|V^\omega\Big|\leq  \frac{2-p(\omega)}{p(\omega)}\cdot \|v\|_\infty,\\
\label{eq_upper_lower_phi}
	 	 -\frac{2n}{p(\omega)}\cdot \|v\|_\infty&\leq\Big|{{\varphi}}_i^\omega(x)\Big|\leq \frac{2}{p(\omega)}\cdot \|v\|_\infty,\qquad x\in\Delta(\Omega),\\
\label{eq_modulus_continuity_phi}
	  \Big|{{\varphi}}_i^\omega(x)-{{\varphi}}_i^\omega(x')\Big|&\leq 2\cdot D_v\Big(|x-x'|\Big)+\frac{2n}{\min_{\omega'\in\Omega}p(\omega')}\cdot\|v\|_\infty\cdot |x-x'|,\qquad x,\,x'\in\Delta(\Omega),
	\end{align}
does not affect the optimal value.
\end{lemma}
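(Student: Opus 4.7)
The plan is to handle the three bounds in stages: bounds on $V^\omega$ should follow almost for free from feasibility; bounds on $\varphi_i^\omega$ should combine feasibility with an orthogonality argument; and the modulus of continuity should follow from a ``tightening'' construction. The cornerstone will be the identity $\varphi_i^\omega(e_\omega) = 0$, which I obtain by plugging the point mass $x = e_\omega$ into the orthogonality condition $\sum_{\omega'} x(\omega') \varphi_i^{\omega'}(x) = 0$.

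For~\eqref{eq_bound_V_omega}, I first substitute $x_1 = \cdots = x_n = e_\omega$ into the main inequality and apply the cornerstone identity to obtain $V^\omega \geq v^\omega(e_\omega,\ldots,e_\omega) \geq -\|v\|_\infty$. For the upper bound, the trivial dual choice $V^\omega = \|v\|_\infty$ with $\varphi_i^\omega \equiv 0$ shows $\val[B] \leq \|v\|_\infty$, so I may restrict the infimum to $(V,\varphi)$ with objective at most $\|v\|_\infty + \varepsilon$; combining this restriction with the just-derived lower bound on the remaining $V^{\omega'}$ yields $p(\omega) V^\omega \leq (2-p(\omega))\|v\|_\infty + \varepsilon$, and sending $\varepsilon \to 0$ gives the claim. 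For the lower bound on $\varphi_i^\omega$ in~\eqref{eq_upper_lower_phi}, setting $x_j = e_\omega$ for all $j \neq i$ and using the cornerstone identity gives $\varphi_i^\omega(x) \geq v^\omega(\ldots) - V^\omega \geq -\frac{2}{p(\omega)}\|v\|_\infty$, which is even tighter than the claimed $-\frac{2n}{p(\omega)}\|v\|_\infty$.

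The \emph{main obstacle} will be the upper bound on $\varphi_i^\omega$. Direct use of orthogonality combined with the lower bound only yields $x(\omega)\varphi_i^\omega(x) \leq \sum_{\omega'\neq\omega} x(\omega')\frac{2}{p(\omega')}\|v\|_\infty$, which fails to provide a uniform bound as $x(\omega) \to 0$. My plan is to modify any feasible $(V,\varphi)$ by clipping each $\varphi_i^\omega(x)$ from above at $\frac{2}{p(\omega)}\|v\|_\infty$ and redistributing the excess across the remaining states so as to preserve $\sum_{\omega} x(\omega)\varphi_i^\omega(x) = 0$. The delicate part is that this clipping-and-redistribution must simultaneously preserve the lower bound on $\varphi_i^\omega$ and the main feasibility constraint $v^\omega \leq V^\omega + \sum_i \varphi_i^\omega(x_i)$; I anticipate needing an $\varepsilon$-increase in $V^\omega$ to restore the latter, which will vanish when taking the infimum.

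Finally, for~\eqref{eq_modulus_continuity_phi}, I use a ``tightening'' argument: replace $\varphi_i^\omega(x)$ by the smallest value compatible with the main inequality, namely $\max_{x_{-i}}\bigl[v^\omega(x_{-i},x) - V^\omega - \sum_{j\neq i}\varphi_j^\omega(x_j)\bigr]$, together with an orthogonality-restoring correction. The ``$\max$-of-$v$'' part inherits the modulus $D_v$ of $v$ with a factor of two accounting for both directions of comparison, while the orthogonality-restoring correction---bounded in magnitude by the previously established $\|v\|_\infty/p(\omega)$ estimates---contributes the second, linear-in-$|x-x'|$ term with constant $\frac{2n}{\min_{\omega'}p(\omega')}\|v\|_\infty$ as required.
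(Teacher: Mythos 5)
Your bounds on $V^\omega$ and your feasibility-plus-orthogonality derivation of the lower bound on $\varphi_i^\omega$ are essentially the same as the paper's (up to a sign typo in the paper), and your ``tightening'' replacement for the modulus of continuity is exactly the paper's construction $\widetilde{\varphi}_k^\omega(x)=\max_{x_{-k}}\bigl(v^\omega-\sum_{j\neq k}\varphi_j^\omega(x_j)\bigr)-V^\omega$, followed by the orthogonality-restoring correction $\widetilde{\widetilde{\varphi}}_k^\omega(x)=\widetilde{\varphi}_k^\omega(x)-\sum_{\omega'}x(\omega')\widetilde{\varphi}_k^{\omega'}(x)$.

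The genuine gap is the separate clipping-and-redistribution step you propose for the upper bound on $\varphi_i^\omega$. That scheme is problematic in several independent ways. First, decreasing $\varphi_i^\omega(x)$ and compensating so that $\sum_\omega x(\omega)\varphi_i^\omega(x)=0$ is preserved forces some $\varphi_i^{\omega'}(x)$ to \emph{increase}, potentially above their own threshold, so the procedure does not terminate without a more delicate accounting. Second, the hoped-for $\varepsilon$-increase in $V^\omega$ is not controlled: feasibility can be violated by the full size of the clipped amount, which your own example ($\varphi_i^\omega$ unbounded along a ray) shows is not a priori small. Third, and most importantly, clipping at $\frac{2}{p(\omega)}\|v\|_\infty$ may push $\varphi_i^\omega(x)$ below the minimal value $\widetilde{\varphi}_i^\omega(x)$ needed for feasibility at that point; the correct threshold that always remains feasible is of order $\frac{2n}{p(\omega)}\|v\|_\infty$, not $\frac{2}{p(\omega)}\|v\|_\infty$.

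The fix is that no separate step is needed: the tightening you already perform for~\eqref{eq_modulus_continuity_phi} also delivers the upper bound. Indeed, plugging the lower bounds $\varphi_j^\omega\geq-\frac{2}{p(\omega)}\|v\|_\infty$ (for $j\neq k$), $v^\omega\leq\|v\|_\infty$, and $V^\omega\geq-\|v\|_\infty$ into the definition of $\widetilde{\varphi}_k^\omega$ gives
\begin{equation*}
\widetilde{\varphi}_k^\omega(x)\;\leq\;\|v\|_\infty+(n-1)\cdot\frac{2}{p(\omega)}\|v\|_\infty+\|v\|_\infty\;\leq\;\frac{2n}{p(\omega)}\|v\|_\infty,
\end{equation*}
and the correction $\widetilde{\widetilde{\varphi}}_k^\omega$ inherits a bound of the same order. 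Sequentially replacing each $\varphi_k$ by its tightened version (the original lower bound persists, since feasibility and orthogonality are maintained and the objective does not increase) produces a family satisfying all three bound types at once, without any clipping. The trade-off: the paper's route handles the upper bound and the continuity modulus in a single pass through the tightening construction, whereas your proposal introduces a second, independent modification that you then have to reconcile with the first---and that reconciliation is precisely where the argument fails.
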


We first check that this lemma implies the existence of the optimal $(V^\omega,\,\varphi_i^\omega)_{i\in N,\omega\in \Omega}$ and then prove the lemma.
Consider a sequence $(V^{\omega,t},\,\varphi_i^{\omega,t})_{i\in N,\omega\in \Omega}$ indexed by a parameter $t=1,2,\ldots$ and such that the objective in~\eqref{eq_duality_duplicated_in_appendix} converges to its optimum along this sequence, as $t$ goes to infinity.
By Lemma~\ref{lm_regularity_estimates}, we can additionally require that $(V^{\omega,t},\,\varphi_i^{\omega,t})_{i\in N,\omega\in \Omega}$ satisfy conditions~\eqref{eq_bound_V_omega}, \eqref{eq_upper_lower_phi} and~\eqref{eq_modulus_continuity_phi} for each $t$.
The set of numbers defined by~\eqref{eq_bound_V_omega} is compact as a closed bounded subset of~$\R^\Omega$. 
Functions satisfying \eqref{eq_upper_lower_phi} and~\eqref{eq_modulus_continuity_phi} are uniformly bounded and uniformly continuous and thus, by the
	Arzel\`{a}–Ascoli theorem (see~\cite*{rudin1964principles}), this class of functions compact in the topology of 
the space of continuous functions (induced by the $\sup$-norm). The product of compact sets is compact and, hence, the sequence $(V^{\omega,t},\,\varphi_i^{\omega,t})_{i\in N,\omega\in \Omega}$ belongs to a compact set. Let us extract a converging subsequence and denote its limit by $(V^\omega,\,\varphi_i^\omega)_{i\in N,\omega\in \Omega}$. The objective in~\eqref{eq_duality_duplicated_in_appendix} is continuous, and the constraints are closed. Hence, the collection $(V^\omega,\,\varphi_i^\omega)_{i\in N,\omega\in \Omega}$ gives the optimal value to the objective, satisfies the constraints, and thus is optimal.
	
	\medskip
To complete the proof of Proposition~\ref{prop_dual value representation}, it remains to prove the lemma.
\begin{proof}[Proof of Lemma~\ref{lm_regularity_estimates}.]
	For a given family  $(\varphi_i^\omega)_{i\in N,\omega\in \Omega}$ of continuous functions satisfying~\eqref{eq_budget_balance}, let $V^\omega\big[(\varphi_i^\omega)_{i\in N,\omega\in \Omega}\big]$ be the minimal value of $V^\omega$ such that $(V^\omega,\, \varphi_i^\omega)$ satisfy the constraints of~\eqref{eq_duality_duplicated_in_appendix}: 
	\begin{equation}\label{eq_M_omega_def}
V^\omega\big[(\varphi_i^\omega)_{i\in N,\omega\in \Omega}\big]=\max_{(x_i)_{i\in N}\in \Delta(\Omega^N)}
			\left(v^\omega(x_1,\ldots,x_n)- \sum_{i\in N}\varphi_i^\omega(x_i)\right).
	\end{equation}
	Without loss of generality, we can assume that $V^\omega$ in~\eqref{eq_duality_duplicated_in_appendix} is given by $V^\omega\big[(\varphi_i^\omega)_{i\in N,\omega\in \Omega}\big]$ and, hence, $V^\omega$ is determined by functions $(\varphi_i^\omega)_{i\in N,\omega\in \Omega}$, which remain the only free parameter in the minimization. In particular, to prove the bounds \eqref{eq_bound_V_omega} on $V^\omega$ it is enough to show that we can restrict minimization to $(\varphi_i^\omega)_{i\in N,\omega\in \Omega}$ such that
	\begin{equation}\label{eq_V_omega_optimal}
	    -\|v\|_\infty  \leq \Big|V^\omega\big[(\varphi_i^\omega)_{i\in N,\omega\in \Omega}\big]\Big|\leq  \frac{2-p(\omega)}{p(\omega)}\cdot \|v\|_\infty.
	\end{equation}
	Recall that $\delta_\omega\in \Delta(\Omega)$ is the point mass at the state $\omega$. Plugging $x_i=\delta_\omega$ for each $i$ into \eqref{eq_M_omega_def}, we obtain the following lower bound:
	$$V^\omega\big[(\varphi_i^\omega)_{i\in N,\omega\in \Omega}\big]\geq v^\omega\big(\delta_\omega,\ldots,\delta_\omega\big)\geq -\|v\|_\infty.$$
	Hence, the lower bound in~\eqref{eq_V_omega_optimal} holds.
	
	The optimal value of~\eqref{eq_duality_duplicated_in_appendix} cannot exceed the best value of the objective attained at the zero functions $\varphi_i^\omega$. Hence, minimization can be restricted to
	$(\varphi_i^\omega)_{i\in N,\omega\in\Omega}$
 such that 
	 $$\sum_{\omega\in \Omega} p(\omega)\cdot V^\omega\big[(\varphi_i^\omega)_{i\in N,\omega\in \Omega}\big]\leq \sum_{\omega\in \Omega} p(\omega)\cdot V^\omega\big[(0)_{i\in N,\omega\in \Omega}\big].$$
	 Since the right-hand side does not exceed $\|v\|_\infty$, we get 
	 \begin{equation}\label{eq_upper_bound_on_objective}
	  \sum_{\omega\in \Omega} p(\omega)\cdot V^\omega\big[(\varphi_i^\omega)_{i\in N,\omega\in \Omega}\big]\leq  \|v\|_\infty.
	 \end{equation}
	Changing all summands on the left-hand side of~\eqref{eq_upper_bound_on_objective} except one to their lower bounds and transferring them to the right-hand side, we get 
	\begin{equation}\label{eq_upper_V_omega_opt}
	V^\omega\big[(\varphi_i^\omega)_{i\in N,\omega\in \Omega}\big]\leq \frac{2-p(\omega)}{p(\omega)}\cdot \|v\|_\infty.
	\end{equation}
	We obtain the upper bound in \eqref{eq_V_omega_optimal}. Moreover, 
	this inequality implies an upper bound on $\varphi_i^\omega$. Indeed, let us plug $x_j=\delta_\omega$ for all receivers $j$ except $j=i$ into the objective of~\eqref{eq_M_omega_def}. The value of the objective on this input cannot exceed the optimal value and, taking into account that $\varphi_j^\omega(\delta_\omega)=0$ thanks to~\eqref{eq_budget_balance}, we deduce 
	$$v^\omega(\delta_\omega,\ldots,\delta_\omega,x_i,\delta_\omega,\ldots,\delta_\omega)+\varphi_i^\omega(x_i)\leq V^\omega\big[(\varphi_i^\omega)_{i\in N,\omega\in \Omega}\big].$$
	Consequently,
	\begin{equation}\label{eq_upper_bound_varphi}
	\varphi_i^\omega(x)\leq \frac{2}{p(\omega)}\cdot\|v\|_\infty,
	\end{equation}
	i.e, the upper bound in~\eqref{eq_upper_lower_phi} holds.
	\medskip

	To summarize: without loss of generality, the minimization in~\eqref{eq_duality_duplicated_in_appendix} can be restricted to families of continuous functions $(\varphi_i^\omega)_{i\in N,\omega\in \Omega}$  satisfying~\eqref{eq_budget_balance} and~\eqref{eq_upper_bound_on_objective}; the upper bound~\eqref{eq_upper_bound_varphi} is satisfied for all such families automatically, as well as the bounds~\eqref{eq_V_omega_optimal}.
	Now, we consider such a family, fix a receiver $k\in N$ and show that we can replace the functions $(\varphi_k^\omega)_{\omega\in\Omega}$ by $(\widetilde{\widetilde{\varphi}}_k^\omega)_{\omega\in\Omega}$ keeping the rest of the family unchanged so that the new family satisfies the same requirements, the value of the objective remains the same or improves, and most importantly, the functions $(\widetilde{\widetilde{\varphi}}_k^\omega)_{\omega\in\Omega}$ additionally satisfy bounds \eqref{eq_upper_lower_phi} and \eqref{eq_modulus_continuity_phi}.
	Define $\widetilde{\varphi}_k^\omega$ by
	$$\widetilde{\varphi}_k^\omega(x)=\max_{x_1,\ldots, x_{k-1},x_{k+1},\ldots,x_n} \left(v^\omega(x_1,\ldots, x_{k-1},x,x_{k+1},\ldots,x_n)-\sum_{i\in N\setminus\{k\}}\varphi_i^\omega(x_i)\right)-V^\omega\big[(\varphi_i^\omega)_{i\in N,\omega\in \Omega}\big].$$
	From the definition, we see that 
	$$V^\omega\Big[\Big((\widetilde{\varphi}_k^\omega)_{\omega\in\Omega},(\varphi_i^\omega)_{i\in N\setminus\{k\},\omega\in \Omega}\Big)\Big]=V^\omega\big[(\varphi_i^\omega)_{i\in N,\omega\in \Omega}\big]$$
	and, moreover, the functions $\widetilde{\varphi}_k^\omega$ are pointwise minimal among all the functions with this property. Hence, ${\varphi}_k^\omega\geq\widetilde{\varphi}_k^\omega$.

	The functions $(\widetilde{\varphi}_k^\omega)_{\omega\in\Omega}$ may violate the requirement~\eqref{eq_budget_balance}. To enforce this requirement, we set
$$\widetilde{\widetilde{\varphi}}_k^\omega(x)={\widetilde{\varphi}}_k^\omega(x)-\sum_{\omega'\in\Omega}x(\omega')\cdot {\widetilde{\varphi}}_k^{\omega'}(x).$$
	The functions $\widetilde{\widetilde{\varphi}}_k^\omega$ satisfy~\eqref{eq_budget_balance}. Since ${\varphi}_k^\omega\geq\widetilde{\varphi}_k^\omega$,
	$$\sum_{\omega'\in\Omega}x(\omega')\cdot {\widetilde{\varphi}}_k^\omega(x)\leq \sum_{\omega'\in\Omega}x(\omega')\cdot {{\varphi}}_k^\omega(x)=0,$$
	and we see that 
	$\widetilde{\widetilde{\varphi}}_k^\omega\geq {\widetilde{\varphi}}_k^\omega$. Therefore, 
	$$V^\omega\Big[\Big((\widetilde{\widetilde{\varphi}}_k^\omega)_{\omega\in\Omega},(\varphi_i^\omega)_{i\in N\setminus\{k\},\omega\in \Omega}\Big)\Big]\leq
	V^\omega\Big[\Big((\widetilde{\varphi}_k^\omega)_{\omega\in\Omega},(\varphi_i^\omega)_{i\in N\setminus\{k\},\omega\in \Omega}\Big)\Big],$$
	and so	replacing ${{\varphi}}_k^\omega$ by $\widetilde{\widetilde{\varphi}}_k^\omega$ can only improve the objective in~\eqref{eq:dual_value_explicit_theta}.
	
	We conclude that the constructed family satisfies the conditions~\eqref{eq_budget_balance} and~\eqref{eq_upper_bound_on_objective} (hence, the upper bound~\eqref{eq_upper_bound_varphi}  also holds) and the value of the objective remains the same or improves.
	Now let us check that  $\widetilde{\widetilde{\varphi}}_k^\omega$ satisfies the lower bound in \eqref{eq_upper_lower_phi} and the bound \eqref{eq_modulus_continuity_phi}. 
	
	 From the definition of ${\widetilde{\varphi}}_k^\omega$ the bounds \eqref{eq_upper_V_omega_opt} and~\eqref{eq_upper_bound_varphi}, we obtain
	$$ -\frac{2n}{p(\omega)}\cdot \|v\|_\infty\leq {\widetilde{\varphi}}_k^\omega(x).$$
	Since $\widetilde{\widetilde{\varphi}}_k^\omega\geq {\widetilde{\varphi}}_k^\omega$, the same lower bound holds for $\widetilde{\widetilde{\varphi}}_k^\omega$. Thus, $\widetilde{\widetilde{\varphi}}_k^\omega$ satisfies both bounds of \eqref{eq_upper_lower_phi}.

	To prove~\eqref{eq_modulus_continuity_phi}, we estimate the difference $\Big|{\widetilde{\varphi}}_k^\omega(x)-{\widetilde{\varphi}}_k^\omega(x')\Big|$ first. By the definition of $D_v(\varepsilon)$,
	$$v^\omega\big(x_1,\ldots,x_{i-1},x,x_{i+1},\ldots,x_n\big)+D_v\Big(|x-x'|\Big)\geq v^\omega\big(x_1,\ldots,x_{i-1},x',x_{i+1},\ldots,x_n\big)$$
	for any $x, x'\in\Delta(\Omega)$ and all $x_1,\ldots, x_{k-1},x_{k+1},\ldots, x_n\in \Delta(\Omega)$. Subtracting $\sum_{i\in N\setminus\{k\}}\varphi_i^\omega(x_i)+V^\omega\big[(\varphi_i^\omega)_{i\in N,\omega\in \Omega}\big]$ from both sides and taking maximum over $x_1,\ldots, x_{k-1},x_{k+1},\ldots, x_n\in \Delta(\Omega)$, we get
	$${\widetilde{\varphi}}_k^\omega(x)+ D_v\Big(|x-x'|\Big) \geq {\widetilde{\varphi}}_k^\omega(x').$$
	Combining this inequality with the one where the roles of $x$ and $x'$ are exchanged, we obtain 
	\begin{equation}\label{eq_upper_phi_tilde}
	\Big|{\widetilde{\varphi}}_k^\omega(x)-{\widetilde{\varphi}}_k^\omega(x')\Big|\leq  D_v\Big(|x-x'|\Big).
	\end{equation}
	From the definition of $\widetilde{\widetilde{\varphi}}_k^\omega$,
	$$\widetilde{\widetilde{\varphi}}_k^\omega(x)-\widetilde{\widetilde{\varphi}}_k^\omega(x')=\Big( {\widetilde{\varphi}}_k^\omega(x)-{\widetilde{\varphi}}_k^\omega(x')\Big)-\sum_{\omega'\in\Omega} x(\omega')\Big(\widetilde{\varphi}_k^{\omega'}(x)-\widetilde{\varphi}_k^{\omega'}(x')\Big)-\sum_{\omega'\in\Omega} \Big(x(\omega')-x'(\omega')\Big)\widetilde{\varphi}_k^{\omega'}(x').$$
	Estimating the first two terms on the right-hand side using~\eqref{eq_upper_phi_tilde} and bounding the absolute value of the last term by $|x-x'|\cdot \max_{\omega',x}|\widetilde{\varphi}_k^{\omega'}(x)|$, we see that $\widetilde{\widetilde{\varphi}}_k^\omega$ satisfies~\eqref{eq_modulus_continuity_phi}.
	
	\medskip
	Sequentially replacing $\varphi_k^\omega$  in $(\varphi_i^\omega)_{i\in N,\omega\in \Omega}$ by $\widetilde{\widetilde{\varphi}}_k^\omega$ for all receivers $k\in N$, we obtain a collection of functions that satisfies \eqref{eq_upper_lower_phi} and \eqref{eq_modulus_continuity_phi}, while the value of the objective in~\eqref{eq_duality_duplicated_in_appendix} remains the same or improves. Thus restricting the minimization in~\eqref{eq_duality_duplicated_in_appendix} to families that satisfy \eqref{eq_upper_lower_phi} and \eqref{eq_modulus_continuity_phi} does not affect the optimal value.
	\end{proof}

\section{Proofs for one-state persuasion}\label{app_one_state}

\begin{proof}[Proof of Lemma~\ref{lm_projection_admissible}]
Let us demonstrate the necessity of the condition~\eqref{eq_admissible_marginals_omega0}. In other words, we need to show that if $(\lambda^\omega)_{\omega\in\Omega}$ are feasible in one-agent problem, then
	\begin{align*}
		\int_{\Delta(\Omega)} \frac{x(\omega)}{x(\omega_0)}\dd\lambda^{\omega_0}(x)   &\leq \frac{p(\omega)}{p(\omega_0)},\qquad \omega\in\Omega\setminus\{\omega_0\}. 
	\end{align*} 
By Observation~\ref{obs_one_receiver_feasibility}, there exists $\lambda$ such that the Radon-Nikodym derivative
\begin{equation}\label{eq_one_state_admissible}
\frac{\dd\lambda^{\omega}}{\dd\lambda}(x)=\frac{x({\omega})}{p(\omega)}
\end{equation}
for all $\omega$. Let $\varepsilon>0$ be the small parameter. Hence, $\frac{\dd\lambda^{\omega_0}}{\dd\lambda}(x)\leq\frac{\max\{x({\omega_0}),\varepsilon\}}{p({\omega_0})}$ or, equivalently,
\begin{equation}\label{eq_upper_bound_one_state_espilon}
\frac{1}{\max\{x({\omega_0}),\varepsilon\}}\dd\lambda^{\omega_0}(x)\leq \frac{1}{p({\omega_0})}\dd\lambda(x).
\end{equation}
By \eqref{eq_one_state_admissible}, 
$\frac{x(\omega)}{p(\omega)}\dd\lambda(x)=\dd\lambda^\omega(x)$. Applying this identity to~\eqref{eq_upper_bound_one_state_espilon}, we get
$$\frac{x(\omega)}{\max\{x({\omega_0}),\varepsilon\}}\dd\lambda^{\omega_0}(x)\leq \frac{p(\omega)}{p({\omega_0})}\dd\lambda^\omega(x).$$
Integrating this inequality over $\Delta(\Omega)$, we obtain
$$\int_{\Delta(\Omega)} \frac{x(\omega)}{\max\{x({\omega_0}),\varepsilon\}}\dd\lambda^{\omega_0}(x)\leq \frac{p(\omega)}{p({\omega_0})}.$$
Letting $\varepsilon$ go to zero gives~\eqref{eq_admissible_marginals_omega0}. 

Now we check the sufficiency. For given $\lambda^{\omega_0}$ satisfying~\eqref{eq_admissible_marginals_omega0} we need to construct $\lambda^\omega$ with $\omega\in \Omega\setminus\{\omega_0\}$ such that the collection $(\lambda^{\omega})_{\omega\in\Omega}$ is feasible.
The idea is to use formula~\eqref{eq_one_state_admissible} to define $\lambda$ first. Set 
$$\dd\widetilde{\lambda}(x)=\frac{p(\omega_0)}{x(\omega_0)}\dd\lambda^{\omega_0}(x).$$
The measure $\widetilde{\lambda}$ may not be a probability measure, and its mean may not equal $p$. To make a probability measure with the desired mean out of $\widetilde{\lambda}$, we  define $\lambda$ by
\begin{equation}\label{eq_unconditional_from_omega_0}
\lambda=\widetilde{\lambda}+\sum_{\omega\in\Omega\setminus\{\omega_0\}}\left(p(\omega)-\int_{\Delta(\Omega)} x(\omega)\dd\widetilde{\lambda}(x)\right)\cdot \delta_\omega,
\end{equation}
where $\delta_\omega$ denotes the point mass at $\omega$. By \eqref{eq_admissible_marginals_omega0}, the coefficients in~\eqref{eq_unconditional_from_omega_0} are non-negative and, hence, $\lambda$ is a non-negative measure. By the construction $\int x(\omega)\dd\lambda=p(\omega)$ and so the mean of $\lambda$ is $p$. Summing up these equalities, we see that $\lambda$ is a probability measure. For $\omega\ne \omega_0$, define $\lambda^\omega$ by~\eqref{eq_one_state_admissible}; $\lambda^\omega$ is a probability measure since the mean of $\lambda$ is $p$. To show that $(\lambda^\omega)_{ \omega\in\Omega}$ are feasible, it remains to check that the condition~\eqref{eq_one_state_admissible} is satisfied at $\omega_0$. Since $x(\omega_0)\dd\delta_\omega=0$ for $\omega\ne\omega_0$, we get $x(\omega_0)\dd\lambda(x)=x(\omega_0)\dd\widetilde{\lambda}(x)$. By the definition of $\widetilde{\lambda}$,
$$\dd\lambda^{\omega_0}(x)=\frac{x(\omega_0)}{p(\omega_0)}\dd\widetilde{\lambda}(x)=\frac{x(\omega_0)}{p(\omega_0)}\dd\lambda(x).$$
From this identity, we conclude that $\frac{\dd\lambda^{\omega_0}}{\dd\lambda}(x)=\frac{x(\omega_0)}{p(\omega_0)}$, which completes the proof of the Lemma~\ref{lm_projection_admissible}.
\end{proof}
\medskip

\begin{proof}[Proof of Lemma~\ref{lm_one_state_counting_extreme}:]
Combining Corollary~\ref{cor_persuasion_as_transport} and Lemma~\ref{lm_projection_admissible}, we obtain that the value of a one-state persuasion problem $B$ can be represented as follows:
\begin{equation*}%\label{eq_value_as_finite_dim_optimization}
	\val[B]=p(\omega_0)\cdot \sup_{{\footnotesize \begin{array}{c} \pi\in \Delta\big(\Delta(\Omega)\times\ldots\times \Delta(\Omega)\big)\\ \mbox{such that the marginals satisfy \eqref{eq_admissible_marginals_omega0}}
	\end{array}}	
	}\int_{\Delta(\Omega)\times\ldots\times \Delta(\Omega)} v(x_1,\ldots,x_n)\,\dd\pi(x_1,\ldots,x_n).
\end{equation*}
Our goal is to check that, in this formula, it is enough to maximize over atomic $\pi$ with a certain bound on the number of atoms in the support.

Let $\F_n^{\omega_0}(p)$ be the set of all distributions $\pi$ satisfying the inequalities \eqref{eq_admissible_marginals_omega0}. Since these inequalities are linear, $\F_n^{\omega_0}(p)$ is a convex set. The objective linearly depends on $\pi\in \F_n^{\omega_0}(p)$ and, hence, by the Bauer principle, it is enough to restrict the maximization to the extreme points of the set $\F_n^{\omega_0}(p)$.

To describe the extreme points of $\F_n^{\omega_0}(p)$, we discuss how the set of extreme points changes when we intersect a convex set with half-spaces; see \cite*[Theorem~2.1]{winkler1988extreme} for details.
Let $X$ be a convex set with extreme points $X^*\subset X$ and any $H$ be a half-space. The set of extreme points of $X\cap H$ consists of the union of $X^*\cap H$ and extreme points of 
$(\partial H\cap X)^*$ that are convex combinations $\alpha x+(1-\alpha)x'$ of $x,x'\in X^*$ satisfying the condition $\alpha x+(1-\alpha)x'\in \partial H$, where $\partial H$ denotes the boundary of $H$.  
Similarly, for the intersection $X\cap \bigcap_{q=1}^Q H_q$ with the family of half-spaces, any extreme point $x^*$ is given by a convex combination of at most $k+1$ extreme points of $X$, where $k$ is the number of $H_q$ such that $x^*\in\partial H_q$.   

Applying this general statement to our case, we put $X=\Delta(\Delta(\Omega)\times\ldots\times\Delta(\Omega))$ and define the half-spaces $H_{i,\omega}$, $i\in N$, $\omega\in\Omega\setminus\{\omega_0\}$ as the set of signed measures satisfying inequalities~\eqref{eq_admissible_marginals_omega0} with given $i$ and $\omega$. Since the extreme points of $X$ are the point masses, we conclude that
any extreme point of $\F_n^{\omega_0}(p)$
is an atomic measure with at most $|N|(|\Omega|-1)+1$ atoms. Hence, one can restrict the maximization to such measures.

This statement can be strengthened. Let $n_i(\pi)$ be the number of ``active'' inequalities for the receiver $i$, i.e., $n_i(\pi)$ is the number of those inequalities from~\eqref{eq_admissible_marginals_omega0} with the given $i$ that hold as equalities; denote $n(\pi)=\sum_{i\in N}n_i(\pi)$. Hence, the extreme $\pi$ have at most $n(\pi)+1$ points in the support. We conclude that
	\begin{equation*}
	\val[B]=p(\omega_0)\cdot \sup_{{\footnotesize \begin{array}{c} \pi\in \Delta\big(\Delta(\Omega)\times\ldots\times \Delta(\Omega)\big)\\ \mbox{such that the marginals satisfy \eqref{eq_admissible_marginals_omega0} and}\\
			\big|\supp[\pi]\big|\leq n(\pi)+1
			\end{array}}	
	}\int_{\Delta(\Omega)\times\ldots\times \Delta(\Omega)} v(x_1,\ldots,x_n)\,\dd\pi(x_1,\ldots,x_n).
	\end{equation*}

Let us now discuss how many signals we need to generate an extreme $\pi\in \F_n^{\omega_0}(p)$. Using the construction from the proof of Lemma~\ref{lm_projection_admissible}, we obtain feasible one-agent marginals $(\lambda_i^\omega)_{\omega\in\Omega}$, $i\in N$, such that $\lambda_i^{\omega_0}=\pi_i$. Note that the union of supports of $\lambda_i^\omega$ over $\omega\in\Omega$ may be larger than the support of $\pi_i$ since we add $|\Omega|-1-n_i(\pi)$ point masses in \eqref{eq_unconditional_from_omega_0}. Let $(\mu^\omega)_{\omega\in\Omega}$ be a feasible family of distributions with $\mu^{\omega_0}=\pi$ and marginals $(\lambda_i^\omega)_{i\in N, \omega\in\Omega}$; for example, one can take $\mu^\omega$ to be the product of its marginals for $\omega\ne\omega_0$. By the revelation principle, any feasible family $(\mu^\omega)_{\omega\in\Omega}$ can be induced by an information structure with $|S_i|=\big|\supp[\mu_i]\big|$, where $\mu_i=\sum_{\omega_\in\Omega}p(\omega)\mu_i^\omega$; see the proof of Theorem~\ref{th_feasibility}. Thus there exists an information structure inducing $\pi$ that uses 
$$\big|\supp [\pi_i]\big|+|\Omega|-1-n_i(\pi)\leq |N|(|\Omega|-1)$$
signals per receiver.
\end{proof}

\section{Proofs for Supermodular Persuasion}\label{app_supermodular_proof}
\ed{
\begin{proof}[Proof of Lemma~\ref{lm_supermodular}]
Consider an agent-symmetric supermodular persuasion problem $B$ with sender's utility
\[
v^\omega(x_1,\ldots, x_n) = G^\omega\big(a^\omega(x_1),\ldots, a^\omega(x_n)\big).
\]
Here, each $G^\omega$ is supermodular in its $n$ arguments, symmetric with respect to permutations of these arguments, and $a^\omega:\Delta(\Omega)\to\R$ are arbitrary functions. Recall that  $v^\omega$ is also assumed to be upper semicontinuous throughout the paper to ensure the existence of an optimal solution. 

By~\eqref{eq_value_as_sup_over_feasible}, the value of $B$ can be written as maximization over feasible conditional distributions 
\[
\val[B] = \max_{\text{feasible }(\mu^\omega)_{\omega\in \Omega}}\ \sum_{\omega \in \Omega} p(\omega)\cdot \int v^\omega(x_1,\ldots,x_n)\, d\mu^\omega(x_1,\ldots,x_n).
\]
Since $B$ is agent-symmetric, permuting the coordinates $x_1,\ldots, x_n$ does not change $v^\omega$. For any feasible collection $(\mu^\omega)_{\omega\in \Omega}$ and any permutation $\sigma$ of $N$, applying $\sigma$ to coordinates yields another feasible collection $(\mu^\omega \circ \sigma)_{\omega\in \Omega}$ with the same value. Averaging over all permutations, we obtain a feasible collection $(\nu^\omega)_{\omega\in \Omega}$ where each receiver has identical marginal distributions and that achieves the same value. Therefore, we can focus on feasible conditional distributions $(\nu^\omega)_{\omega\in \Omega}$ for which all receivers share the same marginal distributions $(\lambda^\omega)_{\omega\in \Omega}$. Combining this insight with Corollary~\ref{cor_persuasion_as_transport}, we obtain the following representation
\[
\val[B]=\max_{\text{1-agent feasible }(\lambda^\omega)_{\omega\in \Omega}}\ \sum_{\omega \in \Omega} p(\omega)\cdot MK_{v^\omega}\big[\lambda^\omega,\ldots,\lambda^\omega\big].
\]
The transportation problems on the right-hand side admit a closed-form solution. Let $\tau^\omega$ be the distribution of $a^\omega(x)$ when $x$ is drawn from $\lambda^\omega$. Hence,
\begin{equation}\label{eq_diagonal_transport}
MK_{v^\omega}[\lambda^\omega,\ldots,\lambda^\omega] = MK_{G^\omega}[\tau^\omega,\ldots,\tau^\omega].
\end{equation}
From \citep[Theorem~1]{burchard2006rearrangement} for supermodular objectives, the optimal distribution in the right-hand side of \eqref{eq_diagonal_transport} is assortative matching, which, due to identical marginals, is supported on the diagonal. Hence,
\[
MK_{G^\omega}[\tau^\omega,\ldots,\tau^\omega] = \int_\R G^\omega(z,\ldots,z)\, d\tau^\omega(z).
\]
Since $x \mapsto a^\omega(x)$ transforms belief distributions $\lambda^\omega$ into $\tau^\omega$,
\[
MK_{v^\omega}[\lambda^\omega,\ldots,\lambda^\omega] = \int_{\Delta(\Omega)} G^\omega\big(a^\omega(x),\ldots,a^\omega(x)\big) \, d\lambda^\omega(x).
\]
Given that each $\lambda^\omega$ represents a feasible conditional distribution, we have $\lambda^\omega(x) = \frac{x(\omega)}{p(\omega)}\lambda(x)$, where $\lambda$ is an unconditional distribution on $\Delta(\Omega)$. Thus,
\[
MK_{v^\omega}[\lambda^\omega,\ldots,\lambda^\omega] = \int_{\Delta(\Omega)} G^\omega\big(a^\omega(x),\ldots,a^\omega(x)\big)\,\frac{x(\omega)}{p(\omega)}\, d\lambda(x).
\]
Combining these results, we obtain:
\begin{align*}
\val[B] = \max_{\text{1-agent feasible }\lambda}\ \int_{\Delta(\Omega)}\left[\sum_{\omega\in\Omega} x(\omega)\cdot G^\omega\big(a^\omega(x),\ldots,a^\omega(x)\big)\right]\, d\lambda(x)= \val[\overline{B}],
\end{align*}
where $\overline{B}$ is a single-receiver persuasion problem with a state-independent utility function
\[
\overline{v}(x)=\sum_{\omega\in\Omega}x(\omega)\, G^\omega\big(a^\omega(x),\ldots, a^\omega(x)\big).
\]
The value $\val[\overline{B}]$ is equal to the concavification of $\overline{v}$ evaluated at the prior $p$, completing the proof.
\end{proof}
}

\section{Solving the Dual Problem}\label{app_dual}

\ed{Proposition \ref{prop_dual value representation} establishes the dual to a multi-receiver persuasion problem. In this section, we present a heuristic approach to constructing explicit solutions to this dual. We illustrate the approach by finding an easy-to-check sufficient condition for the optimality of a full-information/partial-information policy. 
}  

A \emph{full-information/partial-information policy} is an information structure revealing 
the state to one receiver and partially informing the other.
Such information structures can be implemented in the model of sequential persuasion by~\cite*{khantadze2021persuasion}, where information is revealed to agents sequentially so that each next agent observes all predecessors' signals. Hence, a sufficient condition for the optimality of full-information/partial-information policy is also sufficient for the optimality of sequential persuasion.
\smallskip

 The heuristic that we rely on is that, in problems where it is optimal to fully inform one receiver, the
 solution to the dual problem is determined by the values of the utility function on the boundary of $[0,1]^2$. Relying on this intuition, we construct candidates for optimal $\alpha_i$ and $V^\omega$. The requirement that this candidate solution is indeed a solution gives a sufficient condition for the optimality of
 full-information/partial-information policy. We first illustrate the ideas for a family of sender's objectives depending on the difference of induced beliefs and formulate the condition of
 optimality for a full-information/no-information policy.
 We then extend the result to general objectives and general full-information/partial-information policies.

\smallskip 

Consider a persuasion problem with two receivers, binary state, symmetric prior $p={1}/{2}$, and $$v^h(x_1,x_2)=v^l(x_1,x_2)=h(|x_1-x_2|)$$ with some non-decreasing continuous function $h$. We aim to find a condition on $h$ so that revealing the state to one of the agents and giving no information to the other one is optimal. 

This full-information/no-information policy guarantees a payoff of $h(1/2)$ in both states. By~\eqref{eq:dual_value_2_receivers_simplified}, this payoff is optimal if  and only if there exists a function $\alpha$ such that 
\begin{eqnarray}\label{eq_alpha_conditions_h1}
h(|x_1-x_2|)&\leq& h(1/2)+(1-x_1)\alpha(x_1)+ (1-x_2)\alpha(x_2),\\
		h(|x_1-x_2|)&\leq& h(1/2)-x_1\cdot\alpha(x_1) -x_2\cdot\alpha(x_2)	
  \label{eq_alpha_conditions_h2}
\end{eqnarray}
To gain an intuition about the existence of $\alpha$, we plug $x_2=1$ into the first inequality and $x_2=0$ into the second and get 
\begin{equation}\label{eq_upper_lower_alpha_h}
    \frac{h(1-x_1)-h(1/2)}{1-x_1}\leq \alpha(x_1)\leq \frac{h(1/2)-h(x_1)}{x_1}.
\end{equation}
Hence, for $\alpha$ to exist, the left-hand side of~\eqref{eq_upper_lower_alpha_h} has to be upper-bounded by the right-hand side. Equivalently,
\begin{equation}
\label{eq_h_bar}
x_1 h(1-x_1)+(1-x_1)h(x_1)\leq h(1/2)
\end{equation}
is necessary for the optimality of a full-information/no-information policy. This condition becomes intuitive if we rewrite it as
\begin{equation}
\label{eq_h_bar_cav}
\cav[\overline{h}](1/2)\leq \overline{h}(1/2),\qquad \mbox{where}\quad 
\overline{h}(x_1)=x_1 h(1-x_1)+(1-x_1)h(x_1).
\end{equation}
Indeed, it means that in the single-receiver persuasion problem obtained from $B$ by revealing the state to the second agent, revealing no information to the first is optimal.

Assuming that~\eqref{eq_h_bar_cav} is satisfied, we find a sufficient condition for optimality of full-information/no-information policy. By~\eqref{eq_h_bar_cav}, we know that there are functions $\alpha$ satisfying~\eqref{eq_upper_lower_alpha_h}
and we select a particular one:
\begin{equation}\label{eq_alpha_h}
\alpha(x_1)=\left\{\begin{array}{cc}
   \frac{h(1-x_1)-h(1/2)}{1-x_1}, & x_1\leq 1/2\\
\frac{h(1/2)-h(x_1)}{x_1}, & x_1\geq 1/2
\end{array}\right..
\end{equation}
The idea is that we want $\alpha$ to be given the most demanding constraint, e.g., for small $x_1$, the upper bound is unlikely to be active thanks to $x_1$ in the denominator.
Plugging in this $\alpha$ into~(\ref{eq_alpha_conditions_h1}-\ref{eq_alpha_conditions_h2}), we obtain the following result.

\begin{figure}[h!]
	\begin{center}
	
\begin{tikzpicture}
\definecolor{darkgreen}{RGB}{0,100,0}
\begin{axis}[
    axis lines = middle,
    xmin=0, xmax=1,
    ymin=-1, ymax=1,
    width=7cm,
    height=7cm,
    xlabel={$x_1$},
    xtick={0.5},
    ytick={},
    yticklabels={},
    xticklabels={$\phantom{aa}1/2$},
    x tick label style={above},
    grid style={dashed},
%    legend pos=north east,
%    legend cell align={left},
%    legend style={font=\tiny, at={(1.25,1.25)}, anchor=north west},
]
]

%\draw[thin, densely dashed] (0.333333,0)--(0.333333,-0.333333);
 %\draw[thin, densely dashed] (0,-0.333333)--(0.333333,-0.333333);
 \draw[fill=black!50] (axis cs:0.5,-0.0) circle (3pt);
\node at (0.32,0.9) {\small$\frac{h(1/2)-h(x_1)}{x_1}$};
\node at (0.6,-0.8) {\small$\frac{h(1-x_1))-h(1/2)}{1-x_1}$};
\node at (0.1,0.4) {\small$\alpha(x_1)$};
%\node at (0.95,-0.2) {$x_1$};

%\frac{v^\ell(x_1,1)-V_p^\ell}{1-x_1}\leq \alpha(x_1)\leq \frac{V^h-v^h(x_1,0)}{x_1}.

% f(x)
\addplot[
    domain=0:0.9,
    samples=100,
    thick,
    gray,
] {-0.03+((1-x)*(1-x)*(1-x)-1/8)/(1-x)};
%\addlegendentry{$f(x)$}

% g(x)
\addplot[
    domain=0.1:1,
    samples=100,
    thick,
    gray,
] {0.03+(1/8-x*x*x)/x};
%\addlegendentry{$g(x)$}

% alpha(x)
\addplot[
    domain=0:0.5,
    samples=100,
    ultra thick,
    darkgreen,
] {((1-x)*(1-x)*(1-x)-1/8)/(1-x)};
\addplot[
    domain=0.5:1,
    samples=100,
    ultra thick,
    darkgreen,
] {(1/8-x*x*x)/x};
%\addlegendentry{$\alpha(x)$}
 \end{axis}

\end{tikzpicture}
  
	\end{center}
	\caption{The construction of $\alpha$ from~\eqref{eq_alpha_h} for $v(x_1,x_2)=|x_1-x_2|^3$; see Example~\ref{ex_x_to_third}. \label{fig_cubic}}
\end{figure}

\begin{proposition}\label{prop_sufficient_full_info_no_info_h}
Consider a persuasion problem with two receivers, binary state, prior $p=1/2$, and
 utility function $v^h(x_1,x_2)=v^l(x_1,x_2)=h(|x_1-x_2|)$. If $h$ is non-decreasing and satisfies the following conditions
\begin{eqnarray}\label{eq_h_conditions1}
h(x_2-x_1)\leq& h(1-x_1)+\frac{1-x_2}{x_2}(h(1/2)-h(x_2)), & x_1 \leq 1/2\leq x_2\\
h(x_2-x_1)\leq& h(1/2)-\sum_{i\in\{1,2\}}\frac{x_i}{1-x_i}\left(h(1-x_i)-h(1/2)\right), & x_1\leq x_2\leq 1/2	
\label{eq_h_conditions2}
\end{eqnarray}
then the full-information/no-information policy is optimal.
\end{proposition}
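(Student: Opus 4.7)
The plan is to verify the sufficient condition by constructing an explicit solution to the dual problem \eqref{eq:dual_value_2_receivers_simplified} that attains the value $h(1/2)$. Since the full-information/no-information policy achieves the payoff $h(1/2)$ in every state, exhibiting a feasible dual point with objective equal to $h(1/2)$ matches this primal payoff and therefore certifies optimality via weak duality. Concretely, I will set $V^\ell = V^h = h(1/2)$ and $\alpha_1 = \alpha_2 = \alpha$, where $\alpha$ is the function defined in \eqref{eq_alpha_h}. Note that $\alpha$ is continuous on $[0,1]$ (both branches vanish at $x = 1/2$), so the candidate is admissible in \eqref{eq:dual_value_2_receivers_simplified}.

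It then remains to verify the two pointwise inequalities in \eqref{eq:dual_value_2_receivers_simplified} for every $(x_1,x_2) \in [0,1]^2$. By symmetry we may assume $x_1 \le x_2$, so $|x_1-x_2| = x_2 - x_1$. I also exploit the fact that both constraints are linked by the involution $x_i \mapsto 1-x_i$: under this substitution the objective $h(|x_1-x_2|)$ is invariant, while the expression $(1-x_i)\alpha(x_i)$ is mapped to $-x_i\alpha(x_i)$ (since the two branches of $\alpha$ in \eqref{eq_alpha_h} swap roles). Hence one of the two constraints at $(x_1,x_2)$ is equivalent to the other at $(1-x_2,1-x_1)$, and I only need to verify, in each case, the tighter of the two.

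The case analysis proceeds on the positions of $x_1, x_2$ relative to $1/2$. When $x_1 \le 1/2 \le x_2$, I substitute the relevant branches of $\alpha$ into the state-$\ell$ constraint; the identities $(1-x_1)\alpha(x_1)=h(1-x_1)-h(1/2)$ and $(1-x_2)\alpha(x_2)=\tfrac{1-x_2}{x_2}(h(1/2)-h(x_2))$ reduce it exactly to assumption \eqref{eq_h_conditions1}, and the involution argument shows that the state-$h$ constraint at $(x_1,x_2)$ is the state-$\ell$ constraint at $(1-x_2,1-x_1)$, which is of the same form. When $x_1 \le x_2 \le 1/2$, the first-branch formulas give $(1-x_i)\alpha(x_i) \ge 0$, so the state-$\ell$ constraint is automatic from $h(x_2-x_1)\le h(1/2)$ (monotonicity of $h$); the state-$h$ constraint reduces precisely to \eqref{eq_h_conditions2}. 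The remaining case $1/2 \le x_1 \le x_2$ is handled by applying the involution to the previous one.

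The only nontrivial step is the bookkeeping in the case analysis, in particular making sure that each of the conditions \eqref{eq_h_conditions1} and \eqref{eq_h_conditions2} is invoked in the region where it is actually tight, and that the remaining inequalities are implied by either monotonicity of $h$ or the duality $x_i \mapsto 1-x_i$. Once the pointwise constraints are established, \eqref{eq:dual_value_2_receivers_simplified} yields $\val[B] \le \tfrac{1}{2}V^\ell + \tfrac{1}{2}V^h = h(1/2)$, matching the payoff of the full-information/no-information policy, which is therefore optimal.
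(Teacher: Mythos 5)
Your proof is correct and follows essentially the same route as the paper: you certify optimality by plugging $V^\ell=V^h=h(1/2)$ and the $\alpha$ from~\eqref{eq_alpha_h} into the dual~\eqref{eq:dual_value_2_receivers_simplified}, then verify the two pointwise constraints region by region, matching conditions~\eqref{eq_h_conditions1}--\eqref{eq_h_conditions2} where they are tight. The only difference is stylistic: you make the involution $x_i\mapsto 1-x_i$ (which uses $\alpha(1-x)=-\alpha(x)$ and swaps the two dual constraints) explicit, whereas the paper invokes ``symmetry'' more tersely to discard the $[1/2,1]^2$ region.
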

\begin{proof}
A payoff of $h(1/2)$ is guaranteed by revealing the state to the second receiver and keeping the first one uninformed. To show the optimality of this full-information/no-information policy, we need to demonstrate that the value of the persuasion problem is at most $h(1/2)$. For this purpose, it is enough to demonstrate that~(\ref{eq_alpha_conditions_h1}-\ref{eq_alpha_conditions_h2}) have a solution $\alpha$. We will show that, under the assumptions of the proposition, $\alpha$ given by~\eqref{eq_alpha_h} solves~(\ref{eq_alpha_conditions_h1}-\ref{eq_alpha_conditions_h2}). 

We need to check each of the two inequalities (\ref{eq_alpha_conditions_h1}-\ref{eq_alpha_conditions_h2}) in each of the four regions 
determined by whether $x_i$ in $[0,1/2]$ or $[1/2,1]$, $i=1,2$. Thanks to the symmetry of the problem, all these eight cases reduce to three. In  $[0,1/2]\times [1/2,1]\cup [1/2,1]\times[0,1/2]$, both inequalities (\ref{eq_alpha_conditions_h1}-\ref{eq_alpha_conditions_h2})  are equivalent to~\eqref{eq_h_conditions1} and thus hold. In $[1/2,1]^2$, inequalities (\ref{eq_alpha_conditions_h1}-\ref{eq_alpha_conditions_h2})  follow from those in $[0,1/2]^2$. Hence, it remains to verify (\ref{eq_alpha_conditions_h1}-\ref{eq_alpha_conditions_h2}) in  $[0,1/2]^2$. There, \eqref{eq_alpha_conditions_h2} holds since it reduces to~\eqref{eq_h_conditions2}. Finally, \eqref{eq_alpha_conditions_h1} in $[0,1/2]^2$ reduces to
$$h(x_2-x_1)\leq h(1-x_1)+h(1-x_2)-h(1/2),\qquad x_1\leq x_2\leq 1/2,$$
which holds trivially by the monotonicity of $h$ since $x_2-x_1$ and $1/2$ are smaller than $1-x_1$ and $1-x_2$. We conclude that (\ref{eq_alpha_conditions_h1}-\ref{eq_alpha_conditions_h2}) hold in $[0,1]^2$, thus full-information/no-information policy is optimal.
\end{proof}

Note that the conditions of Proposition~\ref{prop_sufficient_full_info_no_info_h} are formulated in terms of primitives of the model and so can be checked by an elementary (but sometimes tedious) computation.

Proposition~\ref{prop_sufficient_full_info_no_info_h} provides a useful tool for determining the optimality of the full-information/no-information policy. We develop below an alternative tool that might prove the optimality of the full-information/{partial}-information policies. The tool below is applicable to every prior $p\in (0,1)$.
For simplicity, we will keep the assumption that the problem is agent-symmetric, which allows us to focus on one function $\alpha$ instead of a pair, but this assumption could also be easily dropped. 

Consider a two-receiver persuasion problem, a binary state with prior $p\in (0,1)$, and a continuous state-dependent utility $v^\omega(x_1,x_2)=v^\omega(x_2,x_1)$. 

Suppose the sender uses a full-information/partial-information policy revealing the state to the second agent. Deciding what information to reveal to the first one reduces to solving a single-receiver persuasion problem with the sender's utility function
$$\overline{v}(x_1)=x_1\cdot v^\ell(x_1,1)+(1-x_1)v^h(x_1,0).$$
Thus, full-information/partial-information policy is optimal if and only the value of the persuasion problem does not exceed $\cav[\overline{v}](p)$. 
By~\eqref{eq:dual_value_2_receivers_simplified}, the value does not exceed $\cav[\overline{v}](p)$ if and only if there exists a function $\alpha$ such that
\begin{equation}\label{eq_alpha_conditions}
\begin{array}{ccl}
v^\ell(x_1,x_2)&\leq& V_p^\ell+(1-x_1)\alpha(x_1)+ (1-x_2)\alpha(x_2),\\
		v^h(x_1,x_2)&\leq& V_p^h-x_1\cdot\alpha(x_1) -x_2\cdot\alpha(x_2)
\end{array},			
\end{equation}
where $V_p^\ell$ and $V_p^h$ are such  that
\begin{equation}\label{eq_Vl_Vh}
x_1\cdot V_p^\ell +(1-x_1)V_p^h \quad \mbox{is the tangent line to the graph of}\quad \cav[\overline{v}]\quad \mbox{at} \ x_1=p.
\end{equation}
Note that if $\cav[\overline{v}]$ is differentiable at $x_1=p$, then $V_p^\ell=\cav[\overline{v}](p)+(1-p)\frac{d}{dx_1}\cav[\overline{v}](p)$ and $V_p^h=\cav[\overline{v}](p)-p\frac{d}{dx_1}\cav[\overline{v}](p)$.

To find a sufficient condition for the optimality of the full-information/partial-information policy, we select a particular function $\alpha$ using a  heuristic similar to the one used for the full-information/no-information policy. Plugging $x_2=1$ into the first inequality of~\eqref{eq_alpha_conditions} and $x_2=0$ to the second, we see that \begin{equation}\label{eq_upper_lower_alpha}
    \frac{v^\ell(x_1,1)-V_p^\ell}{1-x_1}\leq \alpha(x_1)\leq \frac{V_p^h-v^h(x_1,0)}{x_1}.
\end{equation}
The condition~\eqref{eq_Vl_Vh} guarantees that the graph of the function on the left-hand side in~\eqref{eq_upper_lower_alpha} lies below that of the right-hand side. Moreover, the two graphs touch each other at $x_1$ where the linear function $x_1\cdot V_p^\ell +(1-x_1)V_p^h$ touches by $\overline{v}$.
Let $b_p$ and $c_p$ be the leftmost and the rightmost such points, respectively.

We define $\alpha_p$ as follows:
\begin{equation}\label{eq_alpha_p}
\alpha_p(x_1)=\left\{\begin{array}{cc}
   \frac{v^\ell(x_1,1)-V_p^\ell}{1-x_1}, & x_1\leq b_p\\
   v^\ell(x_1,1)-V_p^\ell+V_p^h-v^h(x_1,0), & x_1\in[b_p,c_p]\\
\frac{V_p^h-v^h(x_1,0)}{x_1}, & x_1\geq c_p
\end{array}\right..
\end{equation}
In other words, for small values of $x_1$, the function $\alpha_p$ is given by the lower bound in \eqref{eq_upper_lower_alpha}, for high values $x_1$ it is given by the upper bound, and, between the points $b_p$ and $c_p$ (at these points the two bounds coincide), $\alpha_p$ equals the convex combination of the two bounds with weights $(1-x_1)$ and $x_1$. The intuition is again that  $\alpha_p$ must equal the most demanding of the bounds. 

\begin{figure}[h!]
	\begin{center}
	
\begin{tikzpicture}
\definecolor{darkgreen}{RGB}{0,100,0}
\begin{axis}[
    axis lines = middle,
    xmin=0, xmax=1,
    ymin=-1.1, ymax=1,
    width=7cm,
    height=7cm,
    xlabel={$x_1$},
    xtick={0,0.3333333,1},
    ytick={-0.3333333},
    yticklabels={},
    xticklabels={$0$, $\phantom{aa}p$, $1$},
    %x tick label style={above},
%    grid=major,
    grid style={dashed},
%    legend pos=north east,
%    legend cell align={left},
%    legend style={font=\tiny, at={(1.25,1.25)}, anchor=north west},
]
]

\draw[thin, densely dashed] (0.333333,0)--(0.333333,-0.333333);
 %\draw[thin, densely dashed] (0,-0.333333)--(0.333333,-0.333333);
 \draw[fill=black!50] (axis cs:0.3333333,-0.333333) circle (3pt);
\node at (0.32,0.86) {\small$\frac{V_p^h-v^h(x_1,0)}{x_1}$};
\node at (0.33,-0.9) {\small$\frac{v^\ell(x_1,1)-V_p^\ell}{1-x_1}$};
\node at (0.9,-0.89) {\small$\alpha(x_1)$};
%\node at (0.95,-0.2) {$x_1$};

%\frac{v^\ell(x_1,1)-V_p^\ell}{1-x_1}\leq \alpha(x_1)\leq \frac{V_p^h-v^h(x_1,0)}{x_1}.

% f(x)
\addplot[
    domain=0:0.9,
    samples=100,
    thick,
    gray,
] {-0.03+(1/9-x)/(1-x)};
%\addlegendentry{$f(x)$}

% g(x)
\addplot[
    domain=0.1:1,
    samples=100,
    thick,
    gray,
] {0.03+(2/9-x)/x};
%\addlegendentry{$g(x)$}

% alpha(x)
\addplot[
    domain=0:0.33333333,
    samples=100,
    ultra thick,
    darkgreen,
] {(1/9-x)/(1-x)};
\addplot[
    domain=0.3333333:1,
    samples=100,
    ultra thick,
    darkgreen,
] {(2/9-x)/x};
%\addlegendentry{$\alpha(x)$}
 \end{axis}

\end{tikzpicture}
\hskip 2cm
\begin{tikzpicture}
\definecolor{darkgreen}{RGB}{0,100,0}
\begin{axis}[
    axis lines = middle,
    xmin=0, xmax=1,
    ymin=-0.25, ymax=0.2,
    width=7cm,
    height=7cm,
    xlabel={$x_1$},
    xtick={0.2113248,0.5,0.788675,1},
    yticklabels={},
    xticklabels={$b_p$, $p$, $\phantom{aa}c_p$,$1$},
    %x tick label style={above},
%    grid=major,
    grid style={dashed},
%    legend pos=north east,
%    legend cell align={left},
%    legend style={font=\tiny, at={(1.25,1.25)}, anchor=north west},
]
]

\draw[thin, densely dashed] (0.2113248,0)--(0.2113248,0.083333333);
\draw[thin, densely dashed] (0.788675,0)--(0.788675,-0.083333333);
 \draw[fill=black!50] (axis cs:0.2113248,0.083333333) circle (3pt);
  \draw[fill=black!50] (axis cs:0.788675,-0.083333333) circle (3pt);
\node at (0.35,0.15) {\small$\frac{V_p^h-v^h(x_1,0)}{x_1}$};
\node at (0.7,-0.2) {\small$\frac{v^\ell(x_1,1)-V_p^\ell}{1-x_1}$};
\node at (0.1,0.55) {\small$\alpha(x_1)$};

%\frac{v^\ell(x_1,1)-V^\ell}{1-x_1}\leq \alpha(x_1)\leq 
%\frac{V_p^h-v^h(x_1,0)}{x_1}.

% f(x)
\addplot[
    domain=0:0.5,
    samples=100,
    thick,
    gray,
] {-0.01+(0.5*(1-x)*(0.5-x)-0.0481125)/((1-x))};
\addplot[
    domain=0.5:0.95,
    samples=100,
    thick,
    gray,
] {-0.01+(0.5*(1-x)*(x-0.5)-0.0481125)/((1-x))};
%\addlegendentry{$f(x)$}

% g(x)
\addplot[
    domain=0.5:1,
    samples=100,
    thick,
    gray,
] {0.01+(0.0481125-0.5*x*(x-0.5))/(x)};
\addplot[
    domain=0:0.5,
    samples=100,
    thick,
    gray,
] {0.01+(0.0481125-0.5*x*(0.5-x))/(x)};
%\addlegendentry{$g(x)$}

% alpha(x)
\addplot[
    domain=0:0.2113248,
    samples=100,
    ultra thick,
    darkgreen,
] {(0.5*(1-x)*(0.5-x)-0.0481125)/((1-x))};
\addplot[
    domain=0.788675:1,
    samples=100,
    ultra thick,
    darkgreen,
] {(0.0481125-0.5*x*(x-0.5))/(x)};
\addplot[
    domain=0.2113248:0.5,
    samples=100,
    ultra thick,
    darkgreen,
] {0.5*(1-2*x)*(0.5-x)};
\addplot[
    domain=0.5:0.788675,
    samples=100,
    ultra thick,
    darkgreen,
] {0.5*(1-2*x)*(x-0.5)};

 \end{axis}

\end{tikzpicture}

	\end{center}
	\caption{ Construction of $\alpha_p$ from~\eqref{eq_alpha_p}.
 \textbf{Left:} $v(x_1,x_2)=|x_1-x_2|$ with $p=1/3$; full-information/no-information is optimal and thus $b_p=c_p=p$ (Example~\ref{retailer}) \textbf{Right:} $v(x_1,x_2)=|x_1-x_2|\cdot \left|x_1-{1}/{2}\right|\cdot \left|x_2-{1}/{2}\right|$ with $p=1/2$; full-information/partial information with beliefs $b_p=(3-\sqrt{3})/6$ and $c_p=1-b_p$ of the partially-informed receiver  is optimal (Example~\ref{ex_partial}).\label{fig_alpha_concavification}}
\end{figure}

\begin{proposition}\label{prop_sufficient_full_info_partial_info_optimality}
If $\alpha_p$, $V_p^\ell$, and $V_p^h$ defined by~\eqref{eq_alpha_p} and~\eqref{eq_Vl_Vh} satisfy the inequalities \eqref{eq_alpha_conditions}, then the value of the persuasion problem equals 
$\cav[\overline{v}](p)$
and a full-information/partial-information policy revealing the state to receiver $2$ and inducing the beliefs $b_p$ or $c_p$ of the first receiver is optimal.
\end{proposition}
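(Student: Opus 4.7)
My plan is to sandwich $\val[B]$ between a primal lower bound—obtained by exhibiting a feasible information structure that achieves $\cav[\overline{v}](p)$—and a dual upper bound—obtained by plugging the hypothesized $(\alpha_p, V_p^\ell, V_p^h)$ into the simplified dual~\eqref{eq:dual_value_2_receivers_simplified}. For the lower bound, I would start from the observation already made in the paragraph preceding the proposition: any full-information/partial-information policy that fully reveals $\omega$ to receiver~2 reduces the sender's problem to a single-receiver persuasion problem for receiver~1 with state-independent objective $\overline{v}$, whose maximum value is $\cav[\overline{v}](p)$. To show that this value is attained by splitting receiver~1's marginal between the beliefs $b_p$ and $c_p$, I would use that the tangent line $L(x)=xV_p^\ell+(1-x)V_p^h$ is a concave-majorant line, hence lies weakly above $\cav[\overline{v}]$ on $[0,1]$ and touches it exactly on the interval $[b_p,c_p]$. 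A brief argument establishes that $\overline{v}(b_p)=L(b_p)$ and $\overline{v}(c_p)=L(c_p)$: any Bayes-plausible splitting of $b_p$ that attains $\cav[\overline{v}](b_p)=L(b_p)$ must be supported on $\{s:\overline{v}(s)=L(s)\}\subseteq[b_p,c_p]$, and the only measure supported on $[b_p,c_p]$ with mean $b_p$ is the Dirac $\delta_{b_p}$. Writing $p=\gamma b_p+(1-\gamma)c_p$ for some $\gamma\in[0,1]$, the feasible splitting $\gamma\delta_{b_p}+(1-\gamma)\delta_{c_p}$ for receiver~1's belief then yields expected payoff $\gamma L(b_p)+(1-\gamma)L(c_p)=L(p)=\cav[\overline{v}](p)$.

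For the matching upper bound, I would invoke the agent-symmetry of the problem to set $\alpha_1=\alpha_2=\alpha_p$ in~\eqref{eq:dual_value_2_receivers_simplified}. The hypothesis of the proposition is precisely that the triple $(V_p^\ell,V_p^h,\alpha_p)$ satisfies the two pointwise inequalities~\eqref{eq_alpha_conditions} on $[0,1]^2$, so it remains only to verify that $\alpha_p$ is continuous at the break-points $b_p$ and $c_p$ so as to be an admissible dual variable. Continuity at $b_p$ follows from the relation $\overline{v}(b_p)=L(b_p)$ obtained above, which rearranges to $\frac{v^\ell(b_p,1)-V_p^\ell}{1-b_p}=\frac{V_p^h-v^h(b_p,0)}{b_p}$; calling this common value $A$, the middle-piece expression $\bigl(v^\ell(b_p,1)-V_p^\ell\bigr)+\bigl(V_p^h-v^h(b_p,0)\bigr)$ evaluates to $A(1-b_p)+A b_p=A$, matching the first piece, and the same reasoning works at $c_p$. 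With $\alpha_p$ continuous, the dual then delivers $\val[B]\leq pV_p^\ell+(1-p)V_p^h=L(p)=\cav[\overline{v}](p)$, matching the lower bound.

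I expect the only delicate step to be the concave-geometry identification $\overline{v}(b_p)=L(b_p)$ and $\overline{v}(c_p)=L(c_p)$—the very fact that makes the extreme touching points accessible via Bayes-plausible splittings and simultaneously ensures the continuity of the piecewise-defined $\alpha_p$. Once this single fact is in hand, both the feasibility of the proposed optimal policy and the verification that $\alpha_p$ is a legitimate dual variable are straightforward, and the primal and dual bounds match at $L(p)=\cav[\overline{v}](p)$, completing the proof.
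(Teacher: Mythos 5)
Your proof is correct and follows the same primal/dual sandwiching strategy as the paper, but it fills in two details the paper's proof elides: (i) the explicit computation that the splitting $\gamma\delta_{b_p}+(1-\gamma)\delta_{c_p}$ for receiver~1, combined with full revelation to receiver~2, achieves $\cav[\overline v](p)$, and (ii) the verification that $\alpha_p$ is continuous at the break-points $b_p$ and $c_p$, which is genuinely needed for $\alpha_p$ to be admissible in the dual~\eqref{eq:dual_value_2_receivers_simplified}. These additions are worthwhile and make the argument self-contained.

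One small remark on presentation: your ``brief argument'' establishing $\overline v(b_p)=L(b_p)$ is circular as written---it assumes $\cav[\overline v](b_p)=L(b_p)$ and that $\{s:\overline v(s)=L(s)\}\subseteq[b_p,c_p]$, which are exactly the facts you are trying to derive from the definition of $b_p,c_p$. No argument is needed here: $b_p$ and $c_p$ are \emph{defined} as the leftmost and rightmost points where the tangent line $L$ touches $\overline v$, so $\overline v(b_p)=L(b_p)$ and $\overline v(c_p)=L(c_p)$ hold by definition. What does deserve a word (and what your splitting construction tacitly uses) is that $b_p\le p\le c_p$, so $p$ is indeed a convex combination of the two; this follows because the face of $\cav[\overline v]$ containing $p$ has endpoints $a,b$ at which $\cav[\overline v]$ must agree with $\overline v$, whence $b_p\le a\le p\le b\le c_p$.
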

\begin{proof}
The sender guarantees a payoff of $\cav[\overline{v}](p)$ by the information structure from the statement of the lemma. It remains to show that the sender cannot improve upon this utility level. Substituting $\alpha_p$, $V_p^\ell$, and $V_p^h$ into the dual representation for the value~\eqref{eq:dual_value_2_receivers_simplified}, we see that the value is bounded from above by $ p\cdot V_p^\ell+(1-p)V_p^h=\cav[\overline{v}](p)$. Thus the full-information/partial-information policy is optimal.
\end{proof}
Checking the conditions of Proposition~\ref{prop_sufficient_full_info_partial_info_optimality} for given sender's utility $v$
reduces to verifying inequalities between explicitly given functions on the unit square. 
{In addition to Example~\ref{retailer} from Section~\ref{sect_polarization}, we provide an example where the solution is a full-information/partial information policy with non-trivial partial information.}
\begin{example}[discord with informative signals]\label{ex_partial}
 Consider a persuasion problem with 
 $$v(x_1,x_2)=|x_1-x_2|\cdot \left|x_1-\frac{1}{2}\right|\cdot \left|x_2-\frac{1}{2}\right|$$  
 and prior $p=1/2$. Here  
the sender is incentivized to push induced beliefs further away from each other and also from $1/2$, i.e., the sender aims to induce discord while keeping both agents' signals informative. In particular, revealing no information to one of the agents is definitely suboptimal. Indeed, the persuasion problem 
 satisfies the conditions of Proposition~\ref{prop_sufficient_full_info_partial_info_optimality} with $b_p=(3-\sqrt{3})/6= 0.211...$ and $c_p=1-b_p$; see Figure~\ref{fig_alpha_concavification}:right and Mathematica code in Appendix~\ref{app_code2}. Thus full-information/partial information policy inducing beliefs $b_p$ and $c_p$ of the less informed receiver is optimal.
 \end{example}

\section{Code for Example~\ref{ex_x_to_third}}\label{app_code}
The following Mathematica code finds the maximal $\beta$ such that the inequalities from Proposition~\ref{prop_sufficient_full_info_no_info_h} are satisfied for $h(|t|)=|t|^\beta$. The algorithm implements a binary search with respect to $\beta$ and outputs $\beta=2.25751...$
%\vskip 0.5 cm
%\begin{scriptsize}
\begin{verbatim}
ClearAll; 
h[t_, beta_] := Abs[t]^beta; (*define function h*)
(*define the difference between the LHS and the RHS side of the inequalities to be checked*)
ineq1[x_, y_, beta_] := h[x - y, beta] - (h[1 - x, beta] 
  + (1 - y)/y *(h[0.5, beta] - h[y, beta]));
ineq2[x_, y_, beta_] := h[x - y, beta] - (h[0.5, beta] 
  -  x/(1 - x) *(h[1 - x, beta] - h[0.5, beta]) -  y/(1 - y) *(h[1 - y, beta] - h[0.5, beta]));
(*define the precision and the range for beta*)
betaPrecision = 10^-6;
betaMin = 0;
betaMax = 100;
(*binary search for the maximal beta*)
While[betaMax - betaMin > betaPrecision,
  beta = (betaMin + betaMax)/2;
  If[
   NMaximize[{ineq1[x, y, beta], 0 <= x <= 0.5, 0.5 <= y <= 1}, {x, y}][[1]] <= $MachineEpsilon 
        && (*if both inequalities hold within machine precision for all x and y*) 
   NMaximize[{ineq2[x, y, beta], 0 <= x <= 0.5, 0 <= y <= 0.5}, {x, y}][[1]] <= $MachineEpsilon,
   betaMin = beta, (*then increase betaMin*)
   betaMax = beta  (*else decrease betaMax*)
   ] 
];
beta (*print beta*)
\end{verbatim}
%\end{scriptsize}

\section{Code for Example~\ref{ex_partial}}\label{app_code2}
Consider any utility function $v(x_1,x_2)$ such that $v(x_1,x_2)=v(x_2,x_1)$ and $v(x_1,x_2)=v(1-x_1,1-x_2)$. The following Mathematica code checks that $v$ satisfies the conditions of Proposition~\ref{prop_sufficient_full_info_partial_info_optimality}.

\begin{verbatim}
ClearAll;
v[x_, y_] := Max[0, Abs[x - y] Abs[x - 1/2] Abs[y - 1/2]]; (*define the utility function*)
vBar[x_] :=  x*v[x, 1] + (1 - x)*v[x, 0]; (*define the auxiliary function \bar{v}*)
(*by the symmetry of v, the global maximum of \bar{v} equals 
the maximum of its concavification at the prior 1/2*)
maxPoint = Maximize[{vBar[x], 0 <= x <= 1}, x]; 
V = maxPoint[[1]]  (*V is the maximum*)
(*b and c are optimal posteriors of the partially informed receiver*)
b = Min[x /. maxPoint[[2]], 1 - x /. maxPoint[[2]]]; 
c = 1 - b;
(*define function alpha*)
alpha[x_] :=  Piecewise[{
    {(v[x, 1] - V)/(1 - x),    0 <= x < b},
    {v[x, 1] - v[x, 0],        b <= x < c},
    {(V - v[x, 0])/x,          c <= x <= 1}
}];
(*by symmetry, it is enough to check only one inequality from the proposition; 
define ineq as the difference between the LHS and the RHS*)
ineq[x_, y_] := v[x, y] - (V + (1 - x)*alpha[x] + (1 - y)*alpha[y]);
(*if the difference is non-positive within precision, 
the conditions of the proposition are satisfied*)
If[NMaximize[{ineq[x, y], 0 <= x <= 1, 0 <= y <= 1}, {x, y}][[1]] <= $MachineEpsilon, 
  Print["Full-info/partial-info is optimal"], 
  Print["Full-info/partial-info may not be optimal"]
];
\end{verbatim}

\end{document}